\documentclass[a4paper,twocolumn,superscriptaddress,11pt,accepted=2018-02-26]{quantumarticle}
\pdfoutput=1
\usepackage[utf8]{inputenc}
\usepackage[english]{babel}
\usepackage[T1]{fontenc}
\usepackage{amsmath}
\usepackage{hyperref}

\usepackage{tikz}
\usepackage{lipsum}

\usepackage{microtype}
\usepackage{bbm}
\usepackage{graphicx}
\usepackage{amsthm}
\usepackage{amssymb}
\usepackage{mathrsfs}
\usepackage{braket}
\usepackage{array}
\usepackage{centernot}

\usetikzlibrary{arrows}

\usepackage[numbers,sort&compress]{natbib}

\newcommand{\ketbra}[2]{\ket{#1}\!  \!  \bra{#2}}
\newcommand{\cCQQ}{{\rm CQQ}}
\newcommand{\cCCQ}{{\rm CCQ}}
\newcommand{\cC}{{\rm C}}
\newcommand{\qQ}{{\rm Q}}

\newcommand{\cM}{\mathcal{M}}
\newcommand{\cP}{\mathcal{P}}

\newcommand{\ic}{{\rm IC}}
\newcommand{\icpost}{{\rm IC_R}}

\newcommand{\nuparrow}{{\centernot\uparrow}}

\newtheorem{thm}{Theorem}
\newtheorem{prop}[thm]{Proposition}
\newtheorem{lemma}[thm]{Lemma}

\newtheorem{conj}[thm]{Conjecture}
\newtheorem*{thm*}{Theorem}
\newtheorem*{prop*}{Proposition}
\newtheorem*{lemma*}{Lemma}
\newtheorem*{cor*}{Corollary}
\newtheorem*{conj*}{Conjecture}
\theoremstyle{definition}
\newtheorem*{idea*}{Idea}
\newtheorem{remark}[thm]{Remark}
\newtheorem*{remark*}{Remark}
\newtheorem{example}{Example}

\newtheorem{prob}[thm]{Open Problem}

\DeclareMathOperator{\tr}{tr}

\DeclareGraphicsRule{.tif}{png}{.png}{`convert #1 `dirname #1`/`basename #1 .tif`.png}

\begin{document}

\title{Non-Shannon inequalities in the entropy vector approach to
  causal structures}
\date{\today}
\author{Mirjam Weilenmann}
\email{mirjam.weilenmann@york.ac.uk}
\affiliation{Department of Mathematics, University of York,
  Heslington, York, YO10 5DD, UK.}

\author{Roger Colbeck}
\email{roger.colbeck@york.ac.uk}
\affiliation{Department of Mathematics, University of York,
  Heslington, York, YO10 5DD, UK.}

\begin{abstract}
  A causal structure is a relationship between observed variables that
  in general restricts the possible correlations between them. This
  relationship can be mediated by unobserved systems, modelled by
  random variables in the classical case or joint quantum systems in
  the quantum case. One way to differentiate between the correlations
  realisable by two different causal structures is to use entropy
  vectors, i.e., vectors whose components correspond to the entropies
  of each subset of the observed variables. To date, the starting
  point for deriving entropic constraints within causal structures are
  the so-called Shannon inequalities (positivity of entropy, conditional entropy and conditional mutual information). 
  In the
  present work we investigate what happens when non-Shannon entropic
  inequalities are included as well.  We show that in general these
  lead to tighter outer approximations of the set of realisable
  entropy vectors and hence enable a sharper distinction of different
  causal structures.  Since non-Shannon inequalities can only be
  applied amongst classical variables, it might be expected that their
  use enables an entropic distinction between classical and quantum
  causal structures.  However, this remains an open question.
  We also introduce techniques for deriving inner approximations to
  the allowed sets of entropy vectors for a given causal structure.
  These are useful for proving tightness of outer approximations or
  for finding interesting regions of entropy space.  We illustrate
  these techniques in several scenarios, including the triangle causal
  structure.
\end{abstract}

\maketitle

\section{Introduction} 
A common challenge in science is to make predictions based on
incomplete information.  Full details of the mechanism by which
correlations between two or more variables come about is often not
apparent and there may be several competing causal
explanations. Experimentation with interventions is one way to decide
between the candidate explanations~\cite{Pearl2009}.  However, in many
situations such intervention is difficult (or unethical), for instance
if certain involved systems are outside our control.

Considering a particular causal structure generally imposes
restrictions on the set of correlations that can be produced.  A
well-known example of such a constraint is a Bell
inequality~\cite{Bell1964}. That such relations can be violated using
measurements on quantum states motivates the consideration of more
general \emph{quantum} causal structures.  Correlations that can be
generated in such structures but not in their classical analogue are
the basis for several important cryptographic tasks~\cite{Ekert1991},
in particular for device-independent protocols for key
distribution~\cite{Mayers1998,Barrett2005b,Acin2006,Vazirani2014} or
the generation of private
randomness~\cite{Colbeck2009,Pironio2010,Colbeck2011,Miller2014}.
In a cryptographic scenario, an adversary is usually able to exert
influence at particular points in the protocol, which can be
conveniently encoded using a causal structure.
Characterising the set of possible classical, quantum and post-quantum
correlations within a specific causal structure provides a basis to
understand further tasks and possible quantum and post-quantum
advantages, which were initially studied in specific
cases~\cite{Clauser1969,Braunstein1988,GHZ,Cerf1997, Collins2002}.

For a general causal structure with unobserved variables, deciding
whether a given set of correlations can be generated is
computationally difficult and only feasible for small
examples~\cite{Garcia2005, Lee2015}.  One way to get around this, is
to use entropy to simplify the characterisation of the corresponding
sets of correlations~\cite{Chaves2012, Chaves2013, Fritz2013,
  Chaves2014, Chaves2014b, Henson2014, Steudel2015, Chaves2015,
  Chaves2016, Pienaar2016, Kela2017, Miklin2017}.  Rather than looking
at the distributions themselves, we consider \emph{entropy vectors}
whose components are the joint entropies of each subset of the
observed variables.
This often\footnote{For classical causal structures, the set is always
  convex, but for quantum causal structures it is not known whether
  this is always the case.} has the advantage that the set of
entropies realisable in a given causal structure is convex, in
contrast to the set of compatible distributions.  In addition, the
causal constraints can be represented by linear relations between
entropies instead of polynomial constraints.  It is also significant
that entropic constraints on possible correlations in a causal
structure are independent of the dimension of the involved random
variables. Hence, the method enables the derivation of constraints
that are valid for arbitrarily large alphabet sizes of all involved
observed and unobserved systems. These properties make entropy vectors
a convenient means to distinguish different causal structures in many
situations.

In this paper we report the use of non-Shannon inequalities for
distinguishing causal structures.  After a short outline of the
entropy vector approach and after introducing the necessary notation
in Section~\ref{sec:notation}, we go on to show in
Section~\ref{sec:non_shannon} that non-Shannon inequalities play a
central role for the distinction of causal structures.  This is
illustrated with the triangle causal structure
(Section~\ref{sec:triangle}), one of the simplest causal structures in
which there is a separation between classical and quantum at the level
of correlations.  For this example, we present numerous new entropic
constraints, which involve several infinite families of valid
inequalities, that together form the tightest entropic
characterisation of the classical triangle causal structure known to
date. This also leads us to disprove a claim that previously known
entropic approximations to this causal structure were
tight~\cite{Chaves2014, Chaves2015}.  Whether our new inequalities are
sufficient to separate classical and quantum versions of causal
structures is left as an open problem.

In Section~\ref{sec:int}, we analyse a number of other causal
structures, taking into account non-Shannon inequalities for their
entropic characterisation. These inequalities are relevant for
distinguishing different classical causal structures as well as for
settling the question of whether there is a classical-quantum
separation in the entropy vector approach.

We further analyse the role of non-Shannon inequalities for the
entropic characterisation of the causal structure relevant in the
context of information causality~\cite{Pawlowski2009} in
Section~\ref{sec:post_selected_nonShan}, where the combination of
non-Shannon inequalities with post-selection allows us to derive
numerous new entropy inequalities.

In Section~\ref{sec:inner}, we provide the first inner approximations
to the entropy cones of causal structures.  These are useful for
certifying that particular entropy vectors are realisable in a causal
structure as well as for showing tightness of an entropic outer
approximation in some cases (see Section~\ref{sec:int} for
examples). In cases where the outer approximation is not tight (or not
known to be tight), an inner approximation that shares some extremal
rays with the outer approximation allows the identification of parts
of the boundary of the true entropy cone as well as regions where
identifying the cone's boundary requires further analysis.

For comparison with the classical case, we also briefly consider
non-Shannon inequalities in the context of quantum and hybrid causal
structures in Section~\ref{sec:quantumtriangle}, which is illustrated
with the example of the triangle causal structure, before concluding
in Section~\ref{sec:conclusion}.

\section{Entropic cones and the entropy vector approach to causal structures}\label{sec:notation}
In this section, we briefly outline the entropy vector approach and
introduce the required notation. An elaborate introduction to the
topic can for instance be found in the review~\cite{our_review}.

\subsection{Entropic cones}\label{sec:classicalcone}
For a set of $n$ jointly distributed random variables
$\Omega= \left\{X_1,\ X_2,\ \ldots, X_n \right\}$ taking values in the
alphabet
$\mathcal{X}_\Omega= \mathcal{X}_1 \times \mathcal{X}_2 \times \cdots
\times \mathcal{X}_n$
we denote the set of all possible joint probability distributions as
$\mathcal{P}_n$. For a set of variables with joint distribution
$P_{\Omega} \in \mathcal{P}_n$ its \textit{Shannon
  entropy}~\cite{Shannon1948} is
$$ H(\Omega):= - \sum_{x \in \mathcal{X}_\Omega} P_{\Omega}(x) \log_2{\left(P_{\Omega}(x)\right)}.$$
The Shannon entropy of $\Omega$ and of all its subsets can be
expressed in an \textit{entropy vector} in $\mathbb{R}^{2^{n}-1}$,
\begin{equation*}
\begin{split}
{\bf H}(P):=(H(X_1),H(X_2),\ldots,H(X_n), H(X_1X_2),\\
H(X_1X_3),\ldots,H(X_1 X_2 \ldots X_n)).
\end{split}
\end{equation*}
The closure of the set of all possible entropy vectors,
$\Gamma^{*}_n$, is a convex cone, denoted as
$\overline{\Gamma^{*}_n}$~\cite{Zhang1997}.\footnote{The closure is
  taken because there isn't in general a good reason to put an upper
  bound on the alphabet sizes and it is known that
  $\overline{\Gamma^{*}_n}\neq\Gamma^{*}_n$ for
  $n\geq3$~\cite{Zhang1997}.}  While for $n\leq 3$, the \emph{entropy
  cone} $\overline{\Gamma^{*}_n}$ is
polyhedral~\footnote{\label{ft:3var} In fact $\overline{\Gamma^{*}_3}$
  equals the corresponding Shannon cone, $\Gamma_3$, introduced
  below.}~\cite{Zhang1998}, an infinite number of linear inequalities
are required to characterise $\overline{\Gamma^{*}_n}$ for
$n \geq 4$~\cite{Matus2007}. Hence, considering approximations to
$\overline{\Gamma^{*}_n}$ is common practice.

\subsubsection{Approximations to $\overline{\Gamma^{*}_n}$}
Before specifying approximations to $\overline{\Gamma^{*}_n}$, we define a few quantities, that are relevant in the following. The \textit{conditional entropy} of two disjoint subsets $X_{S},\ X_{T} \subseteq \Omega$ is defined as
$$ H(X_S | X_T):= H(X_S \cup X_T)-H(X_{T})$$ and for three mutually disjoint subsets $X_{S},\ X_{T},\ X_U \subseteq \Omega$ the \textit{conditional mutual information} of $X_S$ and $X_T$ conditioned on $X_U$ is
$$ I(X_S \!   : \!   X_T   |    X_U):= H(X_S   |   X_U)-H(X_S    |   X_{T} \cup X_U).
$$
Note that the entropy of the empty set is $H(\emptyset)=0$, so that
$H(X_S)=H(X_S | \emptyset)$, for example.
Two other entropic quantities we will make use of in this article are the \textit{interaction information}~\cite{McGill1954} of three mutually disjoint subsets $X_{S},\ X_{T},\ X_U \subseteq \Omega$,
$$ I(X_S \!   : \!   X_T \!   : \!   X_U):= I(X_S \!   : \!   X_T )-I(X_S \!   : \!   X_T | X_U),$$
and the \textit{Ingleton quantity} of four mutually disjoint subsets $X_S,\ X_T,\ X_U,\ X_V \subseteq \Omega$,
\begin{multline} \label{eq:ingleton}
I_{\rm ING}(X_S,X_T ; \!   X_U,X_V):= I(X_S \!  : \!   X_T|X_U)\\\!+\! I(X_S \!   : \!   X_T|X_V) + I(X_U \!   : \!   X_V) - I(X_S \!   : \!   X_T).
\end{multline}

For any entropy vector of a joint distribution of the random variables $\Omega$ the following \textit{Shannon inequalities} hold:
\begin{itemize}
\item \vspace{-0.1cm} For any $X_S \subseteq \Omega$, $H(X_S) \geq 0.$
\item \vspace{-0.1cm} For any disjoint $X_S, X_T \! \subseteq \! \Omega$, $H(X_S  |  X_T) \! \geq  \! 0.$
\item \vspace{-0.1cm} For any disjoint $X_S,\ X_T,\ X_U \subseteq \Omega$, ${I(X_S \!   : \!   X_T | X_U)} \geq 0.$
\end{itemize}
They are known to constrain a convex polyhedral cone, the \textit{Shannon cone}, $\Gamma_n$~\cite{Yeung1997}.
Because the Shannon inequalities hold for any entropy vector we have
$\overline{\Gamma^{*}_n} \subseteq \Gamma_n$.

The first entropy inequality that is not of Shannon type was found in~\cite{Zhang1997} and is presented in the following.
\begin{prop}[Zhang \& Yeung] \label{prop:zhangyeung} For any four discrete random variables $X_1$, $X_2$, $X_3$ and $X_4$ the following inequality holds:
\begin{align*}
I(X_1\!   : \!  X_2|X_3) +I(X_1\!   : \!  X_2|X_4)  + I(X_3\!   : \!  X_4) \\
 - I(X_1\!   : \!  X_2) + I(X_1 \!   : \!   X_3|X_2) +I(X_2 \!   : \!   X_3|X_1)\\
  + I(X_1 \!   : \!   X_2|X_3) \geq 0.
\end{align*}
In the following the lhs of this inequality is abbreviated as
$\Diamond_{X_1 X_2 X_3 X_4}$.
\end{prop}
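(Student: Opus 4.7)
The plan is to use the auxiliary random variable (``copy'') technique that Zhang and Yeung introduced for precisely this purpose: one embeds the four-variable joint distribution into a larger probability space, derives the inequality there using only Shannon (polymatroidal) inequalities, and then projects the result back onto the original four variables using marginal and conditional-independence identities.

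Concretely, I would introduce auxiliary random variables $X_1'$ and $X_2'$ distributed jointly with $X_1, X_2, X_3, X_4$ according to
$$P(x_1 x_2 x_1' x_2' x_3 x_4) = P(x_3 x_4)\, P(x_1 x_2 \mid x_3 x_4)\, P(x_1' x_2' \mid x_3 x_4).$$
By construction $(X_1', X_2', X_3, X_4)$ has the same joint distribution as $(X_1, X_2, X_3, X_4)$, so every entropy built only from $X_1', X_2', X_3, X_4$ equals its unprimed counterpart. The coupling also enforces $(X_1, X_2) \perp (X_1', X_2') \mid (X_3, X_4)$, i.e. $I(X_1 X_2 : X_1' X_2' \mid X_3 X_4) = 0$. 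In this six-variable extension, entropy vectors lie in $\overline{\Gamma^*_6} \subseteq \Gamma_6$, so every nonnegative combination of conditional mutual informations is valid. The task is then to select Shannon inequalities such as $I(X_1 : X_2 \mid X_3 X_1' X_2') \geq 0$, $I(X_3 : X_4 \mid X_1' X_2') \geq 0$, $I(X_1 : X_3 \mid X_2) \geq 0$, $I(X_2 : X_3 \mid X_1) \geq 0$, together with submodularity applied to pairs like $(X_3 X_1 X_2, X_4 X_1' X_2')$, whose sum, after expanding conditional mutual informations by the chain rule and substituting the marginal and conditional-independence identities above, collapses to an expression involving only $X_1, X_2, X_3, X_4$ and coincides term by term with $\Diamond_{X_1 X_2 X_3 X_4}$.

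The hard part is guessing the exact linear combination whose projection yields precisely the coefficients in the statement: since $\overline{\Gamma^*_n}$ is not polyhedral for $n \geq 4$, there is no purely algorithmic route, and the appearance of $I(X_1 : X_2 \mid X_3)$ with multiplicity two reflects a specific way in which cross-terms between the original and copied pairs contribute. Deciding which variables to copy (here $X_1, X_2$ conditioned on $X_3, X_4$, rather than, say, $X_3, X_4$ conditioned on $X_1, X_2$) is the creative input; once the right combination is written down, verification reduces to a routine algebraic substitution using the chain rule, the marginal identity $H(X_1' X_2' X_3 X_4) = H(X_1 X_2 X_3 X_4)$, and the vanishing of $I(X_1 X_2 : X_1' X_2' \mid X_3 X_4)$.
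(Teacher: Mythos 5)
The paper does not actually prove this proposition; it imports it verbatim from Zhang and Yeung, so the only question is whether your sketch would stand on its own. You have correctly identified the technique that does work here --- adjoin conditionally independent ``copies'' to the four variables, apply Shannon inequalities in the enlarged space, and project back using the equality of marginals and the built-in conditional independence --- and your construction of the coupling is sound (the primed pair really does have the same joint distribution with $(X_3,X_4)$ and is conditionally independent of the unprimed pair given $(X_3,X_4)$). But the proposal stops exactly where the proof begins: you never exhibit the nonnegative combination of Shannon inequalities on the extended space whose projection equals $\Diamond_{X_1X_2X_3X_4}$, and you explicitly defer this as ``the hard part.'' Since the entire content of a non-Shannon inequality is that such a combination exists for the \emph{extended} system but not for the original one, a proof that does not write the combination down and verify the cancellation has not established anything.

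There is also a substantive doubt about your choice of copy. The terms of $\Diamond_{X_1X_2X_3X_4}$ beyond the Ingleton quantity are $I(X_1\!:\!X_3|X_2)+I(X_2\!:\!X_3|X_1)+I(X_1\!:\!X_2|X_3)$, which single out $X_3$ and never mention $X_4$; this asymmetry is the fingerprint of the standard derivation, which adjoins a \emph{single} auxiliary variable $X_3'$ with $(X_1,X_2,X_3')\sim(X_1,X_2,X_3)$ and $I(X_3'\!:\!X_3X_4|X_1X_2)=0$, i.e.\ a copy of $X_3$ over the Ingleton pair $(X_1,X_2)$, not a copy of $(X_1,X_2)$ over $(X_3,X_4)$. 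Different copy choices in this framework are known to yield genuinely different inequalities (this is how the later families of non-Shannon inequalities are generated), so you cannot assume your coupling produces this particular one without carrying out the calculation. To repair the argument, either switch to the $X_3$-copy and write out the explicit sum of conditional mutual informations of $(X_1,X_2,X_3,X_4,X_3')$ that collapses to $\Diamond_{X_1X_2X_3X_4}$, or demonstrate concretely that your six-variable coupling does the same.
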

The first account of infinite families of inequalities was given in~\cite{Matus2007}.
\begin{prop}[Mat\'{u}\v{s}] \label{prop:matusfam}
Let $X_1$, $X_2$, $X_3$ and $X_4$ be random variables and let $s \in \mathbb{N}$. Then the following inequalities hold:
\begin{align}
&s    \left[    I(X_1\!   : \!  X_2|X_3)    +    I(X_1\!   : \!
  X_2|X_4)   +   I(X_3\!   : \!  X_4)\right.\nonumber\\
&\left.\!-\!I(X_1\!   : \!  X_2)    \right]\!+\!I(X_1\!   : \!  X_3|X_2)\!+\!
        \frac{s(s\!+\!1)}{2}    \left[   I(X_2\!   : \!
        X_3|X_1)\!\right.\nonumber\\ &\left.   + I(X_1\!   : \!  X_2|X_3)   \right]   \geq   0,\label{eq:matus1}
\end{align}
\begin{align}
s&\left[    I(X_1\!   : \!  X_2|X_3)    +   I(X_1\!   : \!  X_2|X_4)    +  I(X_3\!   : \!  X_4) \right. \nonumber\\ 
&\left.  -   I(X_1\!   : \!  X_2)   \right]  +s \left[ I(X_2\!   : \!  X_3|X_1)+I(X_1\!   : \!  X_2|X_3) \right] \nonumber\\  
&+ I(X_1\!   : \!  X_3|X_2) + \frac{s(s-1)}{2} \left[ I(X_2\!   : \!  X_4|X_1) \right. \nonumber\\ 
&\left. + I(X_1\!   : \!  X_2|X_4) \right] \geq 0.\label{eq:matus2}
\end{align}
\end{prop}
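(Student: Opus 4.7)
The plan is to prove both families by iterating the \emph{copy lemma}: given jointly distributed variables $(A,B,C)$, one can adjoin an auxiliary variable $A'$ with the same conditional distribution as $A$ given $B$ and satisfying $A' \indep (A,C)\mid B$. All Shannon-type inequalities then apply to the enlarged tuple, and by construction the marginal entropies of $(A',B)$ coincide with those of $(A,B)$, so Shannon inequalities in the enlarged system translate into non-Shannon inequalities about the original variables. This is precisely the mechanism by which Zhang and Yeung's inequality (Proposition~\ref{prop:zhangyeung}) was derived, and indeed that inequality coincides with the $s=1$ case of both (\ref{eq:matus1}) and (\ref{eq:matus2}).

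I would proceed by induction on $s$, with $s=1$ handled by Proposition~\ref{prop:zhangyeung}. For the inductive step, given $X_1, X_2, X_3, X_4$, I would use the copy lemma to introduce a fresh copy $\tilde{X}_3$ of $X_3$ conditionally on $\{X_1,X_2\}$ which is independent of all previously introduced auxiliaries. Applying the $s$-level inequality to $(X_1, X_2, \tilde{X}_3, X_4)$ yields the same bound (by marginal preservation), and combining it with a few well-chosen Shannon inequalities on the enlarged system, for instance nonnegativity of $I(X_2\!:\!\tilde{X}_3\mid X_1)$ and $I(X_1\!:\!X_2\mid \tilde{X}_3)$, should assemble into the level-$(s+1)$ inequality. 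The coefficient $s(s+1)/2$ in (\ref{eq:matus1}) is the telescoping sum $1+2+\cdots+s$, consistent with precisely this kind of iterative construction; the extra $s(s-1)/2$ block in (\ref{eq:matus2}) similarly appears if at each inductive step one alternates introducing a copy of $X_3$ and a copy of $X_4$.

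The main obstacle is the bookkeeping. Every application of the copy lemma produces many cross terms between original and primed variables, and one has to track the full set of conditional independences $\tilde{X}_3 \indep \tilde{X}_3^{(j)} \mid (X_1, X_2)$ together with the marginal identifications, then verify that after simplification exactly the coefficients $s$, $s(s\pm 1)/2$ and $1$ on the right-hand sides of (\ref{eq:matus1}) and (\ref{eq:matus2}) emerge. A cleaner alternative is to phrase the argument as \emph{adhesivity} of the entropy function along the common marginal $(X_1, X_2)$: this reduces each inductive step to certifying that a particular $(2{+}2s)$-variable linear combination is a Shannon inequality, which can be done by an explicit nonnegative decomposition into submodularities and positivities. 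Either route leaves the combinatorial core, tracking the polynomial-in-$s$ coefficients through the iteration, as the real work.
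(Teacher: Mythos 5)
First, a point of comparison: the paper does not prove Proposition~\ref{prop:matusfam} at all --- it is quoted from Mat\'{u}\v{s}~\cite{Matus2007} as an external result, so there is no in-paper argument to measure your proposal against. That said, your overall strategy (iterated copy lemma, equivalently self-adhesivity of the entropy function over the marginal on $\{X_1,X_2\}$, with Proposition~\ref{prop:zhangyeung} as the base case $s=1$) is indeed the mechanism by which these families were originally derived, so the plan points in the right direction.

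However, the inductive step as you describe it contains a genuine error, beyond the bookkeeping you defer. You claim that applying the level-$s$ inequality to $(X_1,X_2,\tilde X_3,X_4)$ ``yields the same bound (by marginal preservation).'' It does not: the copy lemma only guarantees that $(X_1,X_2,\tilde X_3)$ is distributed as $(X_1,X_2,X_3)$, while $\tilde X_3$ is conditionally independent of $(X_3,X_4)$ given $(X_1,X_2)$. The term $I(\tilde X_3\!:\!X_4)$ appearing in the level-$s$ inequality for the copied quadruple is therefore \emph{not} equal to $I(X_3\!:\!X_4)$, and relating the two --- via the conditional independence and additional submodularities on the enlarged system --- is exactly where the nontrivial cross terms arise and where the coefficients $s$, $s(s+1)/2$ and $s(s-1)/2$ are actually generated. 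Until you exhibit the explicit nonnegative combination of Shannon inequalities and copy-independences that assembles the level-$(s+1)$ inequality from the level-$s$ one --- the step you yourself set aside as ``the real work'' --- what you have is a plausible proof strategy consistent with the literature, not a proof.
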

For $s=1$ both inequalities are equivalent to
$\Diamond_{X_1 X_2 X_3 X_4} \geq 0$. For the current state of the art
on non-Shannon inequalities we refer to~\cite{Dougherty2011}.  To our
knowledge, all known non-Shannon entropy inequalities in four
variables that are not (known to be) rendered redundant by tighter ones
can be written as the sum of the Ingleton quantity and (conditional)
mutual information
terms~\cite{Zhang1998,Makarychev2002,Dougherty2006,Matus2007,Xu2008,Dougherty2011}.

\bigskip

Complementary to outer approximations (such as the Shannon cone,
$\Gamma_n$) it is also interesting to consider inner approximations,
$\Gamma^{I}_n$, to the $n$-variable entropy cone
$\overline{\Gamma_{n}^{*}}$. Such approximations can be defined in
terms of so-called linear rank inequalities, which are inequalities
that hold for the dimensions of subspaces of vector spaces~\cite{Ingleton}.
Entropic inequalities imply linear rank inequalities\footnote{For example, the
  linear rank inequality implied by the fact that $H(AB)\geq H(A)$ for
  all distributions $P_{AB}$ is $\dim(A\cup B)\geq\dim(A)$ for all
  subspaces $A$ and $B$ of a (finite dimensional) vector space.} but
the converse does not hold~\cite{Hammer2000}, which is why using (the entropic analogue
of) linear rank inequalities gives an inner approximation.  In the case
of $n=4$ the Shannon inequalities and the \emph{Ingleton inequality},
i.e.,
\begin{equation} I_{\rm ING}(X_1,
  X_2 ; \!  X_3, X_4) \geq 0 \label{eq:ingletoninequ}
\end{equation}
(and its permutations), define such an inner approximation
$\Gamma^{I}_4$~\cite{Ingleton}. $\Gamma^{I}_5$ is defined by the
Shannon inequalities, all instances of the Ingleton inequality and
$24$ additional classes of inequalities~\cite{Dougherty2009}. For $6$
or more variables, a complete list of all linear rank inequalities is
not known, nor is it known whether such a list would be finite. A list
of over a billion inequalities (counting permutations) has been
found~\cite{Dougherty2014}.\footnote{Note that we can always obtain
  some inner approximations with other methods, e.g., by constructing
  a set of achievable entropy vectors and taking their convex hull.}

\subsection{The entropy vector approach to causal structures}
A \textit{causal structure}, $C$, is a set of variables arranged in a
directed acyclic graph (DAG). The \emph{parents},
$X^{\downarrow_{1}}$, of a variable $X$ in a DAG are the variables
from which an arrow is directly pointing at $X$, and the
\textit{descendants}, $X^{\uparrow}$ of $X$ are all variables that may
be reached from $X$ along a directed path within the DAG. We use $C^{\cC}$ and $C^{\qQ}$ to denote the classical and the quantum version of a causal structure respectively.

\subsubsection{Classical causal structures}\label{sec:classical_method}
The graph of a classical causal structure, $C^{\mathrm{C}}$, with random variables $X_1,\ X_2,\ \ldots,\ X_n$, encodes the independence relations of $X_1,\ X_2,\ \ldots,\ X_n$ in the sense that the distribution $P_\mathrm{X_1 X_2 \ldots X_n}$ is said to be \textit{compatible} with  $C^{\mathrm{C}}$ if it can be decomposed as
$$P_\mathrm{X_1 X_2 \ldots X_n}= \prod_{i} P_\mathrm{X_{i}|X^{\downarrow_{1}}_{i}} .$$
This interpretation of classical causal structures follows the theory of Bayesian networks~\cite{Pearl2009}. The set of all compatible distributions is in the following denoted $\mathcal{P}(C^{\mathrm{C}})$.
The compatibility requirement is equivalent to the condition that for each variable $X_i$,
\begin{equation}\label{eq:indepentr}
I(X_i\!:\!X_i^{\nuparrow}|X^{\downarrow_{1}}_i)=0.
\end{equation}
$X_i^{\nuparrow}$ denotes the \textit{non-descendants} of $X_i$, i.e., all variables in the causal structure except for the variable itself and its descendants.~\footnote{In particular, all other (conditional) independence relations of variables in the causal structure are implied by these $n$ equalities.}

The entropic description of causal structures has first been considered in~\cite{Chaves2012, Fritz2013}. The $n$ equalities \eqref{eq:indepentr} restrict the $n$-variable entropy cone $\overline{\Gamma_n^{*}}$ to the cone of all entropy vectors compatible with $C^{\cC}$, denoted $\overline{\Gamma^{*}}\left(C^{\mathrm{C}}\right)$. An outer approximation to $\overline{\Gamma^{*}}(C^{\mathrm{C}})$ is constructed by supplementing $\Gamma_n$ with the same $n$ equalities, which leads to the cone $\Gamma\left(C^{\cC} \right)$.

When $k$ out of the $n$ variables of $C^{\cC}$ are
observed, we take these to be the first $k$ variables, 
$X_1, X_2, \ldots, X_k$, without loss of generality. For $k <n$ we are then interested in deriving constraints for the observed variables only.
For a compatible distribution,
$P_{{\rm X_1 X_2} \cdots {\rm X_n}} \in \mathcal{P}\left(C^{\cC}\right)$,
this is achieved by marginalising over the unobserved variables
$X_{k+1}, X_{k+2}, \ldots, X_n$ which yields a distribution in the set
of all compatible marginal distributions
$P_{{\rm X_1 X_2} \cdots {\rm X_k}} \in
\mathcal{P}_{\mathcal{M}}\left(C^{\cC}\right)$.
Entropically, marginalisation corresponds to a projection of the
entropy cone to the corresponding $k$-variable marginal cone
$\overline{\Gamma^{*}_{\mathcal{M}}}\left(C^{\mathrm{C}}\right)
\subsetneq \mathbb{R}^{2^{k}-1}$,
which would be obtained by dropping all components involving any of
the $n-k$ unobserved variables from each vector in
$\overline{\Gamma^{*}}\left(C^{\mathrm{C}}\right)$.  The outer
approximation $\Gamma\left(C^{\mathrm{C}}\right)$ can be analogously
projected to an approximation,
$\Gamma_{\mathcal{M}}\left(C^{\mathrm{C}}\right)$, of
$\overline{\Gamma^{*}_{\mathcal{M}}}\left(C^{\mathrm{C}}\right)$. Computationally,
$\Gamma_{\mathcal{M}}\left(C^{\mathrm{C}}\right)$ is usually found by considering $\Gamma\left(C^{\mathrm{C}}\right)$ characterised by means of bounding hyperplanes and applying a Fourier-Motzkin elimination algorithm to the system of linear inequalities~\cite{Williams1986}.

\subsubsection{Quantum causal structures} \label{sec:quantum_method} A
quantum causal structure $C^{\qQ}$ differs from its classical analogue
in that the unobserved nodes correspond to quantum systems. Here, we
only consider causal structures with two generations of nodes, where
the nodes of the first generation are unobserved quantum systems
and the nodes of the second generation represent observed
(classical) variables. Note that this also allows for the description
of causal structures with observed input nodes, as is illustrated in
Figure~\ref{fig:inputs}.
\begin{figure}
\centering
\resizebox{0.8 \columnwidth}{!}{%
\begin{tikzpicture}[scale=0.9]
\node (a)  at (-2,2.5) {$(a)$};
\node[draw=black,circle,scale=0.75] (X1) at (-1,2) {$X_2$};
\node[draw=black,circle,scale=0.75] (A1) at (-1,0) {$X_1$};
\node (Z1) at (-0.0,0.5) {};

\node (b) at (2,2.5) {$(b)$};
\node[draw=black,circle,scale=0.75] (X) at (3,2) {$X_2$};
\node[draw=black,circle,scale=0.75] (A) at (3,0) {$X_1$};
\node (ZZ) at (3,1) {$A$};
\node (Z) at (4,0.5) {};
\node (Y) at (5,0.5) {};

\draw [->,>=stealth] (A1)--(X1);
\draw [->,>=stealth] (ZZ)--(X);
\draw [->,>=stealth] (ZZ)--(A);
\draw [dotted,->,>=stealth] (Z1)--(X1);
\draw [dotted,->,>=stealth] (Z)--(X);

\end{tikzpicture}
}%
\caption{For a quantum causal structure with an observed input node,
  $X_1$~--~meaning a parentless node from which there is only one
  arrow to another observed node, $X_2$~--~there always exists another
  (quantum) causal structure that allows for exactly the same
  correlations and where the observed input is replaced by a shared
  quantum parent of $X_1$ and $X_2$. To simulate any correlations in
  (a) within scenario (b) we can use a quantum system that sends
  perfectly correlated classical states to both nodes $X_1$ and $X_2$,
  distributed as $X_1$. On the other hand, any correlations obtained
  in scenario (b) can be created in scenario (a) by having a random
  variable $X_1$ sent to node $X_2$, where the relevant quantum states
  (the reduced states that would be present in (b) conditioned on the
  value of $X_1$) are locally generated.  Note that these
  considerations are not restricted to quantum causal structures but
  apply also in the classical case (or even if considering states from
  a generalised probabilistic theory).  }
\label{fig:inputs}
\end{figure}
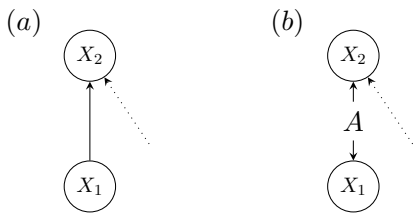

For such causal structures, each edge has an associated Hilbert space,
which can be labelled by the parent and child, e.g., for a DAG with an
edge $X \rightarrow Y$, there is an associated
$\mathcal{H}_{X_{Y}}$. Each unobserved node is labelled by a quantum
state, a density operator on the tensor product of the Hilbert spaces
associated with the edges originating at that node. For each observed
node there is an associated POVM that acts on the tensor product of
the Hilbert spaces associated with the edges that meet at that node.
The distributions,
$P\in\mathcal{P}_{\mathcal{M}}\left(C^{\qQ}\right)$, of the observed
variables that are \emph{compatible} with a causal structure
$C^{\qQ}$, are those resulting from performing the specified POVMs on
the relevant systems via the Born rule.

A technique to analyse these sets entropically was proposed by Chaves et al.~\cite{Chaves2015} and is outlined in the following, where the idea of considering entropy cones of multi-party quantum states goes back to Pippenger~\cite{Pippenger2003}. 
The set of compatible observed distributions $P \in \mathcal{P}_{\mathcal{M}}\left(C^{\qQ}\right)$ can be mapped to a set of compatible entropy vectors, the closure of which is denoted $\overline{\Gamma^{*}_{\mathcal{M}}}\left(C^{\qQ}\right)$.
To approximate this set, a system is assigned to each observed variable as well as to each outgoing edge of each unobserved node. As opposed to the classical case, where we can always define a joint distribution over all variables in a causal structure $C^{\cC}$, there is in general no joint quantum state over all systems in $C^{\qQ}$. In particular, the systems corresponding to the edges that meet at an observed node do not coexist with the outcome at that node and hence there is no joint quantum state from which a joint entropy could be derived. 
The approach is therefore based on a notion of coexistence: two systems are said to \textit{coexist} if neither
is a quantum ancestor of the other in $C^{\qQ}$, and a set of systems that pairwise coexist form a \textit{coexisting set}. For each coexisting set, $X_S \subseteq \Omega$, the von Neumann entropy $H(X_S):=- \tr(\rho_{X_S} \log_2 \rho_{X_S})$ of their joint state $\rho_{X_S}$ is defined; all of these von Neumann entropies are considered as components of an entropy vector.

For each coexisting set the entropies of all its subsets as well as
all conditional mutual informations of its systems are
positive~\cite{Lieb1973}. The conditional entropy may not be positive
in general, but for three mutually disjoint subsets of a coexisting
set, $X_T,\ X_U, \ X_V \subseteq X_S$,
$H(X_T)+H(X_U) \leq H(X_T \cup X_V)+H(X_U \cup X_V)$ holds
instead. These three types of inequality hold for the components of
any entropy vector. For the von Neumann entropy of a multi-party
quantum state no additional entropy inequalities are known. It has
been suggested, however, that any classical `balanced entropy
inequality'~\cite{Chan2003} (which includes all known non-Shannon
inequalities) may also hold for multi-party quantum
states~\cite{Cadney2012}. It is worth remarking that the lack of a
joint state for all nodes within a quantum causal structure would
restrict the applicability of such inequalities in the causal context
if they were to hold.\footnote{In the triangle causal structure, for
  instance, even if the known non-Shannon inequalities held for
  arbitrary quantum states, they would not allow us to derive any new
  entropy inequalities for the quantum version of this causal structure.}

In many circumstances the conditional entropy of certain sets of
systems is known to be positive, e.g.\ if all systems in a coexisting
set are classical. Such constraints on the entropy vectors are also
added (see~\cite{our_review} for further details).  The causal
restrictions encoded in the graph are accounted for by the condition
that two subsets of a coexisting set are independent (and hence have
zero mutual information between them) if they have no shared
ancestors.\footnote{Since a node and its (quantum) ancestors never
  coexist, conditional inependences don't have to be taken into
  account in two-generation causal structures.} To relate the
entropies of systems in different coexisting sets, data processing
inequalities (DPIs) are used: Let
$\rho_\mathrm{X_S X_T} \in \mathcal{S}(\mathcal{H}_{X_\mathrm{S}}
\otimes \mathcal{H}_{X_{\mathrm{T}}})$
and $\mathcal{E}$ be a completely positive trace preserving (CPTP) map
on $\mathcal{S}(\mathcal{H}_{X_{\mathrm{T}}})$ leading to a state
$\rho'_\mathrm{X_{\mathrm{S}} X_{\mathrm{T}}}$.  Then
\begin{equation}\label{eq:DPI}
I(X_{\mathrm{S}} \!   : \!   X_{\mathrm{T}})_{\rho'_\mathrm{X_S X_T}} \leq I(X_{\mathrm{S}} \!   : \!   X_{\mathrm{T}})_{\rho_\mathrm{X_S X_T}}.
\end{equation}
Results on the redundancy of certain DPIs have been presented in~\cite{our_review}.
All constraints on the possible entropy vectors taken together define a polyhedral cone, which we denote $\Gamma\left( C^{\qQ} \right)$. Its projection to the observed variables, $\Gamma_\mathcal{M}\left( C^{\qQ} \right)$, is an outer approximation to $\overline{\Gamma^{*}_{\mathcal{M}}}\left(C^{\qQ}\right)$, that can be computed from  $\Gamma\left( C^{\qQ} \right)$ with a Fourier-Motzkin elimination algorithm~\cite{Williams1986}.

\section{Improving current entropic characterisations with non-Shannon inequalities}\label{sec:non_shannon}
In this section we show how non-Shannon inequalities allow us to
improve the previous outer approximations to the entropy cones of
classical causal structures.  We give an improved entropic description
of the triangle causal structure of Figure~\ref{fig:3variables}(e)
(Section~\ref{sec:triangle}), discuss the application of non-Shannon
inequalities to further causal structures (Section~\ref{sec:int}) and
demonstrate that non-Shannon inequalities are also applicable in
combination with post-selection using information causality as an
example (Section~\ref{sec:post_selected_nonShan}).

The computational procedure that we use in order to derive these new
inequalities is roughly outlined in the following.  (1) We take the
Shannon inequalities for the joint distribution of all variables in a
causal structure $C^{\cC}$, (2) we add a set of valid non-Shannon
inequalities to these, (3) we add all conditional independence
equalities that are implied by $C^{\cC}$, (4) we eliminate all
entropies of unobserved variables from the full set of inequalities
(by means of a Fourier-Motzkin elimination
algorithm~\cite{Williams1986}), which leads to constraints on the
entropies of the observed variables only. 

Note that the same procedure, but missing out step (2) corresponds to
the computation of $\Gamma_{\cM}(C^\cC)$ as in~\cite{Chaves2012,
  Fritz2013, Chaves2014b} and outlined in Section~\ref{sec:classical_method}.
Thus, the inclusion of (2) is responsible for the new constraints.  In
addition to deriving entropy inequalities computationally, we also
provide analytic derivations of (infinite families of) new
inequalities.

\subsection{Improved outer approximation to the entropy cone of the
  classical triangle scenario} \label{sec:triangle} 
The triangle causal structure, called $C_3$, is one of the simplest
examples with interesting features~\cite{Branciard2012, Fritz2012,
  Chaves2014, Henson2014, Chaves2015}. It can be used when three
parties make observations, $X$, $Y$ and $Z$ respectively, on systems,
$A$, $B$ and $C$, that are shared pairwise between them. This may for
instance be realised in a communication protocol where three parties
aim to obtain (correlated) data while interacting in pairs and without
ever having interacted as a group.

$C_3$ is one of only five distinct causal structures involving three
observed random variables that exhibit no ancestral relations between
the observed variables (cf.\ Figure~\ref{fig:3variables}).
\begin{figure}
\centering
\resizebox{0.95 \columnwidth}{!}{%
\begin{tikzpicture}[scale=0.48]
\node (a)  at (-9.6,2.5) {$(a)$};
\node[draw=black,circle,scale=0.75] (X1) at (-8.5,2) {$X$};
\node[draw=black,circle,scale=0.75] (Y1) at (-4.5,2) {$Y$};
\node[draw=black,circle,scale=0.75] (Z1) at (-6.5,-1.46) {$Z$};

\node (b) at (-3.1,2.5) {$(b)$};
\node[draw=black,circle,scale=0.75] (X) at (-2,2) {$X$};
\node[draw=black,circle,scale=0.75] (Y) at (2,2) {$Y$};
\node[draw=black,circle,scale=0.75] (Z) at (0,-1.46) {$Z$};
\node (A) at (1,0.28) {$A$};

\node (c)  at (3.4,2.5) {$(c)$};
\node[draw=black,circle,scale=0.75] (X2) at (4.5,2) {$X$};
\node[draw=black,circle,scale=0.75] (Y2) at (8.5,2) {$Y$};
\node[draw=black,circle,scale=0.75] (Z2) at (6.5,-1.46) {$Z$};
\node (A2) at (6.5,0.55) {$A$};

\node (d)  at (-6.6,-3.0) {$(d)$};
\node[draw=black,circle,scale=0.75] (X3) at (-5.5,-3.46) {$X$};
\node[draw=black,circle,scale=0.75] (Y3) at (-1.5,-3.46) {$Y$};
\node[draw=black,circle,scale=0.75] (Z3) at (-3.5,-6.92) {$Z$};
\node (A3) at (-2.5,-5.18) {$A$};
\node (B3) at (-4.5,-5.18) {$B$};

\node (e)  at (0.4,-3.0) {$(e)$};
\node[draw=black,circle,scale=0.75] (X4) at (1.5,-3.46) {$X$};
\node[draw=black,circle,scale=0.75] (Y4) at (5.5,-3.46) {$Y$};
\node[draw=black,circle,scale=0.75] (Z4) at (3.5,-6.92) {$Z$};
\node (A4) at (4.5,-5.18) {$A$};
\node (B4) at (2.5,-5.18) {$B$};
\node (C4) at (3.5,-3.46) {$C$};

\draw [->,>=stealth] (A)--(Y);
\draw [->,>=stealth] (A)--(Z);

\draw [->,>=stealth] (A2)--(Y2);
\draw [->,>=stealth] (A2)--(Z2);
\draw [->,>=stealth] (A2)--(X2);

\draw [->,>=stealth] (A3)--(Y3);
\draw [->,>=stealth] (A3)--(Z3);
\draw [->,>=stealth] (B3)--(X3);
\draw [->,>=stealth] (B3)--(Z3);

\draw [->,>=stealth] (A4)--(Y4);
\draw [->,>=stealth] (A4)--(Z4);
\draw [->,>=stealth] (B4)--(X4);
\draw [->,>=stealth] (B4)--(Z4);
\draw [->,>=stealth] (C4)--(X4);
\draw [->,>=stealth] (C4)--(Y4);
\end{tikzpicture}
}%
\caption{Assuming no ancestral relations between any of the three
  observed variables $X$, $Y$ and $Z$ (i.e., no member of $\{X,Y,Z\}$
  is an ancestor of any other), the above are the only possible causal
  structures (up to relabelling). $A$, $B$ and $C$ correspond to
  unobserved variables.}
\label{fig:3variables}
\end{figure}
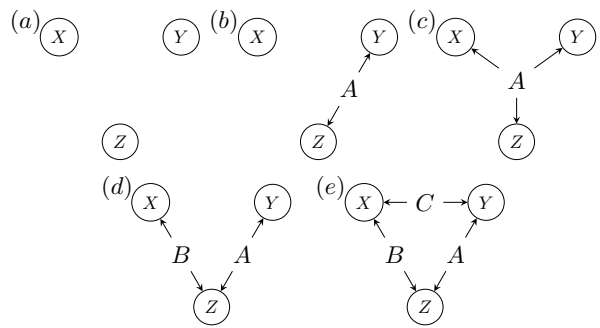
All except for the causal structures (c) and (e) may be distinguished by looking at independences among the observed variables, $X$, $Y$ and $Z$, listed in Table~\ref{table:3variablescenarios}.
\begin{table}\scriptsize
\begingroup
\def\inlinedisplayeqn#1{\vspace*{0.1ex}$\displaystyle #1$\vspace*{0.1ex}}
\begin{tabular}{|m{0.55cm}|m{4.80cm}|m{1.52cm}|}
    \hline 
    \hspace{0pt}\textbf{}  &\hspace{0pt}\textbf{Compatible Distributions} &\hspace{0pt}\textbf{Observed $ \ \ $ Independence} \\ 
		\hline
    \hspace{0pt} (a) & \hspace{-3pt}$P_{XYZ} = P_{X}P_{Y}P_{Z}$ & $I(X\! \!  :  \! \! YZ) \! \! = \! \! 0$ \newline $I(Y\! \!  : \! \! XZ)\!\! = \!\! 0$ \newline  $I(Z \! \!    :  \! \!  XY) \! \! =\! \! 0$ \\ 
 		\hline
    \hspace{0pt} (b) & \hspace{-3pt}\vphantom{{\large H}}\inlinedisplayeqn{P_{XYZ}= \sum_{A}P_{X}P_{Y| A}P_{Z| A}P_{A}} &
$I(X\! \!  : \! \!  YZ) \! \! = \! \! 0$ \\ 
 		\hline
    \hspace{0pt} (c) & \hspace{-3pt}\vphantom{{\large H}}\inlinedisplayeqn{P_{XYZ}= \sum_{A}P_{X| A}P_{Y| A}P_{Z| A}P_{A}} &
 None \\ 
 		\hline
    \hspace{0pt} (d) & \hspace{-3pt}\vphantom{{\large H}}\inlinedisplayeqn{P_{XYZ}= \sum_{A,B}P_{X| B}P_{Y\mid A}P_{Z| AB}P_{A}P_{B}} &
 $I(X\! \!  : \! \!  Y)=0$ \\ 
 		\hline
    \hspace{0pt} (e) & \hspace{-4pt}\vphantom{{\large H}}\inlinedisplayeqn{P_{X \! Y \! Z}\! \! = \! \!\!   \!   \!   \!  \sum_{A,B,C}\!   \!   \!   \!   \!   P_{X \! | \! AC}P_{Y \! | \! AC}P_{Z \! | \! AB}P_{A}P_{B}P_{C}} &
None \\ 
	\hline		
	\hline
    \end{tabular}
\endgroup
\caption{Distributions compatible with the three-variable causal structures displayed in Figure~\ref{fig:3variables}.}
\label{table:3variablescenarios}
\end{table}
However, while the causal structure of Figure~\ref{fig:3variables}(c)
does not impose any restrictions on the compatible $P_{XYZ}$, the
distributions that are compatible with the triangle causal structure
of Figure~\ref{fig:3variables}(e) obey additional
constraints~\cite{Steudel2015}.\footnote{For instance, perfectly
  correlated bits $X$, $Y$ and $Z$, i.e., those with joint
  distribution
\begin{equation} \label{eq:perfcor}
P_{XYZ}(x,y,z)= 
\begin{cases}
\frac{1}{2} &x=y=z\\
0 &\textrm{otherwise},
\end{cases}
\end{equation}
are not achievable in this causal structure.  This is not only true
classically, but also in any generalised probabilistic
theory~\cite{Steudel2015, Henson2014}.}
This illustrates that causal structures encode more than the observed independences.

Furthermore, $C_3$ is unique among these five causal structures, it
being the only one that features quantum correlations that are not
classically reproducible, i.e.,
$P_{\cM}\left(C_3^{\cC}\right) \subsetneq
P_{\cM}\left(C_3^{\qQ}\right)$,
as proven in Ref.~\cite{Fritz2012} (see
Section~\ref{sec:quantcorr} for further details regarding the quantum
scenario).\footnote{In structures (a), (b) and (c) all joint
  distributions are allowed for the variables that share a common
  cause in the classical case. Hence, quantum systems do not enable
  any stronger correlations. This is because, for any quantum
  state $\rho_A$ shared at $A$ and measured later, the correlations can
  be classically reproduced if $A$ sends out the same classical output
  statistics to the parties directly.  
  In structure (d) no non-classical quantum correlations exist
  either~\cite{Fritz2012}. This is also fairly intuitive: the quantum
  measurements performed at $X$ and $Y$ could be equivalently
  performed at the sources $B$ and $A$ respectively, such that these
  sources distribute cq-states of the form
  $\sum_{x} P_X(x) \ketbra{x}{x} \otimes \rho_{B_Z}^{x}$ and
  $\sum_{y} P_Y(y) \ketbra{y}{y} \otimes \rho_{A_Z}^{y}$ instead.  The
  same correlations can be achieved classically by taking random
  variables $B=X$ and $A=Y$ (these being distributed according to
  $P_X$ and $P_Y$).  Since $\rho_{B_Z}^{x}$ and $\rho_{A_Z}^{y}$ are
  functions of $X$ and $Y$, the statistics formed by measuring such
  states can be computed classically via a probabilistic function
  (this function could be made deterministic by taking $B=(X,W)$,
  where $W$ is distributed appropriately).}
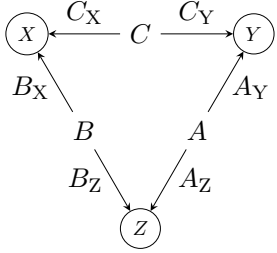
\begin{figure}
\centering
\resizebox{0.45\columnwidth}{!}{%
\begin{tikzpicture}[scale=0.8]
\node[draw=black,circle,scale=0.75]  (X) at (-2,2) {$X$};
\node[draw=black,circle,scale=0.75]  (Y) at (2,2) {$Y$};
\node[draw=black,circle,scale=0.75]  (Z) at (0,-1.46) {$Z$};
\node (A) at (1,0.28) {$A$};
\node (B) at (-1,0.28) {$B$};
\node (C) at (0,2) {$C$};

\draw [->,>=stealth] (A)--(Y) node [right,pos=0.4] {$A_\mathrm{Y}$};
\draw [->,>=stealth] (A)--(Z) node [right,pos=0.5] {$A_\mathrm{Z}$};
\draw [->,>=stealth] (B)--(X) node [left,pos=0.4] {$B_\mathrm{X}$};
\draw [->,>=stealth] (B)--(Z) node [left,pos=0.5] {$B_\mathrm{Z}$};
\draw [->,>=stealth] (C)--(X) node [above,pos=0.5] {$C_\mathrm{X}$};
\draw [->,>=stealth] (C)--(Y) node [above,pos=0.5] {$C_\mathrm{Y}$};
\end{tikzpicture}
}%
\caption{Triangle causal structure $C_3$. Three observed random
  variables $X$, $Y$ and $Z$ have pairwise common causes. In the
  classical case these common causes are random variables, $A$, $B$
  and $C$, while in the quantum case these are replaced by quantum
  systems, ($A_{\mathrm{Y}}$, $A_{\mathrm{Z}}$), ($B_{\mathrm{X}}$,
  $B_{\mathrm{Z}}$) and ($C_{\mathrm{X}}$, $C_{\mathrm{Y}}$).}
	\label{fig:triangle}
\end{figure}

In the following, we derive new and improved outer approximations
to $\overline{\Gamma^*_\cM}(C_3^{\cC})$ by using non-Shannon entropy
inequalities. These show that the Shannon approximation to
$\overline{\Gamma^*_\cM}(C_3^{\cC})$ is not tight, i.e., that
$ \overline{\Gamma^*_\cM}(C_3^{\cC}) \subsetneq
\Gamma_\cM(C_3^{\cC})$.
We remark that our findings contradict the considerations
of~\cite{Chaves2014,Chaves2015}, which together argue that in the
marginal scenario there is no separation between the Shannon
cone and the classical entropy cone, i.e., they argue that
$\Gamma_\cM(C_3^{\cC})=\overline{\Gamma^*_\cM}(C_3^{\cC})$, which
would imply that non-Shannon inequalities are
irrelevant.\footnote{The details of this are in the Supplementary
  Information of~\cite{Chaves2015}.} For further discussion of the
discrepancy with~\cite{Chaves2014,Chaves2015}, see
Appendix~\ref{sec:prevwork}.

The set of all observed distributions compatible with $C_3^{\cC}$ is\footnote{Note that this set is not convex, which can be seen by considering the perfect
  correlations $P_{XYZ}$ of~\eqref{eq:perfcor} (which are not in
  $\mathcal{P}_{\mathcal{M}}\left(C_3^{\cC}\right)$) as a convex
  combination of the distribution where $X$, $Y$ and $Z$ are always
  $0$ and the distribution where $X$, $Y$ and $Z$ are always $1$
  (both of which are).}
\begin{align*}
\mathcal{P}_{\mathcal{M}}
&\left(C_3^{\cC}\right)=\left\{P_{XYZ}\in\mathcal{P}_{3}|\phantom{\sum_{A,B,C}}\right.\\
&\left. P_{XYZ}=\sum_{A,B,C}P_{A}P_{B}P_{C}P_{X|BC}P_{Y|AC}P_{Z|AB}\right\} \! .
\end{align*}
The compatible entropy vectors are, 
\begin{equation*}
\Gamma^{*}_\mathcal{M} \! \left( \! C_3^{\cC} \! \right)\! =\! \left\{ \! v  \! \in \! \mathbb{R}^{7}_{\geq 0} | \exists P \! \in \! \mathcal{P}_\mathcal{M} \! \left( \! C_3^{\cC} \! \right) \! \text{ s.t. } \! v \! = \! {\bf{H}}(P) \! \right\} \! ,
\end{equation*}
and $\overline{\Gamma^{*}_\mathcal{M}}\left(C_3^{\cC}\right)$ is a
convex cone (cf.~\cite{our_review}).

The Shannon outer approximation\footnote{The outer approximation
  obtained from all six variable Shannon inequalities and the
  conditional independence equalities~\eqref{eq:indepentr}, which are
  in this case $I(A\!   : \!  BCX)=0$, $I(X\!   : \!  AYZ| BC)=0$ and appropriate
  permutations.},
$$\Gamma_\mathcal{M}\left(C_3^{\cC}\right)=\left\{w\in
  \Gamma_3 | M_\mathcal{M}\left(C_3^{\cC}\right) \cdot w \geq 0
\right\},$$
was explicitly computed by Chaves et al.~\cite{Chaves2014,Chaves2015},
where $M_{\mathcal{M}}\left(C_3^{\cC}\right)$ is the coefficient
matrix of the following three equivalence classes of inequalities
(where permutations of $X$, $Y$ and $Z$ lead to a total of $7$
inequalities):\footnote{Recall that an explicit linear description of
  their entropy cone is generally only available for causal structures
  with up to three nodes. In particular, such a description is not
  available for $\overline{\Gamma^{*}}\left(C_3^{\cC}\right)$, which
  involves six nodes. Hence, it is impossible to directly compute
  $\overline{\Gamma^{*}_{\mathcal{M}}}\left(C_3^{\cC}\right)$ with a
  variable elimination algorithm.}

\begin{widetext}
\begin{equation}\label{eq:margindep}
\begin{split}
- H(X)-H(Y)-H(Z)+H(XY)+H(XZ) &\geq 0, \\
-5 H(X)-5H(Y)-5H(Z)+4H(XY)+4H(XZ)+4H(YZ)-2H(XYZ) &\geq 0, \\
-3 H(X)-3H(Y)-3H(Z)+2H(XY)+2H(XZ)+3H(YZ)-H(XYZ) &\geq 0.
\end{split}
\end{equation}
\end{widetext}

We now show that tighter outer approximations of the set of achievable
entropy vectors in the marginal scenario of the triangle,
$\overline{\Gamma^{*}_{\mathcal{M}}}\left(C_3^{\cC}\right)$, can be
derived by using non-Shannon type inequalities. However, there are
infinitely many such linear entropy inequalities. To restrict the
number of inequalities to be considered, the following reasoning can
be applied. As mentioned in Section~\ref{sec:classicalcone}, all known
non-Shannon entropy inequalities for four variables can be written as
the sum of the \emph{Ingleton quantity}~\eqref{eq:ingleton} and
(conditional) mutual information terms.  Since the latter are always
positive, any non-Shannon inequality is irrelevant (i.e., implied by
existing ones) if the causal restrictions imply that the
Ingleton term is non-negative.  This significantly reduces the choices
of variable sets for which the known additional inequalities may
be relevant.

\begin{example}
Consider Proposition~\ref{prop:zhangyeung} with $(X_1,~X_2,~X_3,~X_4)=(A,~B,~C,~X)$. The corresponding inequality is
\begin{equation*}
\begin{split}
I(A\!   : \!  B|C)+I(A\!   : \!  B|X)+I(C\!   : \!  X)-I(A\!   : \!  B)\\
+I(A\!   : \!  C|B)+I(B\!   : \!  C|A)+I(A\!   : \!  B|C) \geq 0.
\end{split}
\end{equation*}
Whenever a causal structure $C^{\cC}$ implies $I(A\!   : \!  B)=0$, i.e.,
independence of $A$ and $B$, the above inequality is implied by the
Shannon inequalities and the independence constraint $I(A\!   : \!  B)=0$.
Hence it cannot improve our outer approximation.
\end{example}

The following proposition restricts the permutations of each non-Shannon inequality that may be relevant for the derivation of our improved approximations to $\overline{\Gamma^{*}_\mathcal{M}}\left(C_3^{\cC}\right)$.

\begin{prop} \label{prop:ingletonperm} 
Consider an entropy inequality on four variables
  that enforces the non-negativity of a positive linear combination of
  the Ingleton quantity~\eqref{eq:ingleton} and (conditional)
  mutual information terms. This inequality is implied by the Shannon
  inequalities 
  and the conditional independences of
 $C_3^{\cC}$  (i.e., $I(A\!   : \!   XBC)=0$, $I(X\!   : \!   YZA|BC)=0$ and appropriate permutations) for all choices of four out of the six involved random variables, except
\begin{equation*}
\begin{split}
\left( X_1,~X_2)~(X_3,~X_4 \right) &= \left( X,~Y)~(Z,~C \right)\\
&= \left( X,~Z)~(Y,~B \right) \\
&= \left( Y,~Z)~(X,~A \right),
\end{split}
\end{equation*}
up to exchange of $X_1$ and $X_2$ or exchange of $X_3$ and $X_4$. 
\end{prop}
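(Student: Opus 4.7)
The plan is to reduce the claim to an equivalent one about the Ingleton quantity alone. Any inequality of the stated form is a positive linear combination of $I_{\rm ING}(X_1,X_2;X_3,X_4)$ and (conditional) mutual information terms that are already non-negative under Shannon. It therefore suffices to identify which variable assignments admit a Shannon-plus-CI derivation of $I_{\rm ING}(X_1,X_2;X_3,X_4) \geq 0$ in $C_3^{\cC}$.

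I would then establish the following sufficient criterion: the Ingleton quantity is Shannon-implied whenever at least one of the five pairs $\{X_1,X_2\}$, $\{X_1,X_3\}$, $\{X_1,X_4\}$, $\{X_2,X_3\}$, $\{X_2,X_4\}$ is independent in $C_3^{\cC}$. If $X_1 \indep X_2$, then $I(X_1:X_2)=0$ and the Ingleton becomes a manifest sum of non-negative Shannon terms. For a cross-pair independence, say $X_1 \indep X_4$, I would use the routine identity
\begin{equation*}
I(X_1:X_2|X_4) - I(X_1:X_2) = I(X_1:X_4|X_2) - I(X_1:X_4),
\end{equation*}
which follows by expanding both sides in joint entropies and reflects the symmetry of the interaction information $I(X_1:X_2:X_4)$. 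Substituting $I(X_1:X_4)=0$ into the definition of $I_{\rm ING}$ yields
\begin{equation*}
I_{\rm ING}(X_1,X_2;X_3,X_4) = I(X_1:X_2|X_3) + I(X_3:X_4) + I(X_1:X_4|X_2),
\end{equation*}
manifestly a sum of non-negative Shannon terms. The other three cross-pair cases follow from the symmetry of $I_{\rm ING}$ under $X_1 \leftrightarrow X_2$ and under $X_3 \leftrightarrow X_4$.

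To finish, I would enumerate the ordered partitions of four-variable subsets of $\{X,Y,Z,A,B,C\}$ that fail this criterion. The independent pairs of $C_3^{\cC}$, read off from $I(A:XBC)=0$ and its permutations, are $\{A,B\}, \{A,C\}, \{B,C\}, \{A,X\}, \{B,Y\}, \{C,Z\}$; every other pair is dependent. A direct case check then shows that the criterion fails exactly when $\{X_1,X_2\}$ is a pair of observed variables and $\{X_3,X_4\}$ pairs the remaining observed variable with the unique source that is not its parent, recovering the three listed patterns. The $S_3$ symmetry of $C_3^{\cC}$ permuting the source-observed pairs $(A,X), (B,Y), (C,Z)$ groups these exceptions into a single orbit, so only one representative need be verified in detail.

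The main obstacle is the case bookkeeping: nominally $\binom{6}{4} \cdot \binom{4}{2} = 90$ ordered partitions must be examined, but the $S_3$ symmetry and the compact independence graph of $C_3^{\cC}$ keep the verification short and systematic. Only the positive direction is required: that the criterion suffices for every non-exceptional partition. The proposition does not, and I would not attempt to, assert that the Ingleton inequality actually fails in the three exceptional cases --- that is the separate issue addressed by the concrete new inequalities derived in the remainder of Section~\ref{sec:triangle}.
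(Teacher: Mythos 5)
Your proposal is correct and follows essentially the same route as the paper: both reduce the claim to the Ingleton term, use the interaction-information rewriting identities to show that independence of any of the five pairs $\{X_1,X_2\},\{X_1,X_3\},\{X_1,X_4\},\{X_2,X_3\},\{X_2,X_4\}$ renders the Ingleton quantity a manifest sum of non-negative Shannon terms, and then read off the independent pairs of $C_3^{\cC}$ to conclude that sources cannot serve as $X_1$ or $X_2$ and that $\{X_3,X_4\}$ is forced to the three listed patterns. Your closing remark that only the positive direction is asserted also matches the paper's scope.
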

All known irredundant non-Shannon inequalities satisfy the conditions of this proposition. Note also that the application of non-Shannon inequalities to subsets of four out of the six random variables in $C_3^{\cC}$ does not encompass all possible applications of these inequalities. Specifically, each inequality can also be applied to sets of five or to all six random variables, where the joint distribution of some sets of two or three random variables are interpreted as those of one of the four random variables in the non-Shannon inequality. We have not looked into such configurations. 

\begin{proof}
For four random variables $X_1$, $X_2$, $X_3$ and $X_4$, the Ingleton inequality
\begin{eqnarray}
I(X_1\! \!  : \! \! X_2|X_3) \! + \! I(X_1\! \!  : \! \! X_2|X_4)&& + I(X_3\! \!   : \! \!  X_4) \nonumber\\
&&- \! I(X_1\! \!  : \! \! X_2)\geq0,\label{eq:ingletproof}
\end{eqnarray}
can be equivalently rewritten in four more ways with the following equalities:
\begin{alignat}{2} \label{eq:rewritings}
\! I(X_1\! \!   : \! \!  X_2|X_3) \! - \! I(X_1\! \!  : \! \!  X_2)& \! = \! I(X_1\! \!  : \! \! X_3|X_2)&& \! - \! I(X_1\! \!  : \! \!  X_3) \nonumber  \\
& \! = \! I(X_2\! \!  : \! \!  X_3|X_1)&& \! - \! I(X_2\! \!  : \! \! X_3), \nonumber \\
\! I(X_1\! \!  : \! \! X_2|X_4) \! - \! I(X_1\! \!  : \! \! X_2)& \! = \! I(X_1X_4|X_2)&& \! -\! I(X_1\! \!   : \! \!  X_4) \nonumber \\
&\! = \! I(X_2\! \!  : \! \! X_4|X_1)&& \! - \! I(X_2\! \!  : \! \!  X_4) . 
\end{alignat}
For the inequality~\eqref{eq:ingletproof} not to be implied by the Shannon inequalities and the conditional independences we need $X_1$, $X_2$, $X_3$ and $X_4$ to be such that
\begin{alignat}{3}\label{eq:infoterms}
&I(X_1\!   : \!  X_2)& &> 0, \nonumber \\ 
&I(X_1\!   : \!  X_3)& &> 0, \nonumber \\
&I(X_1\!   : \!  X_4)& &> 0, \\
&I(X_2\!   : \!  X_3)& &> 0, \nonumber \\
&I(X_2\!   : \!  X_4)& &> 0, \nonumber 
\end{alignat}
hold simultaneously. If the conditional independences of $C_3^{\cC}$
imply that one of these mutual informations is zero then the Ingleton
inequality can be expressed as a positive linear combination of
(conditional) mutual information terms in one of its five equivalent
forms and the corresponding non-Shannon inequality is redundant.

For the five constraints \eqref{eq:infoterms} to hold simultaneously,
$X_1$ and $X_2$ have to be correlated with one another as well as with
two further variables. This excludes the independent sources $A$, $B$
and $C$ as candidates for $X_1$ and $X_2$; therefore
$X_1,~X_2 \in \left\{X,~Y,~Z \right\}$.  Furthermore, the variables
$X_3$ and $X_4$ have to be correlated with both, $X_1$ and $X_2$. This
excludes the two variables in $\left\{ A,~B,~C \right\}$ that do not
lie between $X_1$ and $X_2$ in $C_3^{\cC}$. Hence, for each choice of
$X_1$ and $X_2$, the variables $X_3$ and $X_4$ have to be chosen as
the remaining element of $\left\{X,~Y,~Z \right\}$ and the variable
positioned opposite it in $C_3^{\cC}$.

In summary, $\left(X_1,~X_2 \right) \left(X_3,~X_4 \right)$ can only be
$\left(X,~Y \right) \left(Z,~C \right)$, $\left(X,~Z \right)
\left(Y,~B \right)$ and $\left(Y,~Z \right) \left(X,~A \right)$ up to
permutations of the variables within a tuple.
\end{proof}

If we were to take one 4-variable non-Shannon inequality into account
and apply it to any subset of four out of the total of six random
variables in the causal structure, this would leave us with $360$
permutations of the inequality (if the inequality is not invariant
under the permutation of any of the four involved
variables). Proposition~\ref{prop:ingletonperm} reduces this to only
$12$ (potentially) irredundant permutations.

For each non-Shannon inequality, these $12$ permutations are
candidates for improving the outer approximation to
$\overline{\Gamma^{*}_{\mathcal{M}}}\left(C_3^{\cC}\right)$. We remark
here that for most known non-Shannon inequalities, several of these
$12$ permutations can be shown to be redundant\footnote{For instance,
  if the non-Shannon inequality in question is invariant under the
  permutation of some of its variables then some of the $12$
  permutations are equivalent, or, if the marginalisation of different
  permutations of the same inequality (that are not equal) imply the
  same inequalities for the marginal scenario then some of these
  inequalities may be redundant for our purposes.}.  Despite
accounting for this reduction in the permutations of each inequality,
the number of different inequalities to be considered is infinite,
and any outer approximation to
$\overline{\Gamma^{*}_\mathcal{M}}\left(C_3^{\cC}\right)$ could
(potentially) be tightened further by including additional
inequalities.

In principle, the more inequalities that are added, the better the
approximation to
$\overline{\Gamma^{*}_\mathcal{M}}\left(C_3^{\cC}\right)$. However,
adding too many inequalities at a time renders the task of
marginalising infeasible. Applied to a system of $n_0$ inequalities
the Fourier-Motzkin algorithm can yield up to
$\left(\frac{n_0}{2}\right)^{2}$ inequalities in the first elimination
step. Iterating the procedure for $n$ steps produces up to
$4 \cdot \left(\frac{n_0}{4}\right)^{2^{n}}$ inequalities. To avoid
this double exponential behaviour the elimination algorithm can be
adapted by implementing a few rules to remove some of the many
redundant inequalities produced in each step. These rules are
collectively known as \u{C}ernikov
rules~\cite{Cernikov1960,Chernikov1965} and comprehensively explained
in~\cite{Bastrakov2015}. It is known, however, that the number of
necessary inequalities can still grow
exponentially~\cite{Monniaux2010}.  That said, the worst case scaling
may not be exhibited in our case. In fact, the inequalities defining
$\Gamma\left(C_3^{\cC}\right)$ contain few variables each and thus
lead to far fewer than the maximal number of inequalities. However,
computational resources still limit us to adding a relatively small
number of different supplementary inequalities to the standard Shannon
cone at a time.

We have used the previously outlined technique to compute tighter
outer approximations to
$\overline{\Gamma_\cM^*}\left(C_3^{\cC}\right)$, by including a
manageable number of non-Shannon inequalities at a time:\smallskip

\noindent \emph{Case 1}: We include the inequality
from Proposition~\ref{prop:zhangyeung} as well as all six inequalities
from~\cite{Dougherty2006} applied to all subsets of four out of the
six variables of $C_3^{\cC}$. This leads to $45$ classes of
inequalities, of which $41$ are not part of the outer approximation $\Gamma_{\mathcal{M}}\left(C_3^{\cC}\right)$.\smallskip

\noindent \emph{Case 2}: We include the inequalities of the form given
in~\eqref{eq:matus1} and~\eqref{eq:matus2} for $s=1,2,3$ and for all
subsets of four out of the six variables in $C_3^{\cC}$. In this case,
we find $114$ classes of inequalities, of which $110$ are not part of
the outer approximation
$\Gamma_{\mathcal{M}}\left(C_3^{\cC}\right)$.\smallskip 

\noindent In each case, all classes (together with the number of
members in each class) are provided as Supplementary
Information. \smallskip

\noindent 
We have compared our new approximations to the Shannon outer
approximation by sampling uniformly over the surface of the positive
sector of the unit hypersphere around ${\bf 0}$ in
$\mathbb{R}^7$~\cite{Muller1959}\footnote{I.e., from the set
  $\{{\bf v}\in\mathbb{R}^7:v_i\geq 0, \sum_{i=1}^7v_i^2=1\}$.}. A
measure for the hyperdimensional solid angle included by these
approximations is given in terms of the fraction, $\alpha$, of points within the
respective cones.  We have sampled $3.2\times10^{9}$ points
each, which led to the following estimates for $\alpha$: \smallskip

\noindent \emph{Shannon Cone}: $\alpha_S=(3.308 \pm 0.010)\times10^{-5}$.\smallskip

\noindent \emph{Case 1}: $\alpha_1=(3.090 \pm 0.010)\times10^{-5}$.\smallskip

\noindent \emph{Case 2}: $\alpha_2=(3.072 \pm 0.010)\times10^{-5}$. \smallskip

This shows that the difference between the three approximations it relatively small: the hyperdimensional solid angle encompassed by the cones of the Case~1 and Case~2 approximations are both roughly $93 \%$ of that of the Shannon cone. An explicit entropy vector that lies in the Shannon approximation, but not in either of the new outer approximations to $\overline{\Gamma^{*}_\mathcal{M}}\left(C_3^{\cC}\right)$ is
$\left( \!   H(X) \!  , \! H(Y) \!  , \! H(Z) \!  , \! H(X\! Y) \!  , \! H(X\! Z) \!  , \! H(Y \! Z) \!  , \! H(X \! Y \! Z) \!   \right)\\
=\left( 11, 14, 14, 20, 20, 23, 28 \right)$.
We also derive some valid families of inequalities.
\begin{prop}\label{lemma:matus1}
All entropy vectors $v\in\Gamma^*_\mathcal{M}\left(C_3^{\cC}\right)$ obey
\begin{align}
&\!\left(-\frac{1}{2} s^2 - \frac{3}{2} s \right) \left( H(X) +H(Z) \right)- \left(s+1\right) H(Y)\nonumber\\
+&\!\left( \frac{1}{2} s^{2} + \frac{3}{2} s + 1 \right) \left( H(XY) + H(YZ) \right)\nonumber\\
+&s(s+2)H(XZ)-\left(s+1\right)^2 H(XYZ) \! \geq \! 0,\label{eq:marginalmatus1}\\ \nonumber\\
&\!\left( \!  \! - \frac{1}{2} s^2 \!  - \!  \frac{3}{2} s \!  - \! 2 \!  \right) \!  \left( H(X) \!  + \!  H(Y) \!  + \!  H(Z) \!  - \!  H(XY) \right) \nonumber\\
+&\! \left(\frac{1}{2} s^2 + \frac{3}{2} s +1\right)H(XZ)
+\left(  s + 2 \right) H(YZ) \nonumber\\ 
-&\! \left( s+1 \right) H(XYZ) \geq 0,\label{eq:mat1fam2}
\end{align}
\begin{align}
&\! \left(-\frac{1}{2} s^2 - \frac{3}{2} s -2 \right) \left( H(X) +H(Z)-H(XY) \right)\nonumber\\
-& \! \left(2s+2\right) H(Y)+\left( s^{2} + 2 \right) H(XZ) \nonumber\\
+& \! \left( \!  \frac{1}{2} s^{2} \!  + \!  \frac{3}{2} s \!  + \!  1 \!  \right)   H(YZ)
- \left( \!  s^2 \! + \! 1 \!  \right) H(XYZ) \! \geq \!  0,\label{eq:mat2}
\end{align}
for all $s\in\mathbb{N}$. The same holds for all permutations of $X$, $Y$ and $Z$.
\end{prop}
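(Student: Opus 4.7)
The plan is to derive each of the three inequality families by specialising the Matúš inequalities from Proposition~\ref{prop:matusfam} to particular four-variable subsets of the six variables of $C_3^{\cC}$, and then using the conditional independences encoded by the triangle graph to eliminate the unobserved variables $A$, $B$, $C$. The key independences I will use, beyond mutual independence of the sources, are $I(A:X) = I(B:Y) = I(C:Z) = 0$ (each observed node misses one source, which is independent of the other two) and $I(X:Y|C) = I(X:Z|B) = I(Y:Z|A) = 0$ (conditioning on the shared parent removes the only common ancestor of the two observed variables). Proposition~\ref{prop:ingletonperm} already tells us that only three variable assignments (up to swaps) have any chance of producing a non-redundant inequality, which narrows the search drastically.

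For \eqref{eq:marginalmatus1}, I apply \eqref{eq:matus1} with $(X_1,X_2,X_3,X_4)=(X,Y,Z,C)$. The Ingleton-type bracket
\[
I(X:Y|Z)+I(X:Y|C)+I(Z:C)-I(X:Y)
\]
collapses to $I(X:Y|Z)-I(X:Y)$ because $I(X:Y|C)=0$ and $I(Z:C)=0$, and the remaining summands $I(X:Z|Y)$ and $I(Y:Z|X)$ already live in $\{X,Y,Z\}$. So no $C$-dependent term survives, and expanding the conditional mutual informations in terms of joint entropies and collecting coefficients directly yields \eqref{eq:marginalmatus1}. By the $S_3$ symmetry of $C_3^{\cC}$ under permutation of $(X,Y,Z)$ (accompanied by the obvious permutation of $(A,B,C)$), the same substitution applied to the other two pairings of observed variables with their "missing" source gives all permutations.

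For \eqref{eq:mat1fam2} and \eqref{eq:mat2}, I proceed analogously, but now starting from \eqref{eq:matus2} with $(X_1,X_2,X_3,X_4)=(X,Y,Z,C)$, and possibly from a second specialisation of \eqref{eq:matus1} or \eqref{eq:matus2} under another assignment whose leftover unobserved-variable terms (for instance an $I(Y:C|X)$ contribution from \eqref{eq:matus2}) are cancelled when the two are added with appropriate integer weights. The asymmetric coefficients in \eqref{eq:mat1fam2} and \eqref{eq:mat2} (where $X$, $Y$, $Z$ share the self-coefficient but the pair entropies $H(XY)$, $H(XZ)$, $H(YZ)$ do not) are consistent with exactly this kind of combination in which one observed variable is singled out by the choice of eliminated source. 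After the residual unobserved-variable terms cancel, the derivation reduces to expanding the surviving conditional mutual informations into Shannon entropies and reading off the seven coefficients.

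The main obstacle will be identifying the correct linear combination of Matúš-type inequalities (and possibly Shannon-type inequalities) that produces exactly the coefficients of \eqref{eq:mat1fam2} and \eqref{eq:mat2}, since the "obvious" single substitution into \eqref{eq:matus2} leaves a $\tfrac{s(s-1)}{2}I(Y:C|X)$ term with the wrong sign to be discarded. Beyond that, the work is routine algebraic bookkeeping: because Proposition~\ref{prop:matusfam} guarantees each constituent inequality holds for arbitrary random variables and the causal independences are applied as equalities, the derived inequalities hold for every $v\in\Gamma^*_{\mathcal M}(C_3^{\cC})$ and for every $s\in\mathbb{N}$, and permutations of $(X,Y,Z)$ follow from the symmetry of $C_3^{\cC}$.
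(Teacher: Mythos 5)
Your derivation of \eqref{eq:marginalmatus1} is exactly the paper's: substitute $(X_1,X_2,X_3,X_4)=(X,Y,Z,C)$ into \eqref{eq:matus1}, note that $I(X\!:\!Y|C)=0$ and $I(Z\!:\!C)=0$ kill every $C$-dependent term, and read off the coefficients. That part is complete and correct.

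For \eqref{eq:mat1fam2} and \eqref{eq:mat2}, however, there is a genuine gap: you correctly observe that a single substitution leaves residual terms in $C$ that cannot simply be dropped (they sit on the left of a ``$\geq 0$'' with the wrong sign for discarding), but you leave the resolution as an open ``main obstacle,'' and the candidate mechanisms you sketch are not the ones that work. The paper does \emph{not} obtain \eqref{eq:mat1fam2} from \eqref{eq:matus2} with $(X,Y,Z,C)$, nor by summing two Mat\'u\v{s}-type inequalities against each other; it takes \eqref{eq:matus1} with $C$ in the \emph{third} slot, $(X_1,X_2,X_3,X_4)=(Y,X,C,Z)$, so that the surviving unobserved-variable content consists only of the entropies $H(C)$, $H(CX)$, $H(CY)$ (after using $I(X\!:\!Y|C)=0$ to remove $H(CXY)$). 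These are then eliminated using two auxiliary inequalities, valid for $C_3^{\cC}$, obtained by Fourier--Motzkin projecting the Shannon-plus-independence cone $\Gamma\left(C_3^{\cC}\right)$ onto the coordinates $\left(H(C),H(X),H(Y),H(Z),H(CX),H(CY),H(XY),H(XZ),H(YZ),H(XYZ)\right)$, namely \eqref{eq:m1} and \eqref{eq:m2}. The derivation of \eqref{eq:mat2} is analogous, starting from \eqref{eq:matus2} with $(X,Y,C,Z)$ and using \eqref{eq:m1} together with \eqref{eq:n1} to remove $H(CX)$, $H(C)$ and $H(CY)$. So the missing idea is that the residual $C$-entropies are cancelled not by a second non-Shannon inequality but by Shannon-and-independence consequences of the causal structure itself, applied with specific positive weights; without identifying those auxiliary inequalities (and the correct slot for $C$), your argument does not establish the second and third families.
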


The proof of this proposition can be found in Appendix~\ref{sec:families}. 

Further families of inequalities can be derived by separately considering different inequalities from a family, e.g.\ the same permutation of~\eqref{eq:matus1} for each $s\in\mathbb{N}$, and combining them with the same Shannon inequalities to obtain new constraints on the marginal scenario by means of the Fourier-Motzkin elimination algorithm. Tighter inequalities are often obtained by combining several permutations of an inequality~\eqref{eq:matus1}.

Combining instances of~\eqref{eq:matus1} for several $s\in\mathbb{N}$
leads to an even larger number of new inequalities, which render many
of the families derived with the previously explained method
redundant. For the few orders $s$ up to which we were able to run our
calculations, the families~\eqref{eq:marginalmatus1}
and~\eqref{eq:mat1fam2} from Proposition~\ref{lemma:matus1} were the
only two for which none of the inequalities were implied by others.
Similar considerations can be applied to \eqref{eq:matus2} (from
which~\eqref{eq:mat2} is derived) and to further families of
inequalities~\cite{Dougherty2011}.

One might imagine that adding genuine five and six variable
inequalities to $\Gamma\left(C_3^{\cC}\right)$ leads to further
entropy inequalities for $C_3^\cC$. It turns out that applying the
five and six variable inequalities from~\cite{Zhang1998,
  Makarychev2002} to five and six variables of the triangle causal
structure respectively does not lead to a tighter outer approximation
to $C_3^{\cC}$ than the inequality from Proposition~\ref{prop:zhangyeung}. This can be shown by expanding the inequalities into a
linear combination of mutual information terms and applying a similar
reasoning to that in the proof of
Proposition~\ref{prop:ingletonperm}. As they are not particularly
instructive, the technical details of these arguments are omitted
here. The same is not known to hold for the inequality derived
in~\cite{Zhang2003}.

\begin{conj}\label{conj:matusconj}
  Infinitely many linear inequalities are needed to characterise
  $\overline{\Gamma^{*}_{\cM}}\left(C_3^{\cC}\right)$.
\end{conj}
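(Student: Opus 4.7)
The plan is to adapt Matúš's proof that $\overline{\Gamma^{*}_{4}}$ is not polyhedral to the marginal cone of the triangle. Since Proposition~\ref{lemma:matus1} already produces three infinite families of valid inequalities for $\overline{\Gamma^{*}_{\cM}}(C_3^{\cC})$ whose coefficients depend on $s$ in a genuinely $s$-dependent way (i.e., the ratios of coefficients do not stabilise as $s$ grows), the natural strategy is to show that no finite subfamily of, say, \eqref{eq:marginalmatus1} suffices. Concretely, I would argue by contradiction: if $\overline{\Gamma^{*}_{\cM}}(C_3^{\cC})$ were described by finitely many linear inequalities $g_1,\ldots,g_N\geq 0$, then there would be some $s_0$ such that every $s>s_0$ member of \eqref{eq:marginalmatus1} is already implied by $g_1,\ldots,g_N$. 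To rule this out it suffices to exhibit, for each $s\in\mathbb{N}$, an entropy vector $v_s\in\overline{\Gamma^{*}_{\cM}}(C_3^{\cC})$ that saturates the $s$-th inequality with equality but strictly violates the $s'$-th for some $s'\neq s$; the resulting curved locus of boundary points cannot be covered by finitely many flat facets.

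The technical heart of the argument is the construction of these saturating vectors. First I would revisit Matúš's original distributions for \eqref{eq:matus1}: for each $s$ he builds (via asymptotic/Cartesian power constructions) a four-variable distribution whose entropy vector lies on the boundary of $\overline{\Gamma^{*}_{4}}$ at the $s$-th facet. The task is then to lift each such distribution to a six-variable distribution $P^{(s)}_{ABCXYZ}$ that is compatible with $C_3^{\cC}$ (i.e., factors as $P_A P_B P_C P_{X|BC} P_{Y|AC} P_{Z|AB}$) and whose marginal entropy vector on $\{X,Y,Z\}$ saturates the corresponding marginal inequality of Proposition~\ref{lemma:matus1}. A plausible way to do this is to identify three of the four Matúš variables with the hidden common causes $A$, $B$, $C$ and to realise the fourth variable as a function of pairs of these sources mediated through $X$, $Y$ or $Z$; the specific identification should follow the Fourier--Motzkin derivation by which \eqref{eq:matus1} projects to \eqref{eq:marginalmatus1}, so that the saturating equalities on the lifted side project to the saturating equalities on the marginal side.

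Second, I would verify rigidity: the saturated marginal vectors $v_s$ must lie in the closure $\overline{\Gamma^{*}_{\cM}}(C_3^{\cC})$ (this is automatic from the construction) but must not be simultaneously extremal for any single linear inequality other than the $s$-th member of the family. Here one would compute the value of each $s'$-th inequality on $v_s$ explicitly and show it is strictly positive for $s'\neq s$; since the coefficients in \eqref{eq:marginalmatus1} grow quadratically in $s$ while the saturation condition for the $s$-th inequality is a linear relation among the $H(\cdot)$'s, a standard calculation should give strict positivity for $s'\neq s$.

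The main obstacle will be this lifting step: Matúš's constructions are delicate asymptotic arguments in the four-variable entropy cone, and there is no a priori guarantee that the saturating vectors he produces can be realised by triangle-compatible six-variable distributions. In particular, the independence constraints $I(A\!:\!BCX)=0$ and permutations are rigid, and forcing the Matúš extremal structure to coexist with them may require a genuinely new family of constructions (possibly built from the inner-approximation techniques developed in Section~\ref{sec:inner} of the paper). If this lifting fails for one of the three families in Proposition~\ref{lemma:matus1}, one could fall back on the other two, or on further families obtainable from \eqref{eq:matus1}--\eqref{eq:matus2} by combining several permutations before marginalising. Failing a direct construction, an alternative route would be to show non-polyhedrality indirectly by proving that the set of extremal rays of $\overline{\Gamma^{*}_{\cM}}(C_3^{\cC})$ has an accumulation point, which would again follow from the existence of a one-parameter curve of boundary points.
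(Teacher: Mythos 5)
The statement you are trying to prove is labelled as a \emph{conjecture} in the paper, and the paper does not prove it: the authors state explicitly that, although Mat\'{u}\v{s}'s curve can be adapted to show that the inequalities~\eqref{eq:marginalmatus1} are mutually independent, they ``were not able to show that this curve can be realised with entropy vectors that are compatible with the triangle causal structure,'' and hence cannot exclude that $\overline{\Gamma^{*}_{\cM}}\left(C_3^{\cC}\right)$ is polyhedral. Your plan reproduces essentially the paper's own evidence and, to your credit, correctly isolates the same missing ingredient --- the lifting of Mat\'{u}\v{s}'s saturating four-variable distributions to six-variable distributions factorising as $P_A P_B P_C P_{X|BC} P_{Y|AC} P_{Z|AB}$. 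But since that step is left as ``the main obstacle'' rather than carried out, your proposal is a research programme, not a proof; the gap it leaves open is precisely the one that makes this a conjecture rather than a theorem.

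Two further points of logic deserve attention. First, your intermediate criterion is incoherent as stated: you ask for a vector $v_s\in\overline{\Gamma^{*}_{\cM}}\left(C_3^{\cC}\right)$ that ``strictly violates'' the $s'$-th valid inequality for some $s'\neq s$, but no point of the cone can violate an inequality that is valid on the cone. What is actually needed is a curve of achievable entropy vectors on which the supporting hyperplanes vary continuously (equivalently, an accumulating family of distinct extremal rays of the true cone), which is exactly what the unproved lifting would supply. Second, even a complete proof that the family~\eqref{eq:marginalmatus1} is irredundant for the \emph{outer approximation} would not establish the conjecture: the true marginal cone sits strictly inside that approximation, and the paper itself warns that the non-polyhedral boundary region might ``not survive the mapping to entropy vectors for $C_3^{\cC}$,'' in which case most of the infinite family would be rendered redundant by some other single inequality. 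Irredundancy of valid inequalities for an outer bound says nothing about the facial structure of the set they bound from outside; only realisability inside the cone does.
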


Our main evidence for this is that the families of
inequalities~\eqref{eq:matus1}, used by Mat\'{u}\v{s} to prove that
the analogue of this conjecture holds for $\overline{\Gamma^{*}_4}$,
lead to infinite families of inequalities for $C_3^{\cC}$ after
marginalising (cf.\ Proposition~\ref{lemma:matus1}).  The curve
constructed by Mat\'{u}\v{s} in Ref.~\cite{Matus2007} to prove his
statement for $\overline{\Gamma^{*}_4}$ can be adapted to our
scenario, which can be used to show that the
inequalities~\eqref{eq:marginalmatus1} are independent.  However, we
were not able to show that this curve can be realised with entropy
vectors that are compatible with the triangle causal structure, and
hence we cannot exclude the possibility that the marginal cone
$\overline{\Gamma^{*}_{\cM}}\left(C_3^{\cC}\right)$ is polyhedral.

The infinite families of inequalities (cf.\
Proposition~\ref{lemma:matus1}) that we obtained from Mat\'{u}\v{s}'s
original family of inequalities may indicate that this region of
entropy space retains a non-polyhedral segment after the causal
constraints are included and the set is projected to the marginal
scenario. However, it could be that non-polyhedral boundary regions do
not survive the mapping to entropy vectors for $C_3^{\cC}$.  If this
were the case then (most of) our infinite set of inequalities would be
rendered redundant by another inequality.

\subsection{Application of non-Shannon inequalities to various causal structures} \label{sec:int}

The concept of a generalised DAG was introduced in~\cite{Henson2014},
the idea being to have a framework in which classical, quantum and
even more general systems can be shared by unobserved nodes.  For the
details, we refer to the original paper. The part that is of interest
here is that the authors of Ref.~\cite{Henson2014} list $21$
generalised DAGs with up to six nodes for which there may be a
separation between the correlations realisable classically and quantum
mechanically, i.e., between $\cP_{\cM}(C^\cC)$ and
$\cP_{\cM}(C^\qQ)$~\cite{Henson2014,Pienaar2016}.\footnote{Note that
  there are further causal structures with $5$ and $6$ nodes that have
  this property, which can all be reduced to these $21$ examples with
  rules specified in~\cite{Henson2014}.}  We analyse
these from an entropic perspective, looking for a causal structure $C$
in which there is a separation between
$\overline{\Gamma^*_\cM}\left(C^{\cC}\right)$ and
$\overline{\Gamma^*_\cM}\left(C^{\qQ}\right)$.
Among these structures there are three that have fewer than six nodes, displayed in Figure~\ref{fig:HensonStructures}. 
\begin{figure}
\centering
\resizebox{0.98 \columnwidth}{!}{%
\begin{tikzpicture}[scale=0.58]
\node (a)  at (-10,1) {$(a)$};
\node[draw=black,circle,scale=0.75]  (Z1) at (-5,-2) {$Y$};
\node[draw=black,circle,scale=0.75]  (Y1) at (-7,-2) {$Z$};
\node[draw=black,circle,scale=0.75]  (X1) at (-9,-2) {$X$};
\node (A1) at (-6,0.5) {$A$};
\node (b) at (-3,1) {$(b)$};
\node[draw=black,circle,scale=0.75]  (W) at (-2,-2) {$W$};
\node[draw=black,circle,scale=0.75]  (X) at (-1,0.5) {$X$};
\node (A) at (0,-2) {$A$};
\node[draw=black,circle,scale=0.75]  (Y) at (1,0.5) {$Y$};
\node[draw=black,circle,scale=0.75]  (Z) at (2,-2) {$Z$};
\node (c)  at (3.5,1) {$(c)$};
\node[draw=black,circle,scale=0.75]  (Y2) at (4.5,0.5) {$Y$};
\node[draw=black,circle,scale=0.75]  (Z2) at (6.5,0.5) {$Z$};
\node[draw=black,circle,scale=0.75]  (X2) at (5.5,-0.75) {$X$};
\node (A2) at (4.5,-2) {$A$};
\node (B2) at (6.5,-2) {$B$};
\node (00) at (7.5,-2) { };
\draw [->,>=stealth] (A1)--(Y1);
\draw [->,>=stealth] (A1)--(Z1);
\draw [->,>=stealth] (Y1)--(Z1);
\draw [->,>=stealth] (X1)--(Y1);
\draw [->,>=stealth] (W)--(X);
\draw [->,>=stealth] (A)--(X);
\draw [->,>=stealth] (A)--(Y);
\draw [->,>=stealth] (Z)--(Y);
\draw [->,>=stealth] (A2)--(X2);
\draw [->,>=stealth] (B2)--(Z2);
\draw [->,>=stealth] (B2)--(X2);
\draw [->,>=stealth] (A2)--(Y2);
\draw [->,>=stealth] (X2)--(Z2);
\draw [->,>=stealth] (X2)--(Y2);
\end{tikzpicture}
}%
\caption{Three causal structures, $C$, for which the outer approximation, $\Gamma_\cM(C^{\cC})$ tightly approximates the classical entropy cone $\overline{\Gamma^{*}_\cM}(C^{\cC})$, which also coincides with $\overline{\Gamma^{*}_\cM}(C^{\qQ})$. The observed variables are labelled $W$, $X$, $Y$ and $Z$, the unobserved nodes are called $A$ and $B$.}
\label{fig:HensonStructures}
\end{figure}
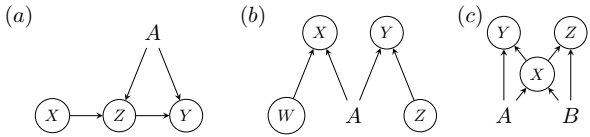
For these three, we find that the vertices of the corresponding
Shannon cone, $\Gamma_\cM(C^{\cC})$, are achievable with entropy
vectors of classical probability distributions compatible with the
causal structure, from which it follows that this cone is equal to the
entropy cone $\overline{\Gamma^*_\cM}(C^{\cC})$. (This can also be
shown by computing an inner approximation to the corresponding entropy
cones and showing that the inner and outer approximations coincide,
e.g.\ by employing linear rank inequalities as outlined in
Section~\ref{sec:inner}.)  Our results also imply that the
consideration of non-Shannon inequalities cannot lead to any further
constraints in these three causal structures.  In the following, we
furthermore show that there is no entropic separation between
classical and quantum versions of these causal structures.

\begin{prop}\label{prop:hen}
Let $C$ be any of the causal structures shown in
Figure~\ref{fig:HensonStructures}.  Then
$\overline{\Gamma^*_\cM}(C^{\cC})=\overline{\Gamma^*_\cM}(C^{\qQ})$.
\end{prop}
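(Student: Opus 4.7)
The easy direction is $\overline{\Gamma^*_\cM}(C^{\cC}) \subseteq \overline{\Gamma^*_\cM}(C^{\qQ})$: any classical distribution is realised quantumly by a state diagonal in a fixed basis, so every classical entropy vector is also a quantum entropy vector. The content of the proposition is the reverse inclusion. My plan is to establish it by a sandwich argument: compute the quantum Shannon-type outer approximation $\Gamma_\cM(C^{\qQ})$ for each of the three structures using the procedure of Section~\ref{sec:quantum_method}, and verify
$$\Gamma_\cM(C^{\qQ}) \subseteq \Gamma_\cM(C^{\cC}).$$
Combined with the fact noted just before the proposition that $\Gamma_\cM(C^{\cC})=\overline{\Gamma^*_\cM}(C^{\cC})$ (each extremal ray of the classical Shannon cone is achieved by an explicit compatible classical distribution), this yields the chain
$$\overline{\Gamma^*_\cM}(C^{\cC}) \subseteq \overline{\Gamma^*_\cM}(C^{\qQ}) \subseteq \Gamma_\cM(C^{\qQ}) \subseteq \Gamma_\cM(C^{\cC}) = \overline{\Gamma^*_\cM}(C^{\cC}),$$
forcing equality throughout.

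To construct $\Gamma_\cM(C^{\qQ})$ for each DAG in Figure~\ref{fig:HensonStructures}, I would first enumerate the coexisting sets, consisting of the observed classical variables together with the systems carried on the outgoing edges of each unobserved quantum node, grouped so that no element is a quantum ancestor of another. On each coexisting set I would impose positivity of von Neumann entropies and of conditional mutual informations, and positivity of the conditional entropy $H(X_S|X_T)\geq 0$ whenever all quantum content of $X_S\cup X_T$ is contained in $X_T$ (so the residual is purely classical). The causal structure is encoded by setting mutual informations of coexisting systems with no shared quantum ancestor to zero, and by the data processing inequalities~\eqref{eq:DPI} linking the outcome at each observed node to the states it is generated from. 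Observed-to-observed edges such as $Y\to Z$ in (a) or $X\to Y$, $X\to Z$ in (c) are handled by treating the classical output at the earlier node as a further classical input at the later one, which preserves coexistence and the usual DPI. A Fourier--Motzkin elimination of all non-observed entropies should then return exactly the inequalities defining $\Gamma_\cM(C^{\cC})$, because none of the quantum-specific features (e.g.\ potential negativity of a conditional quantum entropy) can be exploited once every observed marginal is classical and every relevant conditioning set contains the requisite classical variables.

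The hard part will be the bookkeeping for structure (c), where the two unobserved sources $A$ and $B$ each distribute systems to more than one downstream observed node and $X$ further feeds into both $Y$ and $Z$; the list of coexisting sets, independences and DPIs proliferates, and one must check carefully that the elimination introduces no irredundant inequality beyond those already present classically. For (a) and (b), the argument is substantially easier because the observed inputs ($X$ in (a); $W$ and $Z$ in (b)) are classical by assumption, so conditioning on them immediately reduces the remaining quantum constraints to their classical Shannon analogues.
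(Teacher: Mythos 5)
Your logical skeleton is exactly the paper's: show the classical Shannon approximation $\Gamma_\cM(C^{\cC})$ is tight (extremal rays achieved by explicit compatible distributions), show every quantum-compatible entropy vector also satisfies the inequalities cutting out $\Gamma_\cM(C^{\cC})$, and close the sandwich with the trivial inclusion $\overline{\Gamma^*_\cM}(C^{\cC})\subseteq\overline{\Gamma^*_\cM}(C^{\qQ})$. Where you differ is in how the middle step is executed. You propose to compute the full quantum outer approximation $\Gamma_\cM(C^{\qQ})$ by enumerating coexisting sets and running a Fourier--Motzkin elimination, then checking $\Gamma_\cM(C^{\qQ})\subseteq\Gamma_\cM(C^{\cC})$; the paper instead observes that each cone $\Gamma_\cM(C^{\cC})$ has only one nontrivial facet beyond the three-variable Shannon inequalities and observed independences --- $I(X\!:\!ZY)\le H(Z)$ for structure (a), nothing beyond $I(W\!:\!YZ)=0$ and $I(Z\!:\!WX)=0$ for (b), and $I(Y\!:\!Z|X)\le H(X)$ for (c) --- and verifies each by a short explicit chain of data processing, submodularity, independence of unshared sources, and monotonicity for cq-states. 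The paper's route is more economical and, importantly, it is where the actual mathematical content lives: your assertion that the elimination ``should then return exactly the inequalities defining $\Gamma_\cM(C^{\cC})$'' because ``none of the quantum-specific features can be exploited'' is precisely the claim that needs proof, and in your write-up it remains an IOU backed by a heuristic. (Your plan also proves something slightly stronger than needed --- containment of the whole quantum outer approximation rather than just of $\overline{\Gamma^*_\cM}(C^{\qQ})$ --- which happens to hold here but adds work.) So: correct strategy, same decomposition as the paper, but the decisive step is deferred to a computation you have not carried out; replacing it with the three short inequality derivations would complete the argument.
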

\begin{remark}
  Note that there are causal structures involving up to five variables
  that reduce to those shown in Figure~\ref{fig:HensonStructures}
  under the reduction rules from~\cite{Henson2014}.  Our proof does
  not rule out that these exhibit a classical to quantum
  separation.
\end{remark}
\noindent Further details, including the proof of Proposition~\ref{prop:hen} are given in Appendix~\ref{sec:uptofive}.

The $18$ remaining example causal structures involve six
variables. For all of them we have found that several instances of the
non-Shannon inequality from Proposition~\ref{prop:zhangyeung} lead to
tighter entropic constraints for the classical marginal scenarios than
those listed in~\cite{Henson2014}.

For the causal structures with four observed variables, instances of
this inequality are relevant even without considering the unobserved
nodes. These instances thus hold whether or not the unobserved nodes
are classical or quantum.  Hence, they allow us to tighten the outer
approximations to the sets of achievable entropy vectors in both
cases, in contrast to non-Shannon inequalities that are applied
to unobserved variables classically (for which the quantum analogue is not
known to hold).\footnote{Note that this reasoning is not restricted to
  distinguishing classical and quantum, but, it may also apply to the
  comparison of different causal structures with the same set of
  observed variables. While non-Shannon inequalities derived from
  unobserved variables may lead to a separation between the two causal
  structures, non-Shannon inequalities valid only for the observed
  variables may not.}

The above considerations have not enabled us to show a
separation between the achievable entropy vectors in the classical and
quantum cases, hence we are left with the following open problem.

\begin{prob}\label{prob:oooo}
  Find a causal structure $C$ with a set of observed nodes $\cM$ in
  which the sets $\overline{\Gamma^*_\cM}(C^\cC)$ and
  $\overline{\Gamma^*_\cM}(C^{\qQ})$ are provably different, or show
  that this can never occur.
\end{prob}

\subsection{Application of non-Shannon inequalities with post-selection}\label{sec:post_selected_nonShan}
In the discussion so far we have not considered a related technique
that allows for post-selection on particular outcomes of certain
variables.  The idea of doing this first appeared
in~\cite{Braunstein1988} based on results by Fine~\cite{Fine1982,
  Fine1982a} and was later generalised~\cite{Chaves2012, Fritz2013,
  Chaves2013, Chaves2015, Pienaar2016, Chaves2016}.  We refer
to~\cite{our_review} for an explanation of this technique.

Here we illustrate that non-Shannon inequalities can be used in
combination with post-selection by discussing a specific example
relevant for information causality~\cite{Pawlowski2009}. Information
causality is an information theoretic principle obeyed by classical
and quantum physics but not by general probabilistic theories in which
there are correlations that violate Tsirelson's
bound~\cite{Cirelson1980}, e.g.\ generalized no signalling
theory~\cite{Barrett07}, which allows PR-Boxes as a
resource~\cite{Tsirelson1993, Popescu1994FP}. The principle is stated
in terms of the optimal performance of two parties in a game, which we
describe below, and is quantified in terms of an entropic quantity.

Alice holds two pieces of information\footnote{In general the game is
  formulated for more, but we restrict to two here for simplicity.},
$X_0$ and $X_1$, she can send classical information $Z$ to Bob, who is
later given a message $R$ indicating whether he should guess $X_0$ or
$X_1$.  Bob's guess is denoted $Y$.  Alice and Bob are able to use a
pre-shared resource (depicted as $A$) to help them. The relevant
causal structure of the game is displayed in
Figure~\ref{fig:IC_inner}(a) and it is often analysed after
post-selecting on the value of $R$, which can be done using the
causal structure of Figure~\ref{fig:IC_inner}(b) (note that in the
quantum case the variables $Y_{|R=0}$ and $Y_{|R=1}$ do not coexist,
so it doesn't make sense to consider $\ic^\qQ_{\mathrm{R}}$; instead a
restricted set of entropies needs to be considered -- see later).  A
theory is said to obey information causality if for all pre-shared
resources allowed by the theory,
$I(X_0\! \!   : \! \! Y_{|R=0}) \! + \! I(X_1\! \!  : \! \!  Y_{|R=1}) \leq H(Z)+I(X_0\!   : \!  X_1)$.

A stronger set of entropic constraints for this causal structure were
found in~\cite{Chaves2015}, including the relation
\begin{align}
I(X_0\!   : \!  Z Y_{\rm |R=0}) \! +  \! &I(X_1\!   : \!  Z Y_{\rm |R=1})  \! +  \! I(X_0\!   : \!  X_1 | Z Y_{\rm |R=1}) \nonumber\\ 
&\leq H(Z) +I(X_0 \!   : \!   X_1) \, , \label{eq:IC_known}
\end{align}
which holds for both classical and quantum shared
resources.\footnote{Because the existence of a joint distribution of
  $Y_{\rm | R=0}$ and $Y_{\rm | R=1}$ with appropriate marginals is
  not clear in the quantum case, the two variables have
  to be interpreted as alternatives and are part of different
  coexisting sets. Therefore, the analysis of $\ic_{\rm R}^\cC$ does not carry over to the quantum case, but a separate analysis is required there.}

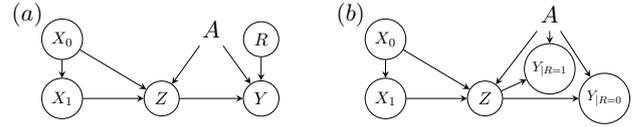
\begin{figure}
\centering 
\resizebox{1\columnwidth}{!}{%
\begin{tikzpicture} [scale=0.8]
\node (A0) at (-8.2,1.5) {$(a)$};
\node[draw=black,circle,scale=0.7] (A1a) at (-7.5,1) {$X_0$};
\node[draw=black,circle,scale=0.7] (A1) at (-7.5,-0.2) {$X_1$};
\node[draw=black,circle,scale=0.75] (A2) at (-5.5,-0.2) {$Z$};
\node[draw=black,circle,scale=0.75] (A3) at (-3.5,-0.2) {$Y$};
\node[draw=black,circle,scale=0.75] (A5) at (-3.5,1) {$R$};
\node (A4) at (-4.5,1.2) {$A$};
\draw [->,>=stealth] (A1)--(A2);
\draw [->,>=stealth] (A1a)--(A2);
\draw [->,>=stealth] (A2)--(A3);
\draw [->,>=stealth] (A4)--(A2);
\draw [->,>=stealth] (A4)--(A3);
\draw [->,>=stealth] (A5)--(A3);
\draw [->,>=stealth] (A1a)--(A1);

\node (B0) at (-1.7,1.5) {$(b)$};
\node[draw=black,circle,scale=0.7] (B1a) at (-1,1) {$X_0$};
\node[draw=black,circle,scale=0.7] (B1) at (-1,-0.2) {$X_1$};
\node[draw=black,circle,scale=0.75] (B2) at (1,-0.2) {$Z$};
\node (B4) at (2.3,1.5) {$A$};
\node[draw=black,circle,scale=0.6] (B3) at (3.4,-0.2) {$Y_{|R=0}$};
\node[draw=black,circle,scale=0.6] (B3a) at (2.3,0.4) {$Y_{|R=1}$};

\draw [->,>=stealth] (B1)--(B2);
\draw [->,>=stealth] (B1a)--(B2);
\draw [->,>=stealth] (B1a)--(B1);
\draw [->,>=stealth] (B4)--(B2);
\draw [->,>=stealth] (B4)--(B3);
\draw [->,>=stealth] (B4)--(B3a);
\draw [->,>=stealth] (B2)--(B3);
\draw [->,>=stealth] (B2)--(B3a);
\end{tikzpicture}
}%
\caption{(a) Causal structure underlying the Information Causality
  game, $\ic$. Alice holds a database, here made up of two bits $X_0$
  and $X_1$. These need not be independent, which is expressed by a
  potential causal influence from $X_0$ to $X_1$. She is then allowed
  to send a message $Z$ to Bob, who, depending on which bit $R$ a
  referee asks for, takes a guess $Y$ of either $X_0$ or $X_1$. Alice
  and Bob may have shared some resources (represented by $A$) before
  performing the protocol, either some classical randomness, a quantum
  system, or a resource from a more general non-signalling theory, which Alice may
  use in order to choose her message and Bob may use to make his
  guess.  (b) The effective causal structure of the Information
  Causality game after post-selecting on binary $R$, labelled $\icpost$. This
  causal structure shares some of its marginal distributions with
  conditional distributions of $\ic$, i.e., if we use $P$ for the
  distribution in $\icpost$ and $Q$ for that in $\ic$ then
  $P_{X_0 X_1 Z Y_{| R=r}}=Q_{X_0 X_1 Z Y | R=r}$ for $r=0,1$.}
\label{fig:IC_inner}
\end{figure}

We show that using non-Shannon inequalities leads to a tighter outer
approximation of the information causality scenario in the case of a
classical shared resource.  Considering just the inequality from
Proposition~\ref{prop:zhangyeung} (and permutations) has led us to
derive a total of $265$ classes of entropy inequalities, including the
$52$ classes that were obtained without non-Shannon constraints
in~\cite{Chaves2015} (a list of all $265$ classes together with the
number of representatives of each class is available as Supplementary
Information). Moreover, we expect further non-Shannon inequalities to
lead to numerous additional constraints potentially rendering our
inequalities redundant. In principle, infinite families of
inequalities, similar to those found in Proposition~\ref{lemma:matus1}
for the triangle scenario could also be derived here.

In the quantum case, we can only apply the non-Shannon inequalities to
the two coexisting sets of exclusively classical variables
$\left\{ X_0, X_1, Z , Y_{\rm | R=0} \right\}$ and
$\left\{ X_0, X_1, Z , Y_{\rm | R=1} \right\}$, which means that we
can impose a set of $24$ additional constraints (including
permutations) just by adding all permutations of the inequality from
Proposition~\ref{prop:zhangyeung} to the outer approximation that is
obtained without these (no further variable elimination is required).
 
It is worth pointing out that although our results (in the form of new
inequalities) imply that previous entropic characterisations of
$\icpost$ were not tight, the inequality~\eqref{eq:IC_known} is not
rendered redundant by our new inequalities.

\section{Inner approximations to the entropy cones of causal structures}\label{sec:inner}

To complement the outer approximations, it is sometimes useful to
consider inner approximations to the entropy cones of causal
structures. This is particularly useful when one can show that inner
and outer approximations coincide, as they then identify the actual
boundary of the entropy cone. Examples for this are the three causal
structures of Figure~\ref{fig:HensonStructures}, also discussed in the
previous section.  Hence, inner and outer approximation together serve
as a relatively simple means to identify the boundary of certain
entropy cones. Such findings also immediately imply that non-Shannon
inequalities are irrelevant for improving on the outer approximation
to the entropy cone for the causal structure in question.

Furthermore, we can often find inner approximations that share
extremal rays with the outer approximations derived from the Shannon
and independence constraints (even when the two do not coincide). They
hence allow us to identify the regions of entropy space where our
approximations are tight and those regions where there is a gap
between inner and outer approximation.\footnote{Such a comparison of
  inner and outer approximations can be performed for the entropy cone
  of a causal structure including its unobserved variables, i.e.,
  before marginalisation, as well as for the respective approximations
  to its marginal cone, which we are mainly interested in here.}  Such
a gap can be explored, e.g. by using non-Shannon inequalities, as was
explained in the previous section.

Inner approximations also serve as a tool to decide whether entropy
vectors are suitable for certifying the unattainability of particular
distributions that are suspected not to be achievable within the
causal structure at hand. If such a distribution leads to an entropy
vector within an inner approximation to the entropy cone in question,
this means either that the distribution is in fact achievable within
the causal structure or that the causal structure allows for another
distribution with the same entropy vector (or an arbitrarily good
approximation of such).  Hence, to determine whether the distribution
in question is achievable, switching to a more fine-grained method
(see for example~\cite{Pienaar2016, Wolfe2016}) is
necessary.

In the following we show how inner approximations can be found in
different scenarios.

\subsection{Techniques to find inner approximations for causal structures with up to five observed variables}
For a causal structure, $C$, that involves a total of four or five
variables, inner approximations to its entropy cone can be derived
from $\Gamma_4^{I}$ or $\Gamma_5^{I}$ respectively (as defined in
Section~\ref{sec:classicalcone}) combined with the conditional
independence constraints of $C^{\cC}$, which together constrain a cone
$\Gamma^{I}\left(C^{\cC}\right)$. An inner approximation to the
corresponding marginal scenarios,
$\Gamma^{I}_{\cM}\left(C^{\cC}\right)$, is then obtained from
$\Gamma^{I}\left(C^{\cC}\right)$ with a Fourier-Motzkin elimination,
like for outer approximations. It is guaranteed that
$\Gamma_\mathcal{M}^{I}\left(C^{\cC}\right)$ is an inner approximation
to $\overline{\Gamma_\mathcal{M}^{*}}\left(C^{\cC}\right)$, as it is a
projection of an inner approximation
$\Gamma^{I}\left(C^{\cC}\right) \subseteq
\overline{\Gamma^{*}}\left(C^{\cC}\right)$.
Hence, inner approximations can be straightforwardly computed for such
causal structures. Examples where this applies are the three causal
structures of Figure~\ref{fig:HensonStructures}.

\begin{example}[Inner approximation to the instrumental
  scenario.] \label{example:innerI} For the classical instrumental
  scenario, $C_{\rm I}$ of Figure~\ref{fig:HensonStructures}(a), we
  can compute an inner approximation by adding the conditional
  independence constraints $I(A\!   : \!  X)=0$ and $I(X\!   : \!  Y|AZ)=0$ to the
  Ingleton cone $\Gamma^I_4$, as prescribed above. We can, however,
  also directly prove that
  $\overline{\Gamma^{*}_{\cM}}\left(C_{\rm
      I}^{\cC}\right)=\Gamma_{\cM}\left(C_{\rm I}^{\cC}\right)$
  by showing that all permutations of the Ingleton inequality are implied by Shannon and
  conditional independence constraints and, hence, inner and outer
  approximations coincide for $C_{\rm I}^{\cC}$.  Since $I(A\!   : \!  X)=0$,
  $I_{\rm ING}\left(A,X; \! Y,Z \right) \geq 0$ is immediately implied by Shannon and
  independence constraints. Furthermore, the rewritings of $I_{\rm ING}$
  according to \eqref{eq:rewritings} imply that $I(A\!   : \!  X)=0$
  which (together with the Shannon inequalities) implies all permutations of the
  Ingleton inequality except for
  $I_{\rm ING}\left(Y,Z; \! A,X \right) \geq 0$. We can rewrite
\begin{align*}
&I_{\rm ING}\left(Y,Z; \! A,X \right)\\
&= \! I(Y\! \!  : \! \!  Z|A) \!   + \!   I(Y\! \!  : \! \!  X|Z) \!   + \!   I(X\! \!  : \! \! A|Y) \!   - \!   I(X\! \!  : \! \! Y|A) \\
&= \! I(Y\! \!  : \! \! X|Z) \!   + \!   I(X\! \!  : \! \!  A|Y) \!   + \!   I(Y\!  \! : \! \!  Z|A \! X) \! \!  - \! \!  I(X\! \!  : \! \! Y|A \! Z),
\end{align*}
the positivity of which is hence implied by the Shannon inequalities and the independence constraint ${I(X\!   : \!  Y|AZ)=0}$.
\end{example}

Implementing all relevant linear rank inequalities of four and five
variables (which includes their permutations and the application of
the Ingleton inequality to each four variable subset as well as
grouping several variables to one)~\cite{Dougherty2009} and then
performing a variable elimination may be impractical and
computationally challenging for certain causal structures.
Furthermore, for causal structures that involve more than five nodes
not all possible linear rank inequalities are known and their number
may even be infinite~\cite{Dougherty2014}. It is therefore useful to
derive inner approximations by other methods.  For a causal structure,
$C$, the following methods are examples of how to derive inner
approximations, $\Gamma_{\mathcal{M}}^{I}\left(C^{\cC}\right)$:
\begin{itemize}
\item Construct (random) entropy vectors from distributions compatible with $C^{\cC}$ and take their convex hull.
\item Take the vertices of $\Gamma_{\mathcal{M}}\left(C^{\cC}\right)$ that are reproducible with distributions compatible with the causal structure, their convex hull is an inner approximation. 
\item Take the outer approximation to the classical causal structure, $\Gamma\left(C^{\cC}\right)$, as a starting point and add a manageable number of linear rank inequalities to derive further constraints. These inequalities may be employed either before or after marginalising, which leads to different cones.\footnote{If for instance all linear rank inequalities in up to $k$ observed variables are added after marginalisation, the resulting cone corresponds to the intersection $\Gamma_{\mathcal{M}}\left(C^{\cC}\right) \cap \Gamma^{I}_{k}$, where $k$ is the number of observed variables.} The convex hull of the reproducible rays is an inner approximation.
\end{itemize}

For the three examples of Figure~\ref{fig:HensonStructures} it is rather straightforward to recover all extremal rays of the outer approximation to the marginal scenario, $\Gamma_{\mathcal{M}}\left(C^{\cC}\right)$ (cf.\ also Appendix~\ref{sec:uptofive}), i.e., the second method above is effective.

Overall, we found that whenever the extremal rays are not all straightforwardly recovered, the third method is effective. This is our preferred technique because by starting out with extremal rays of the Shannon cone we obtain approximations that in some regions are already tight (as opposed to the first method), and, at the same time adding linear rank inequalities helps us identify those extremal rays that are likely to be reproducible with distributions in $C^{\cC}$ (this may help us avoid dropping reproducible rays in some situations). 
The entropy cones obtained in this way are not necessarily inner approximations, and, if they are, they have to be proven as such, for example by explicitly constructing distributions that reproduce entropy vectors on each of the extremal rays (as with the second method above). However, in all our examples this method allowed us to recover a cone of which all extremal rays were easily seen to be reproducible after adding only few linear rank inequalities to $\Gamma\left(C^{\cC}\right)$. 
(If this were not the case one could still drop several irreproducible rays from the resulting cones to obtain an inner approximation.) 
The method is illustrated in the example below.

We also remark here that in order to improve on inner approximations obtained with the second or third method above, the first method is applicable.

\begin{example}\label{example:IC_inner}
  Consider the classical causal structure of
  Figure~\ref{fig:IC_inner}(a) and remove the node $R$ to give a
  $5$-variable causal structure, $\hat{\rm IC}^{\cC}$. We can in
  principle consider all linear rank inequalities of five random
  variables combined with all Shannon inequalities and the conditional
  independence constraints, which would give us an inner
  approximation, $\Gamma^{I}_{\cM}\left(\hat{{\rm IC}}^{\cC}\right)$,
  to the entropy cone,
  $\overline{\Gamma^{*}_{\cM}}\left(\hat{\rm IC}^{\cC}\right)$.  This
  procedure would involve a (impractically) large number of
  inequalities.

  Instead, we can consider the outer approximation in terms of Shannon
  inequalities and conditional independence constraints,
  $\Gamma_{\cM}\left(\hat{\rm IC}^{\cC}\right)$, and intersect this
  cone with the Ingleton cone for the four observed variables,
  $\Gamma^{I}_4$, i.e., we add all permutations of the Ingleton
  inequality for the four observed variables to
  $\Gamma_{\cM}\left(\hat{\rm IC}^{\cC}\right)$.  This is easily
  obtained but does not result in any restrictions beyond those of the
  Shannon outer approximation, which is characterised by $52$ extremal
  rays.

  Adding the Ingleton inequality for all subsets of four out of the
  five random variables to $\Gamma\left(\hat{\rm IC}^{\cC}\right)$
  before performing the variable elimination, only $46$ extremal rays
  are recovered. These are straightforward to reproduce with entropy
  vectors in $\hat{\rm IC}^{\cC}$.\footnote{We can show that the $6$
    extremal rays of the Shannon cone that are not part of this inner
    approximation are not achievable in $\hat{\rm IC}^{\cC}$, because they
    violate the entropy inequalities we obtain when taking non-Shannon
    inequalities into account in the computation of the outer
    approximations to
    $\overline{\Gamma^{*}_{\cM}}\left(\hat{\rm IC}^{\cC}\right)$.} A
  detailed exposition of this is presented in
  Appendix~\ref{sec:example}.
\end{example}

This method can also be applied to causal structures with more than five variables. 
For the 
first few causal structures
from~\cite{Henson2014} we have recovered inner approximations by adding the Ingleton inequality to $\Gamma_{\cM}\left(C^{\cC}\right)$, i.e., by taking the intersection $\Gamma_{\cM}\left(C^{\cC}\right) \cap \Gamma_k^{I}$ (the extremal rays as well as distributions
recovering entropy vectors on each of them are available
as Supplementary Information). 

In the following we give a detailed analysis of the inner approximation to the triangle causal structure and compare this to the outer approximations presented in previous sections.

\subsection{Example: Inner approximation to $\overline{\Gamma^{*}_\mathcal{M}}\left(C_3^{\cC}\right)$}
Here, we derive an inner approximation to the entropy cone compatible with $C_3^{\cC}$. An inner approximation to $\overline{\Gamma^{*}_6}$ in terms of linear rank inequalities is not available (see also Section~\ref{sec:classicalcone}).
Nonetheless, we are able to derive an inner approximation to
$\overline{\Gamma^{*}_{\cM}}\left(C_3^{\cC}\right)$ by relying on
Ingleton's inequality. In the following, we apply
\eqref{eq:ingletoninequ} to any subset of four out of the six random
variables of $C_3^{\cC}$ and take all their permutations into account.
We concisely write these inequalities in a matrix $M_\mathrm{I}$ and
consider the cone
\begin{equation*}
\Gamma^\mathrm{I}\! \left(C_3^{\cC}\right) \! := \! \left\{v \in \Gamma_6 | M_\mathrm{CI}\! \left(C_3^{\cC}\right) \cdot v = 0, \ M_\mathrm{I} \cdot v \geq 0 \right\} \! .
\end{equation*}
When marginalising this cone we obtain
\begin{equation*}
\Gamma_\mathcal{M}^\mathrm{I}\left(C_3^{\cC}\right):= \left\{w \in \Gamma_3 | M_{\mathrm{I},\mathcal{M}}\left(C_3^{\cC}\right) \cdot w \geq 0 \right\},
\end{equation*}
where $M_{\mathrm{I},\mathcal{M}}\left(C_3^{\cC}\right)$ contains only
one inequality,\footnote{Inequality~\eqref{eq:intinfo} renders the
  three Shannon inequalities of the form $I(X\!   : \!  Y|Z)\geq 0$
  redundant. $\Gamma_\mathcal{M}^\mathrm{I}\left(C_3^{\cC}\right)$ is
  thus fully characterised by the six remaining three variable Shannon
  inequalities (constraining $\Gamma_3$) and \eqref{eq:intinfo}. }
\begin{equation}\label{eq:intinfo}
-I(X\!   : \!  Y\!   : \!  Z) \geq 0.
\end{equation}
This relation can also be analytically derived from the Ingleton
inequality and the conditional independence constraints of
$C_3^{\cC}$. \footnote{The proof proceeds as follows. There are only
  three instances of the Ingleton inequality that are not implied by
  the conditional independences and the Shannon inequalities (cf.\
  also Proposition~\ref{prop:ingletonperm}). The independence
  constraint ${I(X\!   : \!  Y|C)=0}$ and its permutations ${I(X\!   : \!  Z|B)=0}$ and
  ${I(Y\!   : \!  Z|A)=0}$ lead to \eqref{eq:intinfo} in all three cases.}

\begin{prop} \label{prop:triangleingleton}
$\Gamma_\mathcal{M}^\mathrm{I}\left(C_3^{\cC}\right)$ is an inner approximation to the marginal entropy cone of the triangle causal structure,  
\begin{equation*} 
\Gamma_\mathcal{M}^\mathrm{I}\left(C_3^{\cC}\right)\subsetneq \overline{\Gamma^*_\mathcal{M}}\left(C_3^{\cC}\right).
\end{equation*}
\end{prop}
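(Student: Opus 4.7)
The plan is to establish both the containment $\Gamma_\mathcal{M}^\mathrm{I}(C_3^\cC) \subseteq \overline{\Gamma^*_\mathcal{M}}(C_3^\cC)$ and its strictness.

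For the inclusion, I exploit the fact that $\overline{\Gamma^*_\mathcal{M}}(C_3^\cC)$ is a convex cone for classical causal structures (noted earlier in the paper), so it suffices to verify that every extremal ray of the polyhedral cone $\Gamma_\mathcal{M}^\mathrm{I}(C_3^\cC) = \Gamma_3 \cap \{I(X\!:\!Y\!:\!Z) \leq 0\}$ is realised by some classical distribution compatible with $C_3^\cC$. First I would enumerate these extremal rays—either by direct polyhedral computation, or by taking the known rays of the three-variable Shannon cone $\Gamma_3$ and handling those that violate $I(X\!:\!Y\!:\!Z) \leq 0$ by intersecting with the new facet. For each such ray I would then exhibit an explicit $P_{XYZ} \in \mathcal{P}_\mathcal{M}(C_3^\cC)$ realising it; simple building blocks suffice, e.g.\ the XOR construction $(X,Y,Z) = (B \oplus C,\, A \oplus C,\, A \oplus B)$ with $A,B,C$ independent uniform bits realises the symmetric ray $(1,1,1,2,2,2,2)$, while letting each observed variable copy a single parent source or be constant covers asymmetric and lower-dimensional rays. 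By convexity, any element of $\Gamma_\mathcal{M}^\mathrm{I}(C_3^\cC)$ is a non-negative combination (or limit thereof) of such achievable vectors, hence itself lies in $\overline{\Gamma^*_\mathcal{M}}(C_3^\cC)$.

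For strictness, I would exhibit a distribution in $\mathcal{P}_\mathcal{M}(C_3^\cC)$ whose entropy vector violates $I(X\!:\!Y\!:\!Z) \leq 0$. The key observation is that, as the footnote preceding the proposition shows, the inequality $I(X\!:\!Y\!:\!Z) \leq 0$ is derived purely from Ingleton together with the conditional independences of $C_3^\cC$, and Ingleton is known to fail on certain classical distributions. The plan is therefore to embed a non-Ingleton distribution—for example one coming from a non-abelian group $G$ with suitably chosen subgroups, or a construction of the type used to show $\Gamma_4^I \subsetneq \overline{\Gamma^*_4}$—into $C_3^\cC$ so that the independent sources $A, B, C$ and the channels $P_{X|BC},\, P_{Y|AC},\, P_{Z|AB}$ are correctly respected, while the marginal on $(X, Y, Z)$ has $I(X\!:\!Y\!:\!Z) > 0$.

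The main obstacle is the strictness step. Natural linear constructions in $C_3^\cC$—XORs, projections, and concatenations of the sources—appear always to yield $I(X\!:\!Y\!:\!Z) \leq 0$, mirroring the validity of Ingleton for linear codes. The real work is to produce a genuinely non-linear compatible distribution that both respects the strict factorisation $P_A P_B P_C P_{X|BC} P_{Y|AC} P_{Z|AB}$ and delivers strictly positive triple information on the observed marginals. The inclusion step, by contrast, reduces to a finite and essentially routine enumeration once the extremal rays of $\Gamma_\mathcal{M}^\mathrm{I}(C_3^\cC)$ are listed.
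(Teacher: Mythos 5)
Your inclusion argument is exactly the paper's: $\Gamma_\mathcal{M}^\mathrm{I}\left(C_3^{\cC}\right)$ has just seven extremal rays (the symmetric ray through $(1,1,1,2,2,2,2)$ plus the permutation classes of $(0,0,1,0,1,1,1)$ and $(0,1,1,1,1,1,1)$), and each is realised by precisely the sort of building block you describe (an XOR of two sources for the symmetric ray, a single copied source or constants for the rest), after which convexity of $\overline{\Gamma^*_\mathcal{M}}\left(C_3^{\cC}\right)$ finishes the containment. That part is fine, if left as a routine enumeration.

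The strictness step, however, is a genuine gap: you correctly diagnose that a non-linear compatible distribution with $I(X\!:\!Y\!:\!Z)>0$ is needed, but you stop at a plan ("embed a non-abelian-group / Ingleton-violating distribution") without producing one, and you substantially overestimate the difficulty. No exotic group construction is required. Take $A$, $B$, $C$ to be independent uniform bits and set $X=\operatorname{AND}(B,C)$, $Y=\operatorname{AND}(A,C)$, $Z=\operatorname{OR}(A,B)$; this manifestly respects the factorisation $P_AP_BP_CP_{X|BC}P_{Y|AC}P_{Z|AB}$, and a direct computation gives $I(X\!:\!Y\!:\!Z)\approx 0.04>0$, so its entropy vector lies in $\overline{\Gamma^*_\mathcal{M}}\left(C_3^{\cC}\right)$ but not in $\Gamma_\mathcal{M}^\mathrm{I}\left(C_3^{\cC}\right)$. (Note that because $I(X\!:\!Y\!:\!Z)\le 0$ is derived from Ingleton together with $I(X\!:\!Y|C)=0$ and $I(Z\!:\!C)=0$, this simple AND/OR example automatically furnishes an Ingleton violation on $(X,Y,Z,C)$ — the "hard" object you were searching for is hiding in plain sight.) Without some explicit witness of this kind the proposition's strict inclusion is not established.
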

The proof of Proposition~\ref{prop:triangleingleton} is deferred to Appendix~\ref{sec:inner_appendix}. $\Gamma_\mathcal{M}^\mathrm{I}\left(C_3^{\cC}\right)$ provides a certificate for vectors to be realisable as entropy vectors in $C_3^{\cC}$: if a vector $v \in \mathbb{R}^{7}$ obeys all Shannon constraints as well as~\eqref{eq:intinfo}, then it lies in
$\overline{\Gamma_\mathcal{M}^*}\left(C_3^{\cC}\right)$. Compared to
the different outer approximations to
$\overline{\Gamma_\mathcal{M}^*}\left(C_3^{\cC}\right)$ analysed in
Section~\ref{sec:non_shannon} the hyperdimensional solid angle for
this inner approximation is considerably smaller. Sampling over the
unit hypersphere around ${\bf 0}$ in
$\mathbb{R}^7$ as before (meaning $3.2\times10^9$ samples), we obtain $\alpha_I=(2.147\pm0.008)\times10^{-5}$.

It is worth emphasising that not all correlations whose entropy
vectors lie in $\Gamma_\mathcal{M}^\mathrm{I}\left(C_3^{\cC}\right)$
can be realised in $C_3^{\cC}$.  Instead, if ${\bf
  H}(P)\in\Gamma_\mathcal{M}^\mathrm{I}$ then there exists
$P'\in\cP_\cM(C_3^{\cC})$ such that ${\bf H}(P')={\bf H}(P)$. The correlations of Figure~\ref{fig:fritzproof}, realised in the quantum version of the triangle causal structure, $C_3^{\qQ}$, which will be considered in detail in Section~\ref{sec:quantcorr}, are one such example.
\label{sec:w_correl}
These are not in $\cP_\cM(C_3^{\cC})$, but their entropy
vector nevertheless satisfies~\eqref{eq:intinfo}. Our argument
implies that there must be another distribution realisable in $C_3^{\cC}$ with the same entropy vector.\footnote{Another such example is given by the distribution
$$P_{XYZ}(x,y,z)=\left\{\begin{array}{cl}\frac{1}{3}&\text{ }
    (x,y,z)=(1,0,0),(0,1,0),(0,0,1)\\
    0&\text{ otherwise,}\end{array}\right.$$ which is not compatible
with $C_3^{\cC}$, as shown in~\cite{Wolfe2016}.}

\section{Non-Shannon inequalities in the quantum and hybrid triangle causal structures} \label{sec:quantumtriangle} 

In this section, we compare classical and quantum versions of the triangle
causal structure (the distinction reflecting the nature of the
unobserved nodes). We also consider hybrid scenarios, in which some of the unobserved systems are restricted to be classical
while others are quantum.  These turn out to be insightful for
understanding the gap between classical and quantum causal structures.
We also analyse whether non-Shannon inequalities lead to improved entropic
characterisations in these cases.

\subsection{Quantum triangle scenario}\label{sec:quantcorr}
It was first shown in Ref.~\cite{Fritz2012}, that there are joint distributions among the three observed variables $X$, $Y$ and $Z$ in $C_3^{\qQ}$ that cannot be reproduced in $C_3^{\cC}$, based on the CHSH scenario (see Figure~\ref{fig:fritzproof} and Appendix~\ref{sec:fritzproof} for the details).
Hence $C_3^{\qQ}$ might also lead to a larger set of compatible entropy vectors than $C_3^{\cC}$.

Entropically, $C_3^{\qQ}$ can be analysed with the technique outlined
in Section~\ref{sec:quantum_method}. An outer approximation,
$\Gamma_\mathcal{M}\left(C_3^{\qQ}\right)$, to the set of achievable
entropy vectors,
$\overline{\Gamma_\mathcal{M}^{*}}\left(C_3^{\qQ}\right)$, was
constructed in~\cite{Chaves2015}. It led to the Shannon
inequalities for the jointly distributed $X$, $Y$ and $Z$ and the
additional inequality
\begin{equation}
I(X\!   : \!  Y) + I(X\!   : \!  Z) \leq H(X),\label{eq:fritzhenson}
\end{equation}
as well as its permutations in $X$, $Y$ and $Z$~\cite{Chaves2015}.

It is natural to ask whether tighter approximations to
$\overline{\Gamma_\mathcal{M}^{*}}\left(C_3^{\qQ}\right)$ can be
realised by a similar procedure to the one that led to tighter
approximations in the classical case.  Unfortunately, we don't know of
any similar inequalities for the von Neumann entropy of multi-party
quantum states.  Furthermore, even if the known non-Shannon
inequalities were to hold for von Neumann entropy we would not be able
to use them to add constraints to $C_3^{\qQ}$ due to the lack of large
enough sets of coexisting, interdependent variables.\footnote{Note
  that in causal structures that involve four or more observed
  variables non-Shannon inequalities can be applied to these. However,
  non-Shannon inequalities cannot be applied to unobserved quantum
  systems (see also Section~\ref{sec:post_selected_nonShan}).}

\begin{prob}\label{op:1}
Do the closures of the sets of compatible entropy vectors coincide in the classical and the quantum triangle scenario, i.e., does $\overline{\Gamma^{*}_{\mathcal{M}}}\left(C_3^{\cC}\right)=\overline{\Gamma^*_{\mathcal{M}}}\left(C_3^{\qQ}\right)$ hold?
\end{prob}
Note that if this were to be answered in the affirmative, it would
point towards deficiencies of the current entropic techniques for
approximating
$\overline{\Gamma^*_{\mathcal{M}}}\left(C_3^{\qQ}\right)$, which are
not able to recover any additional inequalities similar to the
non-Shannon inequalities found in the classical case.

One way to solve this problem would be to find an entropy vector
compatible with $C_3^{\qQ}$ that lies outside one of our outer
approximations to
$\overline{\Gamma^{*}_{\mathcal{M}}}\left(C_3^{\cC}\right)$. Random
searches where the sources $A$, $B$ and $C$ distribute up to four
qubits each did not yield violations.  However, the evidence from
these random searches against a separation of the classical and the
quantum sets is relatively weak.  For one, our classical outer
approximations might be so loose that they contain
$\overline{\Gamma^*_{\mathcal{M}}}\left(C_3^{\qQ}\right)$.  To counter
this, we have attempted to randomly search for vectors that lie in
$\overline{\Gamma^*_{\mathcal{M}}}\left(C_3^{\qQ}\right)$ but not in
the classical inner approximation
$\Gamma_\mathcal{M}^\mathrm{I}\left(C_3^{\cC}\right)$.  In spite of
the fact that we know such vectors exist, we were unable to randomly
find any.  This shows the weakness of random searching, and also that
the region we are looking for (if it even exists) is small with
respect to our sampling measure.\footnote{This is not a statement
  about the geometric extent of this region (for instance in terms of
  a hyperdimensional solid angle as was previously considered for
  inner and outer approximations to
  $\overline{\Gamma^*_{\mathcal{M}}}\left(C_3^{\cC}\right)$). Instead,
  since we are sampling quantum states here, and since these are not
  in a one-to-one correspondence with the entropy vectors, it is a
  statement about the fraction of states and measurements that may
  produce entropy vectors outside
  $\Gamma_\mathcal{M}^\mathrm{I}\left(C_3^{\cC}\right)$ (in low
  dimensions) according to our sampling distribution. This must be a
  very small proportion of states and measurements (we didn't sample
  any).  Note also that if there is a gap between
  $\overline{\Gamma^*_{\mathcal{M}}}\left(C_3^{\cC}\right)$ and
  $\overline{\Gamma^*_{\mathcal{M}}}\left(C_3^{\qQ}\right)$ it is
  smaller than that between
  $\Gamma^{\rm I}_{\mathcal{M}}\left(C_3^{\cC}\right)$ and
  $\overline{\Gamma^*_{\mathcal{M}}}\left(C_3^{\qQ}\right)$. Hence,
  constructing a vector in the first gap by sampling quantum states is
  even more difficult than for the second.}

A natural candidate for an entropy vector that might violate some of our classical inequalities
is the one corresponding to the CHSH correlations that were shown not to be reproducible in $C_3^{\cC}$ in Ref.~\cite{Fritz2012} (detailed in Figure~\ref{fig:fritzproof} where
$Z=(A',B')$ and in Appendix~\ref{sec:fritzproof}). However, the corresponding entropy vector lies inside $\Gamma_{\mathcal{M}}^{I}\left(C_3^{\cC}\right)$ so is classically
reproducible.  This particular distribution is also achievable in the
causal structure $P_4$ (a causal structure equivalent to the one in
Figure~\ref{fig:HensonStructures}(b)).  Any distribution compatible
with $P_4$ may be realised in $C_3$ by choosing one of the variables,
e.g.\ $Z$, to have two outputs, one depending only on the input from
node $A$ and the other one depending on the input from $B$.
Distributions realisable in $P_4^{\qQ}$ or $P_4^{\cC}$ are thus always
realisable in $C_3^{\qQ}$ or $C_3^{\cC}$ respectively.  According to
the results of~\cite{pnpaper}, all entropy vectors realised with
distributions in $P_4^{\qQ}$ are also classically achievable, i.e.,
realisable in $P_4^{\cC}$ (at least asymptotically).
Hence, no distribution in $P_4^{Q}$ can violate any of the classical entropy inequalities valid for $C_3^{\cC}$. A way that might still allow us to use our knowledge about quantum correlations that are not classically reproducible in the Bell scenario to violate our entropic constraints to $\overline{\Gamma^*_{\mathcal{M}}}\left(C_3^{\cC}\right)$, is by processing the inputs to all three nodes $X$, $Y$ and $Z$, so as to get around the results from~\cite{pnpaper}.\footnote{Two distributions that share the same entropy vector can
  be very different and hence may be separated by local processing.}

In the following, we generalise the distribution that was utilised in Ref.~\cite{Fritz2012} to show that there is a separation between the achievable distributions in $C_3^{\cC}$ and $C_3^{\qQ}$, to a scenario where there is local processing at each output node. This also allows us reduce the required dimension of the output at $Z$ for which one can provably detect a difference between classical and quantum distributions from two bits to one bit.
\begin{prop} \label{prop:andextension}
There are non-classical quantum correlations in $C_3$ in the case where $X$ and $Y$ output two bits each while $Z$ outputs only one. 
\end{prop}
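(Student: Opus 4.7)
The plan is to adapt the construction of Fritz~\cite{Fritz2012}, in which $Z$ carried both CHSH setting bits, by relocating each setting bit into its own party's two-bit output and letting $Z$ publish only a single, carefully chosen bit of joint information about the two settings. The design job of the new $Z$ will be to force, in any classical triangle realisation, each setting bit to be a function of its own source alone.

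\textbf{Quantum realisation.} The proposal is: source $C$ distributes an EPR pair between $X$ and $Y$; sources $A$ and $B$ each produce an independent uniform bit $A'$ and $B'$, with $A$ sending $A'$ to both $Y$ and $Z$ and $B$ sending $B'$ to both $X$ and $Z$. At node $X$, perform the Tsirelson-optimal CHSH measurement on $C$'s qubit using $B'$ as the setting, record the outcome $x$, and publish $X=(B',x)$; symmetrically set $Y=(A',y)$. At $Z$ publish the single bit $Z = A' \wedge B'$. This protocol respects the edges of $C_3^{\qQ}$, and the induced $P(X_2, Y_2 \mid X_1, Y_1)$ is, by design, the quantum CHSH distribution achieving correlator $S = 2\sqrt{2}$.

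\textbf{Classical impossibility.} Suppose for contradiction that $P_{XYZ} \in \cP_\cM(C_3^{\cC})$. Absorbing any local randomness into the source variables, there are independent classical $A, B, C$ and deterministic functions $f=(f_1,f_2)$, $h=(h_1,h_2)$, $g$ with $X=f(B,C)$, $Y=h(A,C)$, $Z=g(A,B)$. The almost-sure identity $Z = X_1 \wedge Y_1$ reads $g(a,b) = f_1(b,c) \wedge h_1(a,c)$ for every $(a,b,c)$ in the product support. Because the left-hand side is independent of $c$, a short rectangle argument shows that whenever $g(a,b)=1$ one must have $f_1(b,c)\equiv 1$ and $h_1(a,c)\equiv 1$ in $c$; hence $g^{-1}(1)=A_1\times B_1$ for the sets $A_1=\{a:h_1(a,\cdot)\equiv 1\}$ and $B_1=\{b:f_1(b,\cdot)\equiv 1\}$, both non-empty because $P(Z=1)=1/4$. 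Filling in the off-rectangle values (any $b\notin B_1$ with $f_1(b,c_0)=1$ for some $c_0$, combined with any $a\in A_1$, would give $g(a,b)=1$, contradicting $b\notin B_1$) then forces $f_1(b,c)=\mathbf{1}[b\in B_1]$ and $h_1(a,c)=\mathbf{1}[a\in A_1]$, both independent of $c$. So $X_1$ is a function of $B$ alone and $Y_1$ of $A$ alone, making $(X_1,Y_1)$ independent of $C$.

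\textbf{Bell contradiction.} Using $(X_1,Y_1)\perp C$ together with the triangle's source independences,
\begin{equation*}
P(X_2,Y_2 \mid X_1=x_1,Y_1=y_1) \;=\; \sum_c P_C(c)\,P(X_2 \mid x_1,c)\,P(Y_2 \mid y_1,c),
\end{equation*}
which is a standard local hidden variable model with hidden variable $c$, settings $(x_1,y_1)$ and outcomes $(x_2,y_2)$. Bell's theorem then gives $|S|\le 2$, contradicting $S=2\sqrt{2}$. The key step, and the main obstacle, is the rectangle lemma: softer choices such as $Z=A'\oplus B'$ only yield an additive decomposition $f_1(B,C)=\alpha(B)\oplus\beta(C)$, $h_1(A,C)=\gamma(A)\oplus\beta(C)$, which admits degenerate classical realisations in which the ``setting'' $X_1$ is itself a function of the hidden variable $C$ (and can in fact reproduce PR-box-like correlations classically); the multiplicative structure of AND is precisely what excludes this, and is what makes one bit of $Z$ output sufficient.
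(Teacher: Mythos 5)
Your proposal is correct and follows essentially the same route as the paper's proof in Appendix~\ref{sec:fritzproof}: the same quantum strategy (optimal CHSH measurements with the settings published as the first bits of $X$ and $Y$, and $Z=\operatorname{AND}$ of the two settings), the same key step showing that in any classical realisation in $C_3^{\cC}$ the setting bits must be determined by $B$ alone and $A$ alone (hence independent of $C$), and the same conclusion that the correlations conditioned on the settings then admit a local hidden variable model over $C$, contradicting the CHSH violation. The only difference is presentational: you absorb local randomness into the sources and prove the key step by a deterministic rectangle argument ($g^{-1}(1)=A_1\times B_1$), whereas the paper's Lemma~\ref{lemma:andlemma} establishes the same fact directly for stochastic response functions via a case analysis on $P_{Z|ab}$, $P_{\tilde{A}|ac}$ and $P_{\tilde{B}|bc}$.
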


A proof of Proposition~\ref{prop:andextension} can be found in
Appendix~\ref{sec:fritzproof}. It is interesting in so far
as the example in~\cite{Fritz2012} relies on a Bell inequality
violation. Given this, one might have expected that all information about the
measurement choices in the Bell setup, $\tilde{A}$ and $\tilde{B}$,
has to be exposed at the observed node $Z$.
Proposition~\ref{prop:andextension} shows that this is not the
case. 

\begin{figure}
\centering
\resizebox{0.6 \columnwidth}{!}{%
\begin{tikzpicture}
\node (X) at (-2,2) {$(\tilde{X},\tilde{B})$};
\node (Y) at (2,2) {$(\tilde{Y},\tilde{A})$};
\node (Z) at (0,-1.46) {$Z$};
\node (A) at (1,0.28) {$A$};
\node (B) at (-1,0.28) {$B$};
\node (C) at (0,2) {$C$};
\draw [->,>=stealth] (A)--(Y);
\draw [->,>=stealth] (A)--(Z);
\draw [->,>=stealth] (B)--(X);
\draw [->,>=stealth] (B)--(Z);
\draw [->,>=stealth] (C)--(X);
\draw [->,>=stealth] (C)--(Y);
\end{tikzpicture}
}%
\caption{Scenario involving unobserved quantum systems, leading to a
  distribution which is not reproducible with classical $A$, $B$ and
  $C$~\cite{Fritz2012}.   The observed variables $X=(\tilde{X},\tilde{B})$ and $Y=(\tilde{Y},\tilde{A})$ are chosen such that
$P_{\mathrm{\tilde{X}\tilde{Y}|AB}}$ maximally violates the CHSH
inequality~\cite{Clauser1969}.
 $Z=(A',B')$ is such that $B'=\tilde{B}=B$ and
$A'=\tilde{A}=A$.
In essence the reason that this cannot be realised in the causal
structure $C_3^{\cC}$ is the CHSH violation. Note though that it is
also important that information about $A$ is present in both $Y$ and
$Z$ (and analogously for $B$), otherwise the correlations could be mocked up. 
In Proposition~\ref{prop:andextension}, we prove that a strategy where
$Z=\mathrm{AND}(A',B')$ also leads to correlations that cannot be
classically realised (see Appendix~\ref{sec:fritzproof} for further details).}
\label{fig:fritzproof}
\end{figure}
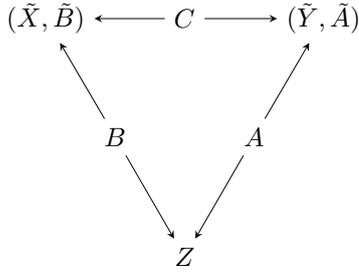

Nonetheless, we find that the entropy vector used to prove this proposition
does not violate our classical inequalities.  
We have also taken $Z$ to be determined by different functions of $A$ and $B$ and have
additionally considered local processing of $X$ and $Y$.  However,
even after such post-processing, for instance by applying all possible
functions from two bits to one, we have not been able to detect any
violations of the classical entropic bounds.

Note that vectors outside
$\Gamma_{\mathcal{M}}^{I}\left(C_3^{\cC}\right)$ can be constructed
with appropriate post-processing of the (quantum) distribution. A
possible way to achieve this is applying \textsc{and} or \textsc{or}
functions appropriately. One may for instance consider the quantum
scenario detailed above, and take
$X=\operatorname{AND}(\tilde{X}, \tilde{B})$,
$Y=\operatorname{AND}(\tilde{Y}, \tilde{A})$ and
$Z=\operatorname{OR}(A',B')$. This renders the interaction information
of the entropy vector of the joint distribution of $X$, $Y$ and $Z$
positive, so the vector is not in
$\Gamma_{\mathcal{M}}^{I}\left(C_3^{\cC}\right)$.

We have similarly tried to violate our entropy inequalities by relying on games other than the CHSH scenario, for which we know that there is distinctive quantum
behaviour (i.e., a separation at the level of
correlations); these include input states and
measurements known to lead to violations of the chained Bell
inequalities~\cite{Braunstein1988} or the Mermin-Peres magic square
game~\cite{Mermin1990, Peres1990}, all with post-processing at ($X$, $Y$ and) $Z$.  

We have further considered scenarios where all three parties measure entangled states and use the
measurement outputs as inputs for further measurements. We have also attempted to incorporate functions known to lead to a positive interaction information in the classical case, as well as
functions from two to one bits in general, into these scenarios. None
of these attempts has led to a violation of the classical
inequalities so far. In a number of scenarios we have also considered
shared PR-boxes instead of entangled states, again without detecting
any violations of the inequalities. In most cases the corresponding
entropy vectors have a negative interaction information, and hence lie
in $\Gamma_\mathcal{M}^\mathrm{I}\left(C_3^{\cC}\right)$, so can be
realised with a classical distribution as well, like in the case of
the correlations mentioned at the end of Section~\ref{sec:w_correl}.
\label{sec:op}

\subsection{Hybrid triangle scenarios}
In a \emph{hybrid causal structure} some of the unobserved nodes are
allowed to be quantum, whereas others are restricted to be
classical. One motivation for this is that sharing entanglement over
large distances is challenging due to noise, so two distant
observations might be assumed to have a classical cause while nearby
ones could have quantum causes.  In the case of the causal structure $C_3$,
there are two such hybrid scenarios: either one or two of the three
unobserved variables can be chosen to be classical, whereas the others
are quantum. We call these two causal structures $C_3^{\cCQQ}$ and
$C_3^{\cCCQ}$ respectively. In the following, we will approximate the
sets of compatible entropy vectors for both scenarios.  We show that
in hybrid scenarios of the triangle causal structure non-Shannon
inequalities are relevant.

\subsubsection{$C_3^{\cCQQ}$ scenario}
In this scenario one of the unobserved variables is classical (we take
this to be $A$).  The techniques introduced in
Sections~\ref{sec:classical_method} and~\ref{sec:quantum_method} allow
us to compute approximations of the set of allowed entropy vectors. We
find
\begin{equation*}
\Gamma_\mathcal{M}\left(C_3^{\cCQQ}\right)=\Gamma_\mathcal{M}\left(C_3^{\qQ}\right),
\end{equation*}
i.e., the outer approximation to
$\overline{\Gamma_\mathcal{M}^{*}}\left(C_3^{\cCQQ}\right)$ obtained
without taking non-Shannon inequalities into account coincides with
the outer approximation to
$\overline{\Gamma_\mathcal{M}^{*}}\left(C_3^{\qQ}\right)$.  However,
unlike in the fully quantum case $C_3^{\qQ}$, non-Shannon constraints
can be included for $C_3^{\cCQQ}$, for instance the
inequality from Proposition~\ref{prop:zhangyeung} with variable
choices
\begin{align*}
\Diamond_{YZAX} \geq 0,\quad  \Diamond_{YZXA} \geq 0. 
\end{align*}
This results in a tighter approximation to
$\overline{\Gamma_\mathcal{M}^{*}}\left(C_3^{\cCQQ}\right)$, which comprises the Shannon inequalities for three variables,
the constraint~\eqref{eq:fritzhenson} and\footnote{The second of
  these inequalities can be easily derived from $\Diamond_{YZXA} \geq
  0$ and the conditional independences, analogously to
 Proposition~\ref{lemma:matus1}. To derive the first inequality, on the
  other hand, several inequalities have to be combined.}
\begin{align*}
-3 H(X)&-3 H(Y)- 3 H(Z)+ 2H(XY)\\
&+ 2H(XZ)+ 3H(YZ)- H(XYZ) \geq 0, \\
-2 H(X)&-2 H(Y)- 2 H(Z)+ 3H(XY)\\
&+ 3H(XZ)+ 3H(YZ)- 4H(XYZ) \geq 0.
\end{align*}
Further non-Shannon constraints could also be exploited to improve
these approximations. Hence, some of the extremal rays of the Shannon
outer approximation $\Gamma_{\cM}\left( C_3^{\qQ} \right)$ are
provably not achievable if $A$, $B$ and $C$ do not all share entangled
states. Note that this does not imply that the sets of achievable entropy vectors in 
$C_3^{\qQ}$ and $C_3^{\cCQQ}$ differ. However, the difference in their outer approximations 
may prove useful for analysing whether there is a difference between the two. 

If one were to prove by other means that there is no such difference the inequalities for $C_3^{\cCQQ}$ would give us a way to better approximate the set of achievable entropy vectors of $C_3^{\qQ}$.

\subsubsection{$C_3^{\cCCQ}$ scenario}
In this scenario we take $A$ and $B$ to be classical. This scenario
can be understood as a Bell scenario, where the measurement choices of
the two parties are unobserved and processed to one single observed
output, $Z$.\footnote{Note that even though the sets of achievable
  entropy vectors in classical and quantum case coincide in the Bell
  scenario (cf.\ Section~\ref{sec:int} and~\cite{pnpaper}) this may
  not be the case here as very different distributions may lead to the
  same entropy vector in the classical and quantum case, which may be
  separated by local processing.} The distributions from
Section~\ref{sec:quantcorr}, that are provably not reproducible in
$C_3^{\cC}$ can be generated in this causal structure. Its entropic analysis thus
restricts the violations of our classical inequalities we may hope to achieve with such distributions.
To approximate the set of compatible entropy vectors of this scenario,
$\overline{\Gamma^{*}_{\mathcal{M}}}\left(C_3^{\cCCQ}\right)$, we
proceed analogously to the $C_3^{\qQ}$ and $C_3^{\cCQQ}$ scenarios
before. However, the result differs and leads to a tighter cone, even
without considering non-Shannon inequalities, i.e., 
$\Gamma_{\mathcal{M}}\left(C_3^{\cCCQ}\right) \subsetneq \Gamma_{\mathcal{M}}\left(C_3^{\cCQQ}\right)$. 
The approximation is given by the three variable Shannon inequalities and the following additional inequalities:
\begin{align} 
&\!-H(X)\!-\!H(Y)\!-\!H(Z)\!+\!H(XY)\!+\!H(XZ)\!\geq\! 0,\nonumber \\
&-3 H(X)-3 H(Y)- 3 H(Z)+ 2H(XY)\nonumber\\
&+ 3H(XZ)+ 2H(YZ)- H(XYZ)\geq 0,\label{eq:ccq}
\end{align}
up to permutations of $X$, $Y$ and $Z$ in the first inequality and of
$Y$ and $Z$ in the second. Note that these five inequalities are a
subset of the seven inequalities \eqref{eq:margindep} delimiting
$\Gamma_{\mathcal{M}}\left( C_3^{\cC}\right)$ and that
\begin{equation*}
\Gamma_{\mathcal{M}}\left(C_3^{\cC}\right) \subsetneq \Gamma_{\mathcal{M}}\left(C_3^{\cCCQ}\right).
\end{equation*} 

\begin{widetext}
The inequalities $\Diamond_{XZBY} \geq 0$, $\Diamond_{YZAX} \geq 0$ and $\Diamond_{YZXA}\geq 0$
lead to the additional inequalities
\begin{equation}\label{eq:ccqadditional}
\begin{split}
-2H(X)-2H(Y)-2H(Z)+3H(XY)+3H(XZ)+3H(YZ)-4H(XYZ) &\geq 0, \\
-6H(X)-6H(Y)-6H(Z)+5H(XY)+5H(XZ)+5H(YZ)-3H(XYZ) &\geq 0, \\
-4H(X)-4H(Y)-4H(Z)+3H(XY)+4H(XZ)+3H(YZ)-2H(XYZ) &\geq 0,
\end{split}
\end{equation}
\end{widetext}
(including permutations of $X$ and $Y$ in the last inequality).  They
render the second inequality (and its permutations) in \eqref{eq:ccq}
redundant, while the first remains (for all of its permutations). Note
that the first inequality of \eqref{eq:ccqadditional} is also present
in $\Gamma_{\mathcal{M}} \left(C_3^{\cCQQ}\right)$.
As in the previous example, further constraints could likely be derived by considering additional non-Shannon inequalities.

\section{Conclusions} \label{sec:conclusion} 

We have shown that
non-Shannon inequalities tighten the entropic approximations of the
classical entropy cones in many causal structures including the
triangle scenario and the causal structure relevant for information
causality. 
Our newly derived inequalities improve on the entropic distinction of
these from other (classical) causal structures, which is of interest
for inferring (classical) causal relations. They also constitute a set
of restrictions on the classical entropy cones that we cannot derive
in the quantum case, which may point towards differences between the
sets of achievable entropy vectors in classical and quantum case. 

Since it is known from the Bell scenario that quantum correlations can
be detected by considering the entropies of the variables in a post-selected causal structure~\cite{Braunstein1988}, our analysis of the information causality scenario is the one that is most likely to be useful for this purpose.
In this context, non-Shannon inequalities may also be important with regard to the discussion of whether entropic techniques may even be sufficient for certifying classical reproducibility in certain scenarios, a question that has previously been explored for the CHSH scenario in Ref.~\cite{Chaves2013}.

While the entropy vector approach is known to be a useful
means for distinguishing different classical causal structures, its
ability to differentiate between classical and quantum versions of the
same causal structure is known to be limited~\cite{pnpaper}. The
present work has unveiled further limitations of the approach:   
for all causal structures classified
in~\cite{Henson2014} we found either that the sets of achievable
entropy vectors in classical and quantum case coincide (for the causal structures of Figure~\ref{fig:HensonStructures}), or that
non-Shannon inequalities play a role in their characterisation leaving
us unable to make such a statement. 

One of the reasons why it is difficult to make such a statement when
non-Shannon inequalities play a role is our relatively poor
understanding of the structure of entropy space. Even in the absence
of a causal structure we lack a tight characterisation of the set of
allowed entropy vectors for four random variables. In the quantum
case, it is an important open problem whether any further general
constraints on the von Neumann entropy exist. This partly explains our inability to show whether there is some causal structure in which
the described entropy vector approach can be useful for distinguishing
classical and quantum.

Behind all this is the question, of whether there is a novel technique
that allows for an efficient and accurate way to distinguish classical
and quantum versions of the same causal structure. Such a technique
needs to simplify the description of the set of allowed distributions
but remain complex enough to retain the distinctive features of
classical, quantum and post-quantum probability
distributions. Identifying such a quantity would provide further
insight into the meaning of cause in quantum mechanics.

\vspace{-0.4cm}

\acknowledgments We thank Benjamin Chapman and Alex May for their
assistance with initial investigations, and we thank Luke Elliott for
carrying out random searches in Fortran. We are grateful to Matthew
Pusey for alerting us to an error in an earlier draft.  This work was
supported by the Engineering and Physical Sciences Research Council
through a First Grant (no.\ EP/P016588/1) and the Quantum
Communications Hub (grant no.\ EP/M013472/1).

\bibliographystyle{plainnat}

\appendix

\section{Discussion of discrepancy with \cite{Chaves2014,Chaves2015}}\label{sec:prevwork}
Our new approximations to
$\overline{\Gamma^*_\cM}\left(C_3^{\cC}\right)$ presented in
Section~\ref{sec:triangle} contradict the claim
in~\cite{Chaves2014,Chaves2015} that
$\overline{\Gamma^*_\cM}\left(C_3^{\cC}\right)=\Gamma_\cM(C_3^{\cC})$. This
appendix reviews these results and explains the discrepancy.

In~\cite{Chaves2014,Chaves2015}, the inequalities defining the set
$\Gamma_{\mathcal{M}}\left(C_3^{\cC}\right)$ as well as its vertex
description were calculated. Furthermore, probability distributions
$P \in \mathcal{P}_{\mathcal{M}}$ that achieve the rays of
$\Gamma_{\mathcal{M}}\left(C_3^{\cC}\right)$ were presented in the
Supplementary Information of~\cite{Chaves2015}. However, it was not
shown there, that the corresponding distributions, $P$, lie in
$\mathcal{P}_{\mathcal{M}}\left(C_3^{\cC}\right)$, and hence that the
corresponding entropy vectors are achievable in $C_3^{\cC}$.

Our results imply that
$\Gamma_{\mathcal{M}}\left(C_3^{\cC}\right) \neq
\overline{\Gamma^*_{\mathcal{M}}}\left(C_3^{\cC}\right)$
and that three of the extremal rays of
$\Gamma_{\mathcal{M}}\left(C_3^{\cC}\right)$ cannot lie within
$\overline{\Gamma^*_{\mathcal{M}}}\left(C_3^{\cC}\right)$,
specifically the ray containing the vector
$v=\left( 2,3,3,4,4,5,6 \right)$ and its permutations. In the
Supplementary Information of~\cite{Chaves2015}, $v$ is shown to be
achieved with the probability distribution
\begin{equation}\label{eq:as}
\!\!\!P_{\mathrm{XYZ}}(x,y,z)=\begin{cases}
\frac{1}{64}&x\!\oplus\! y\!+\! x\!\oplus\! z\!+\! y\!\oplus\! z\!=\!0\\
0 &\textrm{otherwise},
\end{cases} 
\end{equation}
where $x\in\left\{1,\ldots,4\right\}$,
$y,z\in\left\{1,\ldots,8\right\}$ and $\oplus$ denotes addition modulo
$2$. This means that $P_{\mathrm{XYZ}}(x,y,z)=\frac{1}{64}$ if and
only if either $x$, $y$ and $z$ are all odd or they are all even. This
distribution can be mapped to the perfect correlations
of~\eqref{eq:perfcor} by locally mapping all odd outcomes to $1$ and
all even outcomes to $0$ at $X$, $Y$ and $Z$. Since perfect
correlations are known not to be achievable in
$\mathcal{P}_{\mathcal{M}}\left(C_3^{\cC}\right)$, the
distribution~\eqref{eq:as} is not compatible with the triangle causal
structure.

This resolves the apparent contradiction of our results with those from~\cite{Chaves2014,Chaves2015}.
What was shown there is that
\begin{equation*}
\Gamma_{\mathcal{M}}\left(C_3^{\cC}\right) \subseteq \overline{\Gamma^*_3},
\label{eq:chaves}
\end{equation*}
i.e., that all vectors $v \in \Gamma_{\mathcal{M}}\left(C_3^{\cC}\right)$
can be written as the entropy of a valid probability distribution, or arbitrarily well approximated by such (but
not necessarily one that is achievable in the triangle causal
structure).

\section{Infinite families of inequalities} \label{sec:families}
Infinite families of inequalities may be derived to tighten the
entropic approximation to
$\overline{\Gamma_{\cM}^{*}}\left(C^{\cC}_3\right)$. Here we give the
proof for the three examples provided in
Proposition~\ref{lemma:matus1}.  However, there are numerous other
examples that can be derived in a similar way.

The families~\eqref{eq:marginalmatus1} and~\eqref{eq:mat1fam2} are
derived from~\eqref{eq:matus1} by combining the inequalities for one
$s$-value at a time with Shannon and conditional independence
constraints. These are the only families derived from
\eqref{eq:matus1} in this way for which none of the resulting
inequalities are rendered redundant by those found in the calculations
for Case~2 in Section~\ref{sec:triangle}.

\begin{proof}[Proof of Proposition~\ref{lemma:matus1}]
We tackle the three inequalities separately.

\noindent\eqref{eq:marginalmatus1}: The instance of inequality \eqref{eq:matus1} with $(X_1,~X_2,~X_3,~X_4)=(X,~Y,~Z,~C)$ can be rewritten as 
\begin{align*}
&\left(-\frac{1}{2} s^2 - \frac{3}{2} s \right) H(X)
- \left(s+1\right) H(Y) \\
- &\left(\frac{1}{2} s^2 + \frac{1}{2} s \right) H(Z)
+ s H(CX) \\
+ &s H(CY)
- s H(CZ)
+\left( \frac{1}{2} s^{2} + \frac{3}{2} s + 1 \right) H(XY) \\
+ &\left( s^{2} + 2s \right) H(XZ)
+ \left( \frac{1}{2} s^{2} + \frac{3}{2} s + 1 \right) H(YZ) \\
- &s H(CXY)
- \left( s^2+2s+1 \right) H(XYZ) \geq 0.
\end{align*}
Applying $I(X\!   : \!  Y|C)=0$ and $I(Z\!   : \!  C)=0$, all terms containing the
variable $C$ cancel and we recover~\eqref{eq:marginalmatus1}.\bigskip

\noindent\eqref{eq:mat1fam2}: Inequality~\eqref{eq:matus1}, with variable choices
$(X_1,~X_2,~X_3,~X_4)=(Y,~X,~C,~Z)$ and using the independences $I(X\!   : \!  Y|C)=0$ and $I(Z\!   : \!  C)=0$, can be rewritten as 
\begin{align}
&\left(  \frac{1}{2} s^2 +  \frac{1}{2} s +1  \right) H(C)
- \left(s+1\right) H(X) \nonumber \\
-& \left(\frac{1}{2} s^2 + \frac{3}{2} s \right) H(Y)
-\left( \frac{1}{2} s^{2} + \frac{1}{2} s \right) H(CX) \nonumber \\
- &H(CY) 
+\left(  \frac{1}{2} s^{2} + \frac{3}{2} s + 1  \right) H(XY)-s H(Z)\nonumber \\
+&s H(XZ) 
+ s H(YZ)
- s H(XYZ)\geq 0.  \label{eq:m0}
\end{align}
We also marginalise $\Gamma{\left(C_3^{\cC}\right)}$ to obtain constraints on the vectors 
\begin{multline*}\left( H(C), H(X), H(Y), H(Z), H(CX), H(CY), \right. \\ \left. H(XY), H(XZ), H(YZ), H(XYZ) \right) 
\end{multline*}
that arise from Shannon and independence inequalities. Two of the constraints this elimination yields are the following inequalities,
\begin{align}
-&2H(C)-2H(X)-2H(Y)-3H(Z)+H(CX) \nonumber \\
+&H(CY)+H(XY)+2H(XZ)+2H(YZ)\nonumber \\
-&H(XYZ) \geq 0  , \label{eq:m1} \\\nonumber\\
-&H(C)-H(X)-H(Z)+H(CX)+H(XZ) \! \geq \! 0 . \label{eq:m2} 
\end{align}
We now use~\eqref{eq:m1} to remove $H(CY)$ from \eqref{eq:m0}, which yields
\begin{align}
&\left(\frac{1}{2} s^2 + \frac{1}{2} s - 1 \right) H(C)
 - \left(\frac{1}{2} s^2 + \frac{3}{2} s + 2 \right) H(Y)
 \nonumber\\
 -& \left(s+3 \right) H(X)- \left(  s+3 \right) H(Z) + \left(s +2 \right) H(XZ) 
\nonumber\\
+&\left( \! -\frac{1}{2} s^{2}- \frac{1}{2} s +1 \! \right) H(CX)+ \left(s +2 \right) H(YZ) 
 \nonumber\\
+& \left( \! \frac{1}{2} s^{2} + \frac{3}{2} s + 2 \! \right) \! H(XY)
+ \left(- s +1 \right) \! H(XYZ) \! \geq \! 0  .  \label{eq:m3}
\end{align}
With \eqref{eq:m2}, $H(CX)$ and $H(C)$ are eliminated from
\eqref{eq:m3}, which concludes the proof for this case.\bigskip

\noindent\eqref{eq:mat2}: In a similar manner as for the family~\eqref{eq:mat1fam2}, we consider
inequality~\ref{eq:matus2} with variable choices
$(X_1,~X_2,~X_3,~X_4)=(X,~Y,~C,~Z)$ and the independences $I(X\!   : \!  Y|C)=0$
and $I(Z\!   : \!  C)=0$ to obtain
\begin{align}
&\left( \! s +1 \! \right) \! H(C)
- \! \left( \! \frac{1}{2} s^2 + \frac{3}{2} s \! \right) \! H(X) 
- \! \left( \! s +1 \! \right) \! H(Y) \nonumber\\
-&\left(\frac{1}{2} s^2 + \frac{1}{2} s \right) H(Z)
- H(CX) 
-s H(CY) \nonumber\\
+&\left( \frac{1}{2} s^{2} + \frac{3}{2} s + 1 \right) H(XY) 
+ s^2 H(XZ) \nonumber\\
+&\left( \frac{1}{2} s^{2} + \frac{1}{2} s  \right) H(YZ)
- s^2 H(XYZ)\geq 0 \ .  \label{eq:n0}
\end{align}
We also consider two inequalities that are obtained from marginalising
$\Gamma{\left(C_3^{\cC}\right)}$ to vectors
\begin{multline*}\left( H(C), H(X), H(Y), H(Z), H(CX), H(CY),
  \right. \\ \left. H(XY), H(XZ), H(YZ), H(XYZ) \right)  ,
\end{multline*} namely \eqref{eq:m1} as well as
\begin{equation}
-H(C)-H(Y)-H(Z)+H(CY)+H(YZ) \geq 0  . \label{eq:n1}
\end{equation}
Inequality \eqref{eq:m1} allows us to eliminate $H(CX)$ from \eqref{eq:n0} and \eqref{eq:n1} allows us to eliminate $H(C)$ and $H(CY)$, which concludes the proof.
\end{proof}

\section{Entropy cones for causal structures with up to five variables} \label{sec:uptofive} 
For most causal structures with up to five nodes, the sets of
compatible distributions generated with classical and quantum
resources are identical, and hence, so are their entropic
cones~\cite{Henson2014}. Ref.~\cite{Henson2014} reports one causal
structure with four nodes (Figure~\ref{fig:HensonStructures}(a)) and
$96$ causal structures with five, for which this equivalence does not
hold. These were reduced to the three causal structures shown in
Figure~\ref{fig:HensonStructures}, using reduction criteria. In the
following, we show that, for the three causal structures in question,
the classical and quantum entropy cones coincide. Note that this does
not imply that the same holds true for the remaining $94$ causal
structures. An example where we have not been able to establish this
is the causal structure $\hat{\rm IC}$.

\begin{proof}[Proof of Proposition~\ref{prop:hen}]
  We begin by showing that for the causal structures shown in
  Figure~\ref{fig:HensonStructures}, the classical entropy cones
  coincide with the corresponding Shannon approximations. For the
  instrumental scenario of Figure~\ref{fig:HensonStructures}(a) this
  is shown in Example~\ref{example:innerI}, for the Bell scenario of
  Figure~\ref{fig:HensonStructures}(b) this was previously shown
  in~\cite{pnpaper}. In the following, we hence consider
  Figure~\ref{fig:HensonStructures}(c). 

  The Shannon inequalities and independence constraints lead to an
  outer approximation that is the conic hull of the following vectors,
  denoted here as lists of their components, ordered as
  {\small
  $$( \!  H(X), \! H(Y), \!  H(Z),\!  H(XY),\!  H(XZ),\!  H(YZ),\!  H(XYZ) \! ).$$
  }
\begin{alignat*}{3}
&(1) \quad & &1 1 2  2 2 2  2 \\
&(2) \quad & &1 2 1  2 2 2  2 \\
&(3) \quad & &1 1 1  2 2 2  2 \\
&(4) \quad & &1 2 2  2 2 2  2 \\
&(5) \quad & &1 1 1  2 2 1  2 \\
&(6) \quad & &1 1 1  1 1 1  1 \\
&(7) \quad & &1 0 1  1 1 1  1 \\
&(8) \quad & &1 1 0  1 1 1  1 \\
&(9) \quad & &1 0 0  1 1 0  1 \\
&(10) \quad & &0 1 0  1 0 1  1 \\
&(11) \quad & &0 0 1  0 1 1  1 \\
\end{alignat*}
The following strategies confirm that all of the extremal rays are achievable within the causal structure and, hence, that we have found the associated entropy cone. Note that $\oplus$ denotes addition modulo $2$.
\begin{itemize}
\item The entropy vectors (1) and (2) are recovered by choosing $A$ and $B$ to be uniform bits and $X=A \oplus B$, $Y=B$, $Z=(A, X)$, or $X=A \oplus B$, $Y=(B, X)$, $Z=A$ respectively.
\item (3) is recovered by letting $A$ and $B$ be uniform bits and $X=A \oplus B$, $Y=B$, $Z=A$.
\item The entropy vector (4) is recovered by letting $A$ and $B$ be uniform bits and $X=A \oplus B$, $Y=(B,X)$, $Z=(A,X)$.
\item Let $A$ and $B$ be uniform bits and let $X=A \oplus B$, $Y=B$, $Z=A \oplus X$ to recover (5).
\item $X$ is a uniform bit and $Y=X=Z$ to recover (6).
\item To recover vectors (7) and (8), $A$ or $B$ are taken to be a uniform bit, and $X=A=Z$ or $X=B=Y$ respectively. The remaining variable is deterministic. 
\item Entropy vectors (9)-(11) are obtained by choosing either $X$, $Y$, or $Z$ respectively to be uniform bits and the other two variables to take a value deterministically.
\end{itemize}

We next show that in all three examples the Shannon outer approximation also
coincides with the set of compatible entropy vectors in the quantum
case $\overline{\Gamma^{*}_{\cM}}\left({C^{\qQ}}\right)$.  For this,
we rely on the facet description of the respective cones and show that
each of the inequalities also holds in the quantum case.

1. For the instrumental scenario of Figure~\ref{fig:HensonStructures}(a) the only inequality in addition to the Shannon inequalities for three observed variables is ${I(X : ZY )\leq H(Z)}$~\cite{Henson2014}. This holds in the quantum case because 
\begin{align*}
I(X\!   : \!  ZY) \! &\leq \! I(X\!   : \!  ZA_Y) \\
\! &\leq \!  H(X)+ H(Z) + H(A_Y) - H(XZA_Y) \\
\! &\leq \! H(Z) + H(X A_Y) - H(XZA_Y) \\
\! &\leq \! H(Z), \,
\end{align*}
where the first inequality is a DPI, then we use submodularity, the independence of $X$ and $A_Y$ and monotonicity for the cq-state $\rho_{XZA_Y}$.

2. For the Bell scenario of Figure~\ref{fig:HensonStructures}(b) the only constraints (in addition to the four variable Shannon inequalities) are the independencies $I(W\!   : \!  YZ)=0$ and $I(Z\!   : \!  WX)=0$, which hold in the quantum case.

3. For the causal structure of Figure~\ref{fig:HensonStructures}(c)
the only additional inequality is
$I(Y\!   : \!  Z|X) \leq H(X)$~\cite{Henson2014}. This holds in the quantum case
because
\begin{align*}
I&(Y\!   : \!  Z|X) \\
\! &\leq \! I(Y\!   : \!  B_Z|X) \\
\! &\leq \! I(A_Y\!   : \!  B_Z|X) \\
\! &\leq \! H(A_Y X) \! + \! H(B_Z X) \! - H(X) \! - \! H(A_Y B_Z) \\
\! &= \! H(A_Y X) \! + \! H(B_Z X) \! - \! H(X) \! - \! H(A_Y) \! - \! H(B_Z) \\
\! &\leq \! H(X), \,
\end{align*}
where the first two inequalities are DPIs and the third holds by
monotonicity. The equality holds because $A_Y$ and $B_Z$ are
independent and the last inequality follows from two submodularity
constraints.
\end{proof}

\section{Inner approximation to $\overline{\Gamma^{*}_{\cM}}\left(\hat{\rm IC}^{\cC}\right)$}\label{sec:example}
Computing an outer approximation in terms of Shannon and independence (in)equalities as well as including all permutations of the Ingleton inequality for four of the five variables, yields a cone with $46$ extremal rays. In the following, we list an entropy vector on each such extremal ray, with components 
\begin{equation*}
\begin{split}
(H(X_0), H(X_1), H(Z), H(Y),H(X_0X_1),H(X_0Z),\\H(X_0Y),H(X_1Z),H(X_1Y),H(ZY),H(X_0X_1Z),\\ H(X_0X_1Y),H(X_0ZY),H(X_1ZY),H(X_0X_1ZY) ),
\end{split}
\end{equation*}
where rays that are obtained from others by permuting $X_0$ and $X_1$ are omitted.
\begin{alignat*}{3}
&(1) \quad & &1 1 1 1   1 1 1 1 1 1   1 1 1 1   1 \\
&(2) \quad & &1 1 1 0   1 1 1 1 1 1   1 1 1 1   1 \\
&(3) \quad & &1 0 1 1   1 1 1 1 1 1   1 1 1 1   1 \\
&(4) \quad & &1 1 0 0   1 1 1 1 1 0   1 1 1 1   1 \\ 
&(5) \quad & &1 0 1 0   1 1 1 1 0 1   1 1 1 1   1 \\ 
&(6) \quad & &0 0 1 1   0 1 1 1 1 1   1 1 1 1   1 \\ 
&(7) \quad & &1 0 0 0   1 1 1 0 0 0   1 1 1 0   1 \\  
&(8)\quad & &0 0 1 0   0 1 0 1 0 1   1 0 1 1   1 \\ 
&(9)\quad & &0 0 0 1   0 0 1 0 1 1   0 1 1 1   1 \\ 
&(10)\quad & &1 1 1 1   2 2 2 2 2 1   2 2 2 2   2 \\ 
&(11)\quad & &1 1 1 1   1 2 2 2 2 2   2 2 2 2   2 \\ 
&(12)\quad & &1 1 1 1   1 2 1 2 1 2   2 1 2 2   2 \\ 
&(13)\quad & &1 0 1 1   1 2 1 1 1 2   2 1 2 2   2 \\ 
\end{alignat*}
\begin{alignat*}{3}
&(14)\quad & &1 0 1 1   1 2 2 1 1 2   2 2 2 2   2 \\ 
&(15)\quad & &1 1 1 0   2 2 1 2 1 1   2 2 2 2   2 \\ 
&(16)\quad & &0 1 1 2   1 1 2 2 2 2   2 2 2 2   2 \\ 
&(17)\quad & &1 1 1 2   1 2 2 2 2 2   2 2 2 2   2 \\ 
&(18)\quad & &1 1 2 1   2 2 2 2 2 2   2 2 2 2   2 \\ 
&(19)\quad & &1 1 2 1   2 3 2 3 2 3   3 3 3 3   3 \\ 
&(20)\quad & &1 1 1 2   2 2 3 2 3 2   3 3 3 3   3 \\ 
&(21)\quad & &2 1 2 1   2 3 2 3 2 3   3 2 3 3   3 \\ 
&(22)\quad & &1 1 2 1   2 3 2 2 2 3   3 2 3 3   3 \\ 
&(23)\quad & &1 1 2 1   2 3 2 3 2 3   3 2 3 3   3 \\ 
&(24)\quad & &1 1 1 1   2 2 2 2 2 2   3 3 3 3   3 \\ 
&(25)\quad & &1 1 2 2   2 3 3 3 2 3   3 3 3 3   3 \\ 
&(26)\quad & &1 1 2 2   2 3 3 2 3 3   3 3 3 3   3 \\ 
&(27)\quad & &1 1 2 1   2 3 2 3 1 3   3 2 3 3   3 \\ 
&(28)\quad & &1 1 1 1   2 2 2 2 2 2   3 2 3 3   3 \\ 
&(29)\quad & &1 1 2 2   2 3 3 3 3 3   3 3 3 3   3 \\ 
&(30)\quad & &1 1 2 2   2 3 2 3 2 3   3 2 3 3   3 \\ 
&(31)\quad & &1 1 2 3   2 3 3 3 3 3   3 3 3 3   3 \\ 
&(32)\quad & &1 1 2 2   2 3 3 3 3 4   4 3 4 4   4 \\ 
&(33)\quad & &2 1 2 1   3 4 3 3 2 3   4 3 4 4   4 \\ 
&(34)\quad & &1 1 2 3   2 3 4 3 4 4   4 4 4 4   4.\\ 
\end{alignat*}
We have identified probability distributions compatible with $\hat{\rm IC}^{\cC}$ that reproduce vertices on each of the rays. Hence the convex hull of these rays is an inner approximation to $\overline{\Gamma^{*}_{\cM}}\left(\hat{\rm IC}^{\cC}\right)$. It is characterised by $23$ classes of inequalities, giving a total number of $35$ inequalities when including permutations. In the following we list distributions recovering one vector on each extremal ray (again not listing strategies for the rays that are obtained from others by permuting $X_0$ and $X_1$). For this purpose, let $C_1$, $C_2$, $C_3$, $C_4$, $C_5$ and $C_6$ be random bits and let $\oplus$ denote addition mod~$2$.
\begin{itemize}
\item (1) Let $Y=Z=X_1=X_0=C_1$.
\item (2) Let $Z=X_1=X_0=C_1$ and $Y=1$; 
(3) let $Y=Z=X_0=C_1$ and $X_1=1$. 
\item (4) Let $X_1=X_0=C_1$ and $Y=Z=1$; (5) let $Z=X_0=C_1$, $Y=1$ and $X_1=1$; 
(6) let $Y=Z=C_1$ and $X_1=X_0=1$.
\item (7) Let $X_0=C_1$ and $Y=Z=X_1=1$. (8)-(9) are permutations of this.
\item (10) Let $X_0=C_1$, $X_1=C_2$ and $Y=Z=X_0\oplus X_1$; (11) Let $X_1=X_0=C_1$, let $Y=A=C_2$ and $Z=X_1 \oplus A$.
\item (12) Let $X_1=X_0=C_1$, $A=C_2$, $Z=X_1 \oplus A$ and $Y=Z \oplus A$.
\item (13) Let $X_0=C_1$ and $X_1=1$, let $A=C_2$, $Z=X_0 \oplus A$ and $Y=Z \oplus A$. 
\item (14) Let $X_0=C_1$ and $X_1=1$, let $Y=A=C_2$ and $Z=X_0 \oplus A$; 
(15) let $X_0=C_1$ and $X_1=C_2$, let $Z=X_0 \oplus X_1$ and $Y=1$.
\item (16) Let $X_0=1$ and $X_1=C_1$, let $A=C_2$, $Z=X_1 \oplus A$ and $Y=(A, Z \oplus A)$. 
\item (17) Let $X_0=X_1=C_1$, let $A=C_2$, $Z=X_1 \oplus A$ and $Y=(A, Z \oplus A)$.
\item (18) Let  $X_0=C_1$ and $X_1=C_2$, let $Z=(Z_1,Z_2)=(X_0,X_1)$ and $Y=Z_1 \oplus Z_2$.
\item (19) Let $X_0=C_1$, $X_1=C_2$ and $A=C_3$, let $Z=(X_0\oplus A,X_1 \oplus A)$ and $Y=A$; (20) let $X_0=C_1$, $X_1=C_2$ and $A=C_3$, let $Z=X_0 \oplus X_1 \oplus A$ and $Y=(A, Z \oplus A)$.
\item (21) Let $X_0=(C_1,C_2)$, $X_1=C_1$ and $A=C_3$, let $Z=(Z_1,Z_2)=(C_2\oplus A,X_1 \oplus A)$ and $Y=Z_1 \oplus A$. 
\item (22) Let $X_0=C_1$, $X_1=C_2$ and $A=C_3$, let $Z=(Z_1,Z_2)=(X_0 \oplus A, X_1)$ and $Y=Z_1 \oplus Z_2 \oplus A$. 
\item (23) Let $X_0=(C_1,C_2)$, $X_1=(C_3,C_4)$ and $A=(C_5,C_6)$, let $Z=(Z_1,Z_2,Z_3,Z_4)=(C_1 \oplus C_5, C_2 \oplus C_6, C_3 \oplus C_6, C_4 \oplus C_5 \oplus C_6)$ and $Y=(Z_1 \oplus Z_3 \oplus C_5 \oplus C_6,Z_2 \oplus Z_4 \oplus C_5)$.\footnote{Note that the entropy values here are double the ones given in the description of the extremal ray.}
\item (24) Let $X_0=C_1$, $X_1=C_2$ and $A=C_3$, let $Z=X_0 \oplus  X_1 \oplus A$ and $Y= A$.
\item (25) Let $X_0=C_1$, $X_1=C_2$ and $A=C_3$, let $Z=(Z_1,Z_2)=(X_0 \oplus A, X_1 \oplus A)$ and $Y=(A, Z_2 \oplus A)$; (26) let $X_0=C_1$, $X_1=C_2$ and $A=C_3$, let $Z=(Z_1,Z_2)=(X_0 \oplus A, X_1)$ and $Y=(Z_1, Z_2 \oplus A)$. 
\item (27) Let $X_0=C_1$, $X_1=C_2$ and $A=C_3$, let $Z=(Z_1,Z_2)=(X_0 \oplus A , X_1 \oplus A)$ and $Y= Z_2 \oplus A$. 
\item (28) Let $X_0=C_1$, $X_1=C_2$ and $A=C_3$, let $Z=X_0 \oplus  X_1 \oplus A$ and $Y=Z \oplus A$.
\item (29) Let $X_0=C_1$, $X_1=C_2$ and $A=C_3$, let $Z=(Z_1,Z_2)=(X_0 \oplus A, X_1 \oplus A)$ and $Y=(Z_1 \oplus Z_2 \oplus A,  A)$; (30) Let $X_0=C_1$, $X_1=C_2$ and $A=C_3$, let $Z=(Z_1,Z_2)=(X_0 \oplus A, X_1 \oplus A)$ and $Y=(Z_1 \oplus  A, Z_2 \oplus A)$; (31) let $X_0=C_1$, $X_1=C_2$ and $A=C_3$, let $Z=(Z_1,Z_2)=(X_0 \oplus A, X_1 \oplus A)$ and $Y=(Z_1 \oplus  A, Z_2 \oplus A, A)$.
\item (32) Let $X_0=C_1$, $X_1=C_2$ and $A=(C_3,C_4)$, let $Z=(Z_1,Z_2)=(X_0 \oplus C_3, X_1 \oplus C_4)$ and $Y=(C_3, Z_1 \oplus Z_2 \oplus C_3 \oplus C_4)$; (33) let $X_0=(C_1,C_2)$, $X_1=C_3$ and $A=C_4$, let $Z=(Z_1,Z_2)=(C_1 \oplus X_1, C_2 \oplus A)$ and $Y=Z_1 \oplus Z_2 \oplus A$. 
\item (34) Let $X_0=C_1$, $X_1=C_2$ and $A=(C_3,C_4)$, let $Z=(Z_1,Z_2)=(X_0 \oplus C_3, X_1 \oplus C_4)$ and $Y=(C_3,C_4, Z_1 \oplus Z_2 \oplus C_3 \oplus C_4)$.
\end{itemize}
The searches for these distributions were performed by hand; they could, however, also be straightforwardly automated.
 
The Shannon outer approximation to $\overline{\Gamma^{*}_{\cM}}\left(\hat{\rm IC}^{\cC}\right)$ shares the $46$ extremal rays of the inner approximation (given above) but has six additional ones, where in the following we list one vector on each ray, omitting rays obtained through permutation of $X_0$ and $X_1$ as above,
\begin{alignat*}{3}
&(35)\quad & &2 2 2 2   3 3 3 3 4 3   4 4 4 4   4 \\
&(36)\quad & &2 2 2 2   3 3 3 4 3 3   4 4 4 4   4 \\
&(37)\quad & &2 2 3 2   3 4 3 4 3 5   5 4 5 5   5 \\
&(38)\quad & &2 2 3 2   4 4 3 4 3 4   5 4 5 5   5 .
\end{alignat*}
We can show that these vectors are all outside $\overline{\Gamma^{*}_{\cM}}\left(\hat{\rm IC}^{\cC}\right)$, by resorting to non-Shannon inequalities. The Shannon outer approximation is characterised by $19$ classes of inequalities, or a total of $29$ inequalities including permutations.

\section{Proof of Proposition~\ref{prop:triangleingleton}} \label{sec:inner_appendix}
In the following we prove Proposition~\ref{prop:triangleingleton}.
First, we have computed the $7$ extremal rays of
  $\Gamma_\mathcal{M}^\mathrm{I}\left(C_3^{\cC}\right)$. We list one vector on each such extremal
  ray in the following\footnote{As usual, we order the components as $(H(X),H(Y),H(Z),H(XY),H(XZ),H(YZ),H(XYZ))$.}:
\begin{alignat*}{3}
&(1) \quad & &1 1 1 2 2 2 2 \\
&(2) \quad & &0 0 1 0 1 1 1 \\
&(3) \quad & &0 1 0 1 0 1 1 \\
&(4) \quad & &1 0 0 1 1 0 1 \\
&(5) \quad & &0 1 1 1 1 1 1 \\
&(6) \quad & &1 0 1 1 1 1 1 \\
&(7) \quad & &1 1 0 1 1 1 1 .
\end{alignat*}
These rays can also be analytically shown to be the extremal rays of
$\Gamma_\mathcal{M}^\mathrm{I}\left(C_3^{\cC}\right)$.\footnote{To do
  so, note that in seven dimensions seven inequalities can lead to at
  most seven extremal rays (choosing six of the seven to be
  saturated).  One can then check that each of the claimed rays
  saturates six of the seven inequalities constraining
  $\Gamma_\mathcal{M}^\mathrm{I}\left(C_3^{\cC}\right)$.}  For each
extremal ray we show how to generate a probability distribution
$P\in\mathcal{P}_\mathcal{M}\left(C_3^{\cC}\right)$ whose entropy
vector $v\in\overline{\Gamma^*_\mathcal{M}}\left(C_3^{\cC}\right)$
lies on the ray. To do so, let $A$, $B$ and $C$ be uniform random
bits.
\begin{itemize}
\item (1): Take $X=C$, $Z=A$ and $Y=A \oplus C$ where $ \oplus$ stands for addition modulus $2$.
\item (2): Take $Z=A$ and let $X=1$ and $ Y=1$ deterministic. (3) and
  (4) are permutations of this.
\item (5): Choose $Y=A=Z$ and let $X=1$ deterministic. (6) and (7) are permutations of this.
\end{itemize}
In this way, all extremal rays of
$\Gamma_\mathcal{M}^\mathrm{I}\left(C_3^{\cC}\right)$ are achieved
by vectors in $\overline{\Gamma^*_\mathcal{M}}\left(C_3^{\cC}\right)$ and, by
convexity of $\overline{\Gamma^*_\mathcal{M}}\left(C_3^{\cC}\right)$,
we have $\Gamma_\mathcal{M}^{I}\left(C_3^{\cC}\right)\subseteq
\overline{\Gamma^*_\mathcal{M}}\left(C_3^{\cC}\right)$.

To show that the inclusion is strict, let $A$, $B$ and $C$ be uniform
random bits. Let $X=\operatorname{AND}(B,C)$,
$Y=\operatorname{AND}(A,C)$ and $Z=\operatorname{OR}(A,B)$. The
marginal distribution $P_\mathrm{XYZ} \in
\mathcal{P}_\mathcal{M}\left(C_3^{\cC}\right)$ leads to an entropy
vector $v_1=(0.81,~0.81,~0.81,~1.55,~1.5,~1.5,~2.16)$ and an
interaction information of $I(X\!   : \!  Y\!   : \!  Z) \approx 0.04 > 0$ (where all
numeric values are rounded to two decimal places). Hence, $v_1 \in
\overline{\Gamma^*_\mathcal{M}}\left(C_3^{\cC}\right)$ but $v_1
\notin \Gamma_\mathcal{M}^{I}\left(C_3^{\cC}\right)$ and therefore
$\Gamma_\mathcal{M}^{I}\left(C_3^{\cC}\right)\subsetneq
\overline{\Gamma^*_\mathcal{M}}\left(C_3^{\cC}\right)$. \qed

\section{Proof of Proposition~\ref{prop:andextension}} \label{sec:fritzproof} In this
section, we prove Proposition~\ref{prop:andextension} based on  reasoning from~\cite{Fritz2012}. There, the idea is that in $C_3^{\qQ}$ one can take $X$ and $Y$ to correspond to two bits, which we call
$(\tilde{X},\tilde{B})$ and $(\tilde{Y},\tilde{A})$ respectively.
The quantum state corresponding to node
$C$ is a maximally entangled state $\Psi_\mathrm{C}=
\frac{1}{\sqrt{2}} ( \left| 01 \right\rangle - \left| 10 \right\rangle
)$, the first half of which is the subsystem to $C_X$ and the second
half is $C_Y$. $A$ and $B$ can be taken to be uniform classical
bits.  
We introduce $\Pi_\theta=\ketbra{\theta}{\theta}$, where
$\ket{\theta}=\cos(\frac{\theta}{2})\ket{0}+\sin(\frac{\theta}{2})\ket{1}$,
and the four POVMs
\begin{align*}
&E_0=\left\{\Pi_0,\Pi_{\pi}\right\}&&E_1=\left\{\Pi_{\pi/2},\Pi_{3\pi/2}\right\}\\
&F_0=\left\{\Pi_{\pi/4},\Pi_{5\pi/4}\right\}&&F_1=\left\{\Pi_{3\pi/4},\Pi_{7\pi/4}\right\}.
\end{align*}
Consider a measurement on the $C_X$ subsystem with POVM $E_B$ (i.e.,
if $B=0$ then $E_0$ is measured and otherwise $E_1$), and likewise a
measurement on $C_Y$ with POVM $F_A$.  Let us denote the corresponding
outcomes $\tilde{X}$ and $\tilde{Y}$.  With this choice
$P_{\mathrm{\tilde{X}\tilde{Y}|AB}}$ violates the CHSH
inequality~\cite{Clauser1969}.  The observed variables are then
$X=(\tilde{X},\tilde{B})$, $Y=(\tilde{Y},\tilde{A})$ and $Z=(A',B')$,
with the correlations set up such that $B'=\tilde{B}=B$ and
$A'=\tilde{A}=A$.
In essence the reason that this cannot be realised in the causal
structure $C_3^{\cC}$ is the CHSH violation. Note though that it is
also important that information about $A$ is present in both $Y$ and
$Z$ (and analogously for $B$).  If for example, we consider the same
scenario but with $Y=\tilde{Y}$ then we could mock-up the correlations
classically.  This can be done by removing $A$, replacing $B$ with
$(B_1,B_2)$ and taking $B_1,B_2$ and $C$ to each be a uniform random
bit.  We can then take $Y=C$, $Z=(B_1,B_2)$ and
$X=(f(C,B_1,B_2),B_1)$, where $f$ is chosen appropriately. Since $f$
can depend on all of the other observed variables it can generate any
correlations between them\footnote{This is like playing the CHSH game
  but where Alice knows Bob's input and output in addition to her own
  input.}. In the causal structure $C_3$, taking $\tilde{A}=A'$
ensures that these are shared through $A$ and hence information about
them cannot be used to generate $X$.

Our Proposition~\ref{prop:andextension} requires a restriction of $Z$ to one bit of information, which we prove to be possible in the following.

\begin{proof}[Proof of Proposition~\ref{prop:andextension}]
First, since all classical distributions can be realised using quantum
systems, $\mathcal{P}_\mathcal{M}\left(C_3^{\cC}\right) \subseteq
\mathcal{P}_\mathcal{M}\left(C_3^{\qQ}\right)$.  We now show that this
inclusion is strict in the case where two nodes output two bits and
one node outputs only one. To do so, we consider the setup in
Figure~\ref{fig:fritzproof} of Section~\ref{sec:quantcorr}, taking
$Z=\operatorname{AND}(A,B)$, $\tilde{A}=A$, $\tilde{B}=B$ and
$P_\mathrm{\tilde{X}\tilde{Y}|\tilde{A}\tilde{B}}$ to violate the CHSH
inequality. This yields an observed distribution of the form
\begin{equation} \label{eq:violatingdistr}
P_\mathrm{\tilde{A}\tilde{B}\tilde{X}\tilde{Y}Z}
=\frac{1}{4}P_\mathrm{\tilde{X}\tilde{Y}|\tilde{A}
  \tilde{B}}\delta_{Z,\operatorname{AND}(\tilde{A},\tilde{B})}\, .
\end{equation}
We start with the following lemma which identifies the causes of
$\tilde{A}$ and $\tilde{B}$ for any distribution of the
form~\eqref{eq:violatingdistr} in the triangle causal structure.

\begin{lemma}\label{lemma:andlemma}
If
$P_\mathrm{(\tilde{X}\tilde{A})(\tilde{Y}\tilde{B})Z}\in\mathcal{P}_\mathcal{M}\left(C_3^{\cC}\right)$
obeys~\eqref{eq:violatingdistr}, then
$P_\mathrm{\tilde{A}|A C}=P_\mathrm{\tilde{A}|A}$ and $P_\mathrm{\tilde{B}|B C}=P_\mathrm{\tilde{B}|B}$.
\end{lemma}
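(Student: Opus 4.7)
The plan is to exploit the factorisation imposed by $C_3^{\cC}$ to pin down the conditionals $P_{\tilde A|AC}$ and $P_{\tilde B|BC}$. Under compatibility with $C_3^{\cC}$, marginalising $\tilde X$ and $\tilde Y$ from the full joint shows that, conditional on $(A,B,C)$, the triple $(\tilde A,\tilde B,Z)$ is independent, with $P_{\tilde A|ABC}=P_{\tilde A|AC}$, $P_{\tilde B|ABC}=P_{\tilde B|BC}$, and $P_{Z|ABC}=P_{Z|AB}$. Writing $p_A(a,c):=P(\tilde A=1\mid a,c)$, $p_B(b,c):=P(\tilde B=1\mid b,c)$, and $p_Z(a,b):=P(Z=1\mid a,b)$, the almost sure relation $Z=\operatorname{AND}(\tilde A,\tilde B)$ demanded by~\eqref{eq:violatingdistr} translates, on the support of $P_A P_B P_C$, into the pair of identities $p_Z(a,b)\bigl(1-p_A(a,c)p_B(b,c)\bigr)=0$ and $\bigl(1-p_Z(a,b)\bigr)p_A(a,c)p_B(b,c)=0$.

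Because all factors lie in $[0,1]$, these equations force $p_Z(a,b)\in\{0,1\}$ and $p_A(a,c)\,p_B(b,c)=p_Z(a,b)$ for every such $(a,b,c)$. When $p_Z(a,b)=1$, both $p_A(a,c)$ and $p_B(b,c)$ must equal $1$ for every $c$ in the support of $P_C$, so $\tilde A$ is deterministically $1$ given $A=a$, independently of $C$. The harder case is an $a$ for which $p_Z(a,b)=0$ for all $b$, where the constraints on $p_A(a,\cdot)$ look loose. Here I would invoke $P(Z=1)=\tfrac14>0$, which forces the existence of $a_0,b_0$ with $P_A(a_0)P_B(b_0)>0$ and $p_Z(a_0,b_0)=1$; the previous step applied to $(a_0,b_0)$ yields $p_B(b_0,c)=1$ for every $c$ in the support of $P_C$. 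Substituting $b=b_0$ into $p_A(a,c)p_B(b_0,c)=p_Z(a,b_0)=0$ then collapses to $p_A(a,c)=0$ for every such $c$, so $\tilde A$ is deterministically $0$ given $A=a$.

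In either case $p_A(a,\cdot)$ is constant on the support of $P_C$, which is exactly $P_{\tilde A|AC}=P_{\tilde A|A}$; the symmetric argument with the roles of $(A,\tilde A)$ and $(B,\tilde B)$ interchanged (using that $\operatorname{AND}$ is symmetric) gives $P_{\tilde B|BC}=P_{\tilde B|B}$. I expect the main subtlety to be the transfer step in the second paragraph: the local constraint $p_A(a,c)p_B(b,c)=0$ by itself does not force $p_A(a,c)$ to be independent of $c$, and it is the nonvanishing of the marginal $P(Z=1)$ that supplies a witness $b_0$ through which the determinism of $\tilde B$ can be exported back to $\tilde A$ via the conjunction constraint.
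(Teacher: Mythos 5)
Your proposal is correct and follows essentially the same route as the paper's proof: you use the same factorisation $P_{Z\tilde A\tilde B|abc}=P_{Z|ab}P_{\tilde A|ac}P_{\tilde B|bc}$, the same observation that $Z$ is a deterministic function of $(A,B)$, and the same case split on whether $p_Z(a,\cdot)$ ever equals $1$, resolving the delicate all-zero case via a witness pair supplied by $P(Z=1)>0$ exactly as the paper does with its pair $(a'',b'')$. The only difference is presentational (you encode the paper's two support conditions as the product identities $p_Z(1-p_Ap_B)=0$ and $(1-p_Z)p_Ap_B=0$, and treat $\tilde A$ and $\tilde B$ by symmetry rather than simultaneously), which is a perfectly valid repackaging.
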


\begin{proof}
Due to the causal constraints we can write 
\begin{equation*}
P_\mathrm{Z \tilde{A} \tilde{B}}=\sum_{abc} P_\mathrm{Z|ab}P_\mathrm{\tilde{A}|ac} P_\mathrm{\tilde{B}|bc} P_\mathrm{A}(a) P_\mathrm{B}(b) P_\mathrm{C}(c).
\end{equation*}
Because $Z=\operatorname{AND}(\tilde{A}, \tilde{B})$, we can derive the following two conditions:
\begin{enumerate}
\item Using $P_\mathrm{Z \tilde{A} \tilde{B}}(0,1,1)=0$, it follows
  that for each triple $(a,b,c)$ either $P_\mathrm{Z|ab}(0)=0$ or
  $P_\mathrm{\tilde{A}|ac}(1)=0$ or $P_\mathrm{\tilde{B}|bc}(1)=0$.
\item Using $P_\mathrm{Z\tilde{A}\tilde{B}}(1,0,0)=P_\mathrm{Z
    \tilde{A} \tilde{B}}(1,0,1)=P_\mathrm{Z \tilde{A}
    \tilde{B}}(1,1,0)=0$, it follows that for each triple
  $(a,b,c)$ either $P_\mathrm{Z|ab}(1)=0$ or
  $P_\mathrm{\tilde{A}|ac}(1)=P_\mathrm{\tilde{B}|bc}(1)=1$.
\end{enumerate}
We first argue that $Z$ is a deterministic function of $A$ and $B$,
i.e., $P_{Z|ab}\in\{0,1\}$ for all pairs $(a,b)$. From condition 2 we
know that either $P_\mathrm{Z|ab}(1)=0$ (and thus
$P_\mathrm{Z|ab}(0)=1$) deterministically, or that
$P_\mathrm{\tilde{A}|ac}(1)=P_\mathrm{\tilde{B}|bc}(1)=1$. But in the
latter case condition 1 implies that $P_\mathrm{Z|ab}(0)=0$ (and
thus $P_\mathrm{Z|ab}(1)=1$).

Now let us consider the two cases (a)~$P_\mathrm{Z|ab}(1)=1$; and
(b)~$P_\mathrm{Z|ab}(1)=0$ separately.\smallskip

\noindent (a) Let $(a,b)$ be such that
$P_\mathrm{Z|ab}(1)=1$. According to condition 2,
$P_\mathrm{\tilde{A}|ac}(1)=P_\mathrm{\tilde{B}|bc}(1)=1$ for all $c$,
and thus we have $P_\mathrm{\tilde{A}|ac}=P_\mathrm{\tilde{A}|a}$ as
well as $P_\mathrm{\tilde{B}|bc}=P_\mathrm{\tilde{B}|b}$.\smallskip

\noindent (b) Let $(a,b)$ be such that $P_\mathrm{Z|ab}(1)=0$. Then $P_\mathrm{Z|ab}(0)=1$ and thus by condition 1 for
all $c$ either $P_\mathrm{\tilde{A}|ac}(1)=0$ or
$P_\mathrm{\tilde{B}|bc}(1)=0$.  We further divide into two cases:
either (i) $(a,b)$ are such that $P_\mathrm{Z|ab'}(1)=0$ for all
$b'$ and $P_\mathrm{Z|a'b}(1)=0$ for all $a'$; or (ii) they are
not. \smallskip

\noindent (ii) Suppose $\exists b'$ such that $P_\mathrm{Z|ab'}(1)=1$.
In this case $P_\mathrm{\tilde{A}|ac}(1)=1$ for all $c$ due to
condition 2 and thus from condition 1 we have $P_\mathrm{\tilde{B}|bc}(1)=0$. Thus for such pairs $(a,b)$, the relations
$P_\mathrm{\tilde{A}|ac}=P_\mathrm{\tilde{A}|a}$ as well as
$P_\mathrm{\tilde{B}|bc}=P_\mathrm{\tilde{B}|b}$ hold. Symmetric
considerations can be made in the case where $\exists a'$ such that
$P_\mathrm{Z|a'b}(1)=1$ instead.\smallskip

\noindent (i) It cannot be the case that all pairs $(a,b)$
have $P_\mathrm{Z|ab'}(1)=0$ for all $b'$ and $P_\mathrm{Z|a'b}(1)=0$ for all $a'$ (otherwise $P_Z(1)=0$). Hence there exists
$(a'',b'')$ for which $P_\mathrm{Z|a''b''}(1)=1$. By
condition 2, this implies that $P_\mathrm{\tilde{A}|a''c}(1)=P_\mathrm{\tilde{B}|b''c}(1)=1$ for all $c$. Thus, as $P_\mathrm{Z|ab''}(1)=0$ and $P_\mathrm{\tilde{B}|b''c}(1)=1$, it follows from
condition 1 that $P_\mathrm{\tilde{A}|ac}(1)= 0$ for any $c$;
$P_\mathrm{\tilde{B}|bc}(1)=0$ follows analogously, which concludes
the proof.
\end{proof}

To prove the proposition, we will suppose that $P_\mathrm{(\tilde{X}\tilde{A})(\tilde{Y}\tilde{B})Z}\in\mathcal{P}_\mathcal{M}\left(C_3^{\cC}\right)$ and derive a contradiction.

First note that the previous lemma together with the form of
$C_3^{\cC}$ implies
\begin{equation}\label{eq:anoc}
P_\mathrm{\tilde{A}|C}=P_{\mathrm{\tilde{A}}},\text{ and }
P_\mathrm{\tilde{B}|C}=P_{\mathrm{ \tilde{B}}}\, .
\end{equation}

Furthermore, from
$P_\mathrm{(\tilde{X}\tilde{A})(\tilde{Y}\tilde{B})Z}\in\mathcal{P}_\mathcal{M}\left(C_3^{\cC}\right)$
we have
\begin{eqnarray*}
P_\mathrm{\tilde{A}\tilde{B}\tilde{X}\tilde{Y}}=\sum_cP_\mathrm{C}(c)P_\mathrm{\tilde{A}|c}P_\mathrm{\tilde{B}|c}P_\mathrm{\tilde{X}\tilde{Y}|\tilde{A}\tilde{B}c}
\end{eqnarray*}
which, using~\eqref{eq:anoc}, and the form of $C_3^{\cC}$ can be
rewritten
$$P_\mathrm{\tilde{A}\tilde{B}\tilde{X}\tilde{Y}}=\sum_cP_\mathrm{C}(c)P_\mathrm{\tilde{A}}P_\mathrm{\tilde{B}}P_\mathrm{\tilde{X}|\tilde{B}c}P_\mathrm{\tilde{Y}|\tilde{A}c}
\, .$$
However, that $P_\mathrm{\tilde{X}\tilde{Y}|\tilde{A}\tilde{B}}$ violates a
Bell inequality means that this equation cannot hold, establishing a
contradiction.
\end{proof}

\end{document}